\newtheorem{theorem}{Theorem}[section]
\newtheorem{proposition}[theorem]{Proposition}
\newtheorem{lemma}[theorem]{Lemma}
\newtheorem{corollary}[theorem]{Corollary}
\newtheorem{definition}[theorem]{Definition}
\newtheorem{observation}[theorem]{Observation}
\DeclareMathOperator*{\E}{\mathbb{E}}
\DeclareMathOperator*{\poly}{poly}
\DeclareMathOperator*{\polylog}{polylog}
\newcommand{\A}{\mathcal{A}}
\newcommand{\F}{\mathcal{F}}
\newcommand{\C}{\mathcal{C}}
\renewcommand{\S}{\mathcal{S}}
\newcommand{\M}{\mathcal{M}}
\newcommand{\bE}{\ensuremath{\mathbb{E}}}
\newcommand{\cost}{\text{cost}}
\newcommand{\OPT}{\text{OPT}}
\begin{document}
\title{Dependent randomized rounding for clustering and partition systems with knapsack constraints\thanks{
Research supported in part by NSF Awards CNS-1010789, CCF-1422569, CCF-1749864, and CCF-1918749, and by research awards from Adobe, Inc., Amazon, and Google}}

\author{  
David G. Harris\thanks{Department of Computer Science, University of Maryland,
College Park, MD 20742. Email:
\texttt{davidgharris29@gmail.com}}
\and
Thomas Pensyl\thanks{Bandwidth, Inc. Raleigh, NC. Email: \texttt{tpensyl@bandwidth.com}}
\and
Aravind Srinivasan\thanks{Department of Computer Science and Institute for Advanced Computer Studies, University of Maryland, College Park, MD 20742.  Email: 
\texttt{asriniv1@cs.umd.edu}} 
\and Khoa Trinh\thanks{Google, Mountain View, CA 94043. Email: \texttt{khoatrinh@google.com}}}

\date{}
\maketitle

\vspace{-0.4in}

\begin{abstract}
Clustering problems are fundamental to unsupervised learning. There is an increased emphasis on \emph{fairness} in machine learning and AI; one representative notion of fairness is that no single group should be over-represented among the cluster-centers. This, and much more general clustering problems, can be formulated with ``knapsack" and ``partition" constraints. We develop new randomized algorithms targeting such problems, and study two in particular: multi-knapsack median and multi-knapsack center. Our rounding algorithms give new approximation and pseudo-approximation algorithms for these problems. 

One key technical tool, which may be of independent interest, is a new tail bound analogous to Feige (2006) for sums of random variables with unbounded variances. Such bounds can be useful in inferring properties of large networks using few samples. 
\end{abstract}

% \newpage

%\tableofcontents
%\newpage

This is an extended version of a paper which appeared in the Proc. 23rd International Conference on
Artificial Intelligence and Statistics (AISTATS 2020)

\section{Introduction}
Clustering is a fundamental technique in unsupervised learning. Our goal is to systematically study clustering with knapsack constraints, particularly in light of fairness. Consider a data-clustering problem, with a set $\F$ of potential cluster-centers with $m$ non-negative cost functions $M_1, \dots, M_m$, a data-set $\C$, and a symmetric distance-metric $d$ on $\F \cup \C$. Our goal is to choose a set $\S \subseteq \F$ of cluster-centers to minimize the distances   $d(j, \S)$, for $j \in \mathcal C$, while satisfying the budget constraint $\sum_{i \in \S} M_{k}(i) \leq 1$ for each $k$.

 The metric used to boil down the values $d(j, \S)$ into an objective function $\cost(\S)$ is problem-specific. We refer to the $m \times | \mathcal F |$ matrix $M$ as the \emph{knapsack-constraint matrix}; we refer to the case $m > 1$ as  \emph{multi-knapsack} and the case $m = 1$ as \emph{single-knapsack}.  \emph{We typically view $m$ as ``small", e.g., as a constant.} The RHS value $1$ for the budget constraint is just a normalization.    Note that it is possible to have $\F = \C$. 

 As a representative example, which is in fact one of the main motivations behind this work, consider the following scenario. As in classical clustering, we want to select a small number of clusters of small radius for a population $\mathcal C$. In addition, we are given a collection of $m$ groups $A_1, \dots, A_m$, such that we do not wish to disproportionately select our centers $\mathcal S$ from any  group. For instance, in healthcare facility location in the face of an epidemic, we may not want to open too many facilities in a geographic region, or with certain types of equipment, or near only some groups of patients, etc.  Thus, for each $k = 1, \dots, m$, we have a constraint $|\S \cap A_k| \leq t_k$ where $t_k$ is a target value which is not much larger than the ``fair'' proportion of $A_k$ compared to the general population.
 These can be represented as knapsack constraints; due to the normalization we use, the cost functions are given by $M_{k}(i) = 1/t_k$ if $i \in A_k$ and $M_{k}(i) = 0$ otherwise. 
  
A common strategy for clustering problems is to first solve a related linear program (LP), leading to a fractional configuration $y \in [0,1]^{\mathcal F}$ over $\mathcal F$ satisfying the knapsack constraints. Here, the LP suggests the fractional extent $y_i$ for $i \in \mathcal F$ to be chosen as a center. Based on this solution $y$, we partition $\mathcal F$ into groups, which represent sets of cluster-centers ``close to'' certain items $j \in \C$. Finally, we use some randomized-rounding algorithm to convert the fractional solution $y$ to an integral solution $Y \in \{0,1 \}^{\mathcal F}$ representing the chosen solution $\S$, while ensuring that each group gets a selected center. (This grouping is needed because if some client $j$ has no nearby opened facilities, then $d(j, \S)$ may be very large.) 

The vector $Y$ should also have other probabilistic properties related to the vector $y$, for example satisfying $\E[Y] = y$ coordinatewise. Clustering problems pose a particular challenge for randomized rounding because they are fundamentally \emph{non-linear}; the distance from $j$ to its closest center depends on the \emph{joint} behavior of the selected centers. Consequently, the rounding process should also ensure strong independence properties, beyond just bounds on the expected values of individual coordinates of $Y$.

One obvious LP rounding method for the LP is for each group to independently choose a single center $i$, with probability proportional to $y_i$. This has optimal independence properties, but completely ignores the knapsack constraints.  This paper focuses on a new randomized rounding strategy for the LP. We develop two key probabilistic techniques, both of which are quite general and may be of independent interest:  (i) a new  ``Samuels-Feige'' type of concentration inequality for \emph{unbounded} random variables; and (ii) a new rounding algorithm in the presence of knapsack constraints plus a single partition constraint.  
 
As we will see, we cannot simultaneously achieve the goals of exactly preserving the knapsack constraints and mimicking the probabilistic guarantees of independent selection. Nevertheless, we obtain significantly stronger guarantees compared to previous algorithms. We summarize these next, and then discuss our new clustering results.

 \subsection{Dependent rounding and independence}
The general problem of randomized rounding while preserving hard combinatorial constraints often goes by the name ``dependent rounding''; see, e.g., \cite{byrka2015,DBLP:conf/ipco/ByrkaSS10,charikar_dependent}.   In the most straightforward form of dependent rounding, which we call \emph{cardinality rounding}, we have a fractional solution $x \in [0,1]^n$ that we wish to round to an integral vector $X \in \{0, 1 \}^n$ such that $\E[X] = x$ and the cardinality constraint $\sum_i X_i = \sum_i x_i$ holds with probability one. For example, Charikar and Li \cite{charikar_dependent} applied cardinality rounding as part of their $3.25$-approximation algorithm for $k$-median. 

The cardinality constraint can easily be replaced by a single knapsack constraint  \cite{AS,srin:level-sets}.  Over the last two decades, increasingly-sophisticated dependent-rounding techniques have been used for optimization problems over various types of polytopes; see, e.g., \cite{AS,srin:level-sets, gandhi:depround, DBLP:journals/siamcomp/CalinescuCPV11, DBLP:conf/soda/ChekuriVZ11, Bansal18}. 

Our new rounding algorithm, which we call \emph{Knapsack-Partition Rounding (KPR)}, generalizes this setting in two distinct ways: it allows multiple knapsack constraints, and it allows a partition matroid constraint. Formally, we define a \emph{knapsack-partition system} to be a partition $\mathcal G$ over a ground-set $U$, along with a real-valued $m \times |U|$ matrix $M$ and a fractional vector $y \in [0,1 ]^U$ satisfying $y(G) = 1$ for each $G \in \mathcal G$.

Our overarching question is: \emph{how well can we approximate independence while preserving the knapsack and partition constraints?}    In an ideal scenario, we would like to generate a random vector  $Y \in \{0,1 \}^U$ satisfying the following desiderata (which are, to a certain extent, mutually incompatible and unattainable -- indeed, our algorithm instead guarantees certain related conditions (E1)--(E6)):
\begin{itemize}
\item[(D1)] $\E[Y_j] = y_j$ for every $j \in U$
\item[(D2)] The random variables $Y_j$ are negatively correlated, in some sense.
\item[(D3)] $Y(G) = 1$ for $G \in \mathcal G$.
\item[(D4)] $M Y = M y$
\end{itemize}

To explain these further,  consider the \emph{independent-selection} rounding strategy. Formally, we define $Y = \textsc{IndSelect}(\mathcal G,y)$ to be the vector obtained by selecting, independently for each block $G$, exactly one item $j$ from $G$ and setting $Y_j = 1$, so that each item $j$ is selected with probability $y_j$. All other (non-selected) items have $Y_j = 0$. This is a valid probability distribution as $y(G) = 1$ and the entries of $y$ are in the range $[0,1]$.  The vector  $Y$ satisfies desiderata (D1), (D2), (D3) perfectly. However, it only weakly satisfies (D4): specifically, the value of $M_k Y$ will be a sum of negatively-correlated random variables, which can deviate significantly from its mean $M_k y$.

As we will later discuss in Section~\ref{dep-round-sec}, the cardinality-rounding setting can be viewed as a special case of knapsack-partition constraints. The standard cardinality-rounding algorithm perfectly satisfies (D1) and (D4); (D3) is not meaningful in this case. For (D2), it satisfies a limited but important form of negative correlation known as  the \emph{negative cylinder property} \cite{DBLP:conf/focs/ChekuriVZ10, byrka2015, srin:level-sets, gandhi:depround}; namely, for any set $S \subseteq U$ the rounded variables $X_i$ satisfy the conditions
\begin{align}\label{eq:thm-neg_correlation}
\E \Bigl[\prod_{i\in S}X_i \Bigr] \leq \prod_{i\in S}x_i, ~~~\text{and}~~~ \E \Bigl[\prod_{i\in S}(1-X_i) \Bigr] \leq \prod_{i\in S}(1-x_i).
\end{align}

Our clustering results will require more general forms of negative correlation.  Ideally, we would like for arbitrary disjoint sets $S, T \subseteq U$ to satisfy a similar ``near-independence'' property:
\begin{equation}
\label{eqn:bsr-small-error0}
\E \Bigl[\prod_{i\in S}X_i \prod_{i \in T} (1 - X_i) \Bigr] \approx \prod_{i\in S} x_i \prod_{i \in T} (1 - x_i)
\end{equation}

For example, Byrka et.\ al.\ \cite{byrka2015} showed this property for the cardinality-rounding algorithm with a random permutation of the input vector, in some parameter ranges.  Note that Eq.~(\ref{eqn:bsr-small-error0}) cannot be preserved exactly in an integral solution; for example, if $x_1 = x_2 = 1/2$, then any integral solution must either satisfy $\E[X_1 (1 - X_2)] \geq 1/2$ or $\E[(1-X_1) X_2] \geq 1/2$. This is part of the reason why property (D2) and general negative correlation are much more challenging than the negative cylinder property. 

To overcome this fundamental barrier, our KPR algorithm terminates with a vector which has a small number $t$ of fractional entries.   To provide intuition, let us discuss how our rounding algorithm works in  the cardinality-rounding setting. (We emphasize that it can handle much more general scenarios.)  The precise sense in which it  satisfies Eq.~(\ref{eqn:bsr-small-error0}) is somewhat technical, but, as one example, we get
\begin{equation}
\label{eqn:bsr-small-error}
\E[ \prod_{i \in S} X_i \prod_{i \in T} (1 - X_i) ] \leq \Bigl( \prod_{i \in S} x_i \prod_{i \in T} (1-x_i) \Bigr) + O(1/t)
\end{equation}
We emphasize that Eq.~(\ref{eqn:bsr-small-error}) is only a simplified form of our results. In particular, we will  handle cases where $S \cup T$ has a large size but only a few elements are ``significant.'' 

Let us briefly compare KPR with other dependent rounding algorithms. The first main genre of such algorithms is based on variants of the Lov\'{a}sz Local Lemma (e.g.,  \cite{DBLP:conf/focs/HarrisS13,DBLP:journals/siamcomp/LeightonLRS01,DBLP:journals/siamcomp/Srinivasan06}). These algorithms have very good independence properties, but also lead to knapsack violations on the order of  the ``standard deviation.''    A second genre of algorithm is based, like KPR, on Brownian motion in the constraint polytope. These algorithms are often targeted to discrepancy minimization, see e.g., \cite{beck-fiala,klrtvv, DBLP:conf/ipco/BansalN16,Bansal18}, where the central  goal is to show concentration bounds on linear functions of the variables. This is closely related to pairwise correlation (covariance) among the variables. Our algorithm gets tighter bounds and finer negative correlation properties, including correlations among many variables, by taking advantage of the special properties of the knapsack-partition constraints.
 
\subsection{Additive pseudo-approximation}
Knapsack constraints can be intractable to satisfy exactly,  and so \emph{pseudo-approximations} (i.e., solutions which only approximately satisfy the knapsack constraints) are often used instead.  Many previous algorithms (e.g., \cite{byrka2015})  have focused on what we refer to as \emph{$\epsilon$-multiplicative pseudo-solutions}: namely that $\sum_{i \in \S} M_{k}(i) \leq 1 + \epsilon$ for each $k$. This should be distinguished from a true approximation algorithm, which finds a feasible solution whose objective function is within some constant factor of the optimal one. 

As we have discussed, our rounding algorithm  does not generate a fully-integral  vector $Y$,  it only produces a vector $\tilde Y$ which is ``mostly'' integral, that is, $\tilde Y \in [0,1]^U$ has only a small (essentially constant) number of fractional entries.  This is critical to overcoming the tradeoff between probabilistic independence and satisfying the knapsack constraints. This naturally leads to an alternative, \emph{additive} form of knapsack pseudo-approximation. We define this formally as follows:
\begin{definition}[$q$-additive pseudo-approximation]
For a single-knapsack constraint (vector of weights $w$ with capacity $1$), a set $\S$ is a $q$-additive pseudo-solution if it has the form $\S = \S_0 \cup \S_1$, where $\sum_{i \in \S_0} w_i \leq 1$ and $|\S_1| \leq q$. (Equivalently, $\S$ satisfies the budget constraint after removing its $q$ highest-weight items.)

For a multi-knapsack constraint $M = M_1, \dots, M_m$, we say that $\S$ is an \emph{$q$-additive pseudo-solution for $M$} if it is a $q$-additive solution for each of the $m$ knapsack constraints $M_1, \dots, M_m$ separately. That is, for each $k = 1, \dots, m$ we have $\S = \S_0^{(k)} \cup \S_1^{(k)}$  where  $\sum_{i \in \S_0^{(k)}} M_{k}(i) \leq 1$ and $|\S_1^{(k)}| \leq q$.
\end{definition}

Such additive pseudo-solutions are naturally connected to the \emph{method of alteration} in the probabilistic method, where we delete/alter some items in a random structure to establish a desired property \cite{alon-spencer:tpm}.  Additive pseudo-approximations have appeared implicitly in prior algorithms, e.g., \cite{DBLP:conf/stoc/LiS13, krishnaswamy_km}. We can summarize some of their advantages here, from both practical and technical points of view. 
 
First, additive pseudo-approximation can be a useful tool to obtain true approximations. Gven a $q$-additive pseudo-solution, we can often ``fix'' the $q$ additional items in some problem-specific way. If $q$ is small, this may incur only a small overhead in the cost or computational complexity. This strategy was used in the $k$-median approximation algorithm of \cite{DBLP:conf/stoc/LiS13}. We use it here for our (true) approximation algorithm for knapsack center. These problems are described in 
Section~\ref{sec:clustering-defns}. 

As another example, there is a common strategy to obtain a multiplicative pseudo-approximation by ``guessing'' -- exhaustively enumerating -- the  ``big'' items in an optimal solution, i.e., items with $M(i) > \rho$ for some parameter $\rho$.  Then, a $q$-additive pseudo-approximation for the residual problem yields a $\rho q$-multiplicative pseudo-approximation to the original problem. This approach can be more efficient than generating the multiplicative pseudo-solution  directly. The reverse direction does not hold, in general; there does not seem to be any way to go from multiplicative to additive pseudo-solutions.

To see a practical advantage of additive pseudo-approximation, consider our motivating example concerning fair representation. In this setting, a $q$-additive pseudo-solution $\mathcal S$ leads to a relatively modest violation of the fairness constraint, namely, it has $|\S \cap A_k| \leq t_k + q$ for each population $A_k$ and associated target value $t_k$. By contrast, an $\epsilon$-multiplicative pseudo-solution $\S$ would give a substantially larger violation, namely $|\S \cap A_k| \leq t_k (1 + \epsilon)$.

\subsection{Clustering results} 
\label{sec:clustering-defns}
In describing our clustering results, it is convenient to use the language of classical facility location. We refer to $\C$ as a set of \emph{clients}, $\F$ as a set of \emph{facilities}, and we say that $i \in \F$ is \emph{open} if $i$ is placed into the solution set $\S$. The distance $d(j, \S)$ for a client $j$ can be interpreted as the connection cost of $j$ to its nearest open facility. 

We study two clustering problems in particular: the \emph{knapsack median} and \emph{knapsack center} problems. In the knapsack median problem, we minimize the \emph{total} connection $\cost(\S) = \sum_j d(j, \S)$ subject to $m$ knapsack constraints.   The knapsack center problem is the same except that the objective is to minimize $\cost(\S) = \max_{j \in \C} d(j, \S)$ instead of the sum $\sum_j d(j, \S)$.

Knapsack median was first studied by Krishnaswamy et.\ al.\ \cite{krishnaswamy_km}, who obtained an additive pseudo-approximation with an approximation factor of $16$. The current best true approximation factor is $7.08$ for the single-knapsack case,  due to \cite{KLS}. The special case when all facilities in $\F$ have unit weight and $m = 1$, known as the classical $k$-median problem, can be approximated to within a factor of $2.675+\epsilon$ \cite{byrka2015}.   Our KPR rounding algorithm gives the following pseudo-approximation results:
\begin{theorem}
\label{thm:knapsack-median-one-budget}
Let $\gamma, \epsilon \in (0,1)$.  For single-knapsack median, there is a polynomial-time algorithm to obtain an $O(1/\gamma)$-additive pseudo-solution $\S$ with $\cost(\S) \leq (1+\sqrt{3}+\gamma) \cdot \OPT \leq 2.733 \cdot \OPT$ and an algorithm with $n^{O(\epsilon^{-1} \gamma^{-1})}$ runtime to obtain an $\epsilon$-multiplicative pseudo-solution $\S$ with  $\cost(\S) \leq (1 + \sqrt{3} + \gamma) \cdot \OPT$.
\end{theorem}

We also consider multi-knapsack median, where it is NP-hard to obtain a true approximation. We apply KPR for a key step in an algorithm of Charikar \& Li \cite{charikar_dependent} to get an additive pseudo-approximation. This can also be leveraged into a multiplicative pseudo-approximation. We summarize these results as follows:
\begin{theorem}
  \label{thm:multi-knapsack-medianx}
Let $\gamma, \epsilon \in (0,1)$.  For multi-knapsack median, there is a polynomial-time algorithm to obtain an $O(\tfrac{m}{\sqrt{\gamma}})$-additive pseudo-solution $\S$ with $\cost(\S) \leq (3.25 + \gamma) \cdot \OPT$, and an algorithm with $n^{O(\frac{m^2}{\epsilon \sqrt{\gamma}})}$ runtime to obtain an $\epsilon$-multiplicative pseudo-solution $\S$ with $\cost(\S) \leq (3.25 + \gamma) \cdot \OPT$.
\end{theorem}

By contrast, independent selection in the Charikar-Li algorithm would take $n^{\tilde O(m/\epsilon^{2})}$ runtime for an $\epsilon$-multiplicative approximation (see Theorem~\ref{jkj1}). This illustrates how additive pseudo-approximation can be useful for efficient multiplicative pseudo-approximations; in particular, this gives a better dependence on the parameter $\epsilon$ (though a worse dependence on parameters $m$ and $\gamma$).

The single-knapsack center problem was first studied by Hochbaum \& Shmoys in  \cite{DBLP:journals/jacm/HochbaumS86}, under the
name ``weighted $k$-center''. They gave a $3$-approximation algorithm and proved that this is best possible unless $\text{P} = \text{NP}$; see also \cite{DBLP:journals/tcs/KhullerPS00}. Our approximation algorithms ensure each client has better bounds on \emph{expected} connection distance, in addition to the usual bound on  maximum distance. We summarize this as follows:

\begin{theorem}
  \label{res:standardKC}
  Let $\gamma \in (0,1)$. There is an algorithm for single-knapsack center with $n^{\tilde O(1/\gamma)}$ runtime which returns a feasible solution $\S$  such that every client $j \in \C$ has
  $$
  \E[d(j, \S)] \leq (1 + 2/e + \gamma) \cdot \OPT, \qquad \qquad d(j, \S) \leq 3 \cdot \OPT \text{ with probability one. }
  $$
\end{theorem}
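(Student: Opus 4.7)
The plan is to combine a Hochbaum--Shmoys-style clustering with the KPR dependent rounding, plus an enumerative guess of heavy facilities. The deterministic $3\,\OPT$ guarantee will come from the partition structure (exactly one facility opened per cluster block), while the $(1+2/e)\,\OPT$ expected per-client bound will follow from the KPR near-independence estimate~(\ref{eqn:bsr-small-error}). First I would guess $\tau = \OPT$ by enumerating the polynomially many candidate distances. Given $\tau$, I greedily select a maximal $R \subseteq \C$ with pairwise distance exceeding $2\tau$, so every client $j$ lies within $2\tau$ of some representative $\sigma(j) \in R$, and I define the block $G_r = \{i \in \F : d(i,r) \le \tau\}$ for each $r \in R$; these blocks are pairwise disjoint. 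I then set up an LP with variables $y_i \in [0,1]$, the knapsack constraint $\sum_i w_i y_i \le 1$, the partition constraints $\sum_{i \in G_r} y_i = 1$ for each $r \in R$, and the coverage constraints $\sum_{i \in \ball(j,\tau) \cap \F} y_i \ge 1$ for every $j \in \C$; an integral optimum, sparsified within each block to open exactly one facility, witnesses LP feasibility whenever $\tau \ge \OPT$.

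Next I apply KPR, which returns a random vector with at most $t = \tilde\Theta(1/\gamma)$ fractional entries, exactly meets the partition constraints, and is an $O(1)$-additive pseudo-solution for the knapsack. To upgrade this to true knapsack feasibility at the cost of $n^{\tilde O(1/\gamma)}$ extra runtime, I would enumerate a guess for the $\tilde O(1/\gamma)$ heaviest facilities used by $\OPT$ and commit to opening them; by pigeonhole every remaining facility then has weight $O(\gamma)$, so the $O(1)$ additively-violating items that KPR may produce on the residual collectively weigh at most $O(\gamma)$, which is absorbed into the $\gamma$-slack of our target bound.

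The per-client analysis has two parts. For the deterministic bound, fix $j \in \C$ and let $r = \sigma(j)$; the unique opened facility $i^* \in G_r$ satisfies $d(j, i^*) \le d(j,r) + d(r, i^*) \le 2\tau + \tau = 3\tau$. For the expected bound, applying~(\ref{eqn:bsr-small-error}) with its $S$-set taken to be $\emptyset$ and its $T$-set taken to be $U = \ball(j,\tau) \cap \F$ gives
\[
\Pr[\S \cap \ball(j,\tau) = \emptyset] = \E\Bigl[\prod_{i \in U}(1 - X_i)\Bigr] \le \prod_{i \in U}(1 - y_i) + O(1/t) \le \exp\Bigl(-\sum_{i \in U} y_i\Bigr) + O(\gamma) \le 1/e + O(\gamma),
\]
where the LP coverage gives $\sum_{i \in U} y_i \ge 1$ and we set $t = \Theta(1/\gamma)$. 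Combining the two cases, $\E[d(j, \S)] \le (1 - 1/e - O(\gamma))\,\tau + (1/e + O(\gamma))\cdot 3\tau = (1 + 2/e + O(\gamma))\,\OPT$, which after rescaling $\gamma$ is the claimed bound.

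The main obstacle is reconciling hard knapsack feasibility with the randomized structure needed for the $(1+2/e)$ bound; the heavy-items enumeration is what resolves it, by shrinking the weight of KPR's $O(1)$-additive slack to $O(\gamma)$ so that it can be absorbed into the error budget. A secondary subtlety is ensuring the LP's partition and coverage constraints are simultaneously feasible, which follows by modifying $\OPT$ to open exactly one facility per block without loss of generality, since every block $G_r$ must contain some $\OPT$ facility (as $r$ is itself a client that $\OPT$ serves within radius $\tau$).
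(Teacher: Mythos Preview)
Your proposal has a genuine gap in achieving exact knapsack feasibility. You write that after guessing the $\tilde O(1/\gamma)$ heaviest facilities, the remaining ones have weight $O(\gamma)$, so the ``$O(1)$ additively-violating items'' from KPR ``collectively weigh at most $O(\gamma)$, which is absorbed into the $\gamma$-slack of our target bound.'' But the theorem demands a \emph{feasible} solution, i.e.\ $\sum_{i \in \S} w_i \le 1$ exactly; there is no $\gamma$-slack in the knapsack constraint to absorb into. What you have described yields only a $(1+O(\gamma))$-multiplicative pseudo-solution, a strictly weaker object. Moreover, KPR itself (property~(E4)) returns a vector $y'$ with $My' = My$ \emph{exactly} and at most $2t$ fractional entries; it is not an ``$O(1)$-additive pseudo-solution'' until you specify how those fractional entries are rounded to integers, and you never do so. (If $t = \tilde\Theta(1/\gamma)$ as you propose, there are $\tilde\Theta(1/\gamma)$ fractional entries, not $O(1)$.)

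The paper's route is different and worth contrasting. It runs KPR with $t = 12$ (a constant, not $\Theta(1/\gamma)$), obtaining $y'$ with $My' \le 1$ and $y'(G_\ell) = 1$ for every block. Then, in each block still containing fractional entries, it deterministically opens the \emph{single facility of minimum weight} among them. Since the fractional mass in a block sums to one and is being concentrated on its cheapest item, this can only decrease $My'$, giving a genuinely feasible integral solution. The catch is that this last step is not random, so for any client $j$ whose set $F_j$ touches a modified facility the per-client bound $\E[Q(F_j, Y)] \le 1/e + O(1/t)$ is destroyed. To recover the per-client guarantee, the paper (i)~preprocesses, at cost $n^{O(1/\delta)}$, so that no fractional facility serves more than a $\delta$-fraction of the weighted client mass, ensuring the $O(t)$ modifications affect at most $O(t\delta)$ weighted clients; (ii)~proves only a \emph{weighted} bound $\sum_j a_j \E[d(j,\S)] \le 1 + 2/e + O(\gamma)$ for an arbitrary unit weighting $a$; and (iii)~converts this to the claimed per-client guarantee via multiplicative-weights update. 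Your direct per-client analysis, lacking these three ingredients, would not survive whatever deterministic fixup you apply to the fractional entries.
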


More recently, Chen et.~al.~\cite{jianli_kc} considered the multi-knapsack center problem. They showed that it is intractable to obtain a true constant-factor approximation, and gave a multiplicative pseudo-approximation with approximation ratio $3$. We obtain a number of new pseudo-approximations for this setting.
\begin{theorem}
  \label{res:MKC}
Let $\gamma, \epsilon \in (0,1)$. For multi-knapsack median, we describe three algorithms to generate different types of pseudo-solutions $\S$ such that every client $j \in \C$ has  $\E[d(j, \S)] \leq (1 + 2/e + \gamma) \cdot \OPT$ and $d(j, \S) \leq 3 \cdot \OPT$ with probability one.
\begin{enumerate}
\item[(a)] A polynomial-time algorithm for an $\tilde O(m/\sqrt{\gamma})$-additive pseudo-solution.

\item[(b)]  An algorithm with $n^{\tilde O(m^2 / \gamma)}$ runtime for an $\tilde O(\sqrt{m})$-additive pseudo-solution.

\item[(c)]   An algorithm with $n^{\tilde O(m^{3/2}/\epsilon + m^2/\gamma)}$ runtime for an $\epsilon$-multiplicative pseudo-solution.
  \end{enumerate}
\end{theorem}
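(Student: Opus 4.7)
The proof follows the $m=1$ template of Theorem~\ref{res:standardKC}, replacing its single-knapsack rounding step with the multi-knapsack variants of KPR. First, I guess $\text{OPT}$ by binary search over the $O(n^2)$ pairwise distances and solve the natural LP: variables $y_i \in [0,1]$ with knapsack constraints $\sum_i M_{k,i} y_i \leq 1$ for each $k \in [m]$, and coverage constraints $\sum_{i:d(i,j)\leq\text{OPT}} y_i \geq 1$ for each client $j$. Applying the Hochbaum--Shmoys / Chen et al.\ clustering, I greedily extract a maximal set of representative clients $j_1, j_2, \dots$ whose close balls $B_{j_\ell} = \{i \in \F : d(i,j_\ell) \leq \text{OPT}\}$ are pairwise disjoint; every remaining client lies within $2\text{OPT}$ of some representative. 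After routine massaging (rescaling and pushing unclustered mass appropriately), each block $G_\ell = B_{j_\ell}$ satisfies $\sum_{i \in G_\ell} y_i = 1$, producing a partition-plus-multi-knapsack instance suitable for KPR.

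Next, each part uses a different KPR variant. Part~(1) runs the basic polynomial-time KPR, inheriting the $\tilde O(m/\sqrt{\gamma})$ additive knapsack violation directly. Part~(2) enumerates over a small set of ``heavy'' fractional items in $n^{\tilde O(m^2/\gamma)}$ time to tighten the additive violation to $\tilde O(\sqrt{m})$. Part~(3) first guesses by enumeration the large-weight items (those contributing at least $\epsilon/\sqrt{m}$ to some knapsack), of which any optimal pseudo-solution has $O(m^{3/2}/\epsilon)$, taking $n^{\tilde O(m^{3/2}/\epsilon)}$ time; it then runs the $\tilde O(\sqrt m)$-additive algorithm of Part~(2) on the residual small-weight instance, where each discarded item costs at most $\epsilon/\sqrt m$ per knapsack, converting the $\tilde O(\sqrt m) \cdot (\epsilon/\sqrt m) = \tilde O(\epsilon)$ cumulative slack into an $\epsilon$-multiplicative guarantee. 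In every variant KPR respects the partition exactly, so exactly one facility opens in each $G_\ell$.

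The distance guarantees then follow uniformly. Deterministically, each representative $j_\ell$ is served by the one open facility in its own block at distance $\leq \text{OPT}$, and each non-representative $j$ reaches its representative in $\leq 2\text{OPT}$ and that representative's open facility in $\leq \text{OPT}$ more, so $d(j,\S) \leq 3\text{OPT}$ with probability one. For the expected distance, I invoke the near-independence bound~(\ref{eqn:bsr-small-error}) with $S = \{i : d(i,j)\leq\text{OPT}\}$ and $T = \emptyset$: since $\sum_{i \in S} y_i \geq 1$,
\[
\Pr[d(j,\S) > \text{OPT}] \;\leq\; \prod_{i \in S}(1-y_i) + O(\gamma) \;\leq\; e^{-\sum_{i \in S} y_i} + O(\gamma) \;\leq\; \tfrac{1}{e} + O(\gamma),
\]
and combining with the $3\text{OPT}$ worst case yields $\E[d(j,\S)] \leq (1 + 2/e + \gamma)\text{OPT}$ after rescaling $\gamma$.

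The main obstacle is controlling the additive $O(\gamma)$ slack in the near-independence bound uniformly over all clients, since $S$ is generally \emph{not} aligned with the partition blocks $G_\ell$: it can cut many blocks transversally, and the per-block dependence introduced by KPR accumulates across them. Calibrating the KPR error parameter so that this accumulated error stays $O(\gamma)$ for every client, while simultaneously hitting the claimed additive/multiplicative tradeoffs in $m$, is exactly what forces the $n^{\tilde O(m^2/\gamma)}$ and $n^{\tilde O(m^{3/2}/\epsilon)}$ runtimes of Parts~(2) and~(3); the Feige-style tail bound highlighted in the abstract is the technical backbone making the $\tilde O(\sqrt m)$ additive violation achievable in the first place. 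Everything else is a direct extension of the $m=1$ argument underlying Theorem~\ref{res:standardKC}.
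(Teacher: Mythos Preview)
Your sketch is accurate at the level of Part~(1) and Part~(3), but Part~(2) has a genuine gap, and the missing piece propagates into your last paragraph where you admit the obstacle without resolving it.

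The difficulty you identify is real: to get only $\tilde O(\sqrt m)$ additive violation you must run KPR with $t=O(m)$ fractional entries, but then the near-independence error term $O(m^2/t)$ in Theorem~\ref{final-thm2} is $\Theta(m)$, far too large for the $(1+2/e+\gamma)$ bound. Your proposed fix, ``enumerate over a small set of heavy fractional items,'' does not address this: guessing knapsack-heavy facilities shrinks item weights, which helps multiplicative violation (that is the Part~(3) trick) but does nothing to control the per-client $Q$-error when $t$ is small.

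The paper resolves this with two ingredients you do not have. First, it wraps the entire rounding in a multiplicative-weights loop (Lemma~\ref{glem1}), so that it only needs to bound $\sum_j a_j\,\bE[Q(F_j,Y)]$ for an \emph{arbitrary unit weighting} $a$, rather than $\bE[Q(F_j,Y)]$ for every $j$ individually. Second, given $a$, it branches on facilities $i$ with $a(H_i)>\delta$ (Proposition~\ref{prune-prop}) --- heavy in client-service mass, not in knapsack weight --- to force the LP solution to be $\delta$-sparse. Sparsity then lets a multi-scale analysis (Proposition~\ref{sparse-qbnd}) bound the weighted $Q$-error by $O(m^2\delta\log\tfrac{1}{\delta t})$ rather than $O(m^2/t)$, so choosing $\delta=\tilde O(\gamma/m^2)$ and $t=12m^2$ simultaneously gives $O(\gamma)$ error and $\tilde O(\sqrt m)$ violation. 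The $n^{\tilde O(m^2/\gamma)}$ runtime is exactly $n^{O(1/\delta)}$ from the sparsification branching, not from guessing knapsack-heavy items. Without MWU you have no weighting $a$ to sparsify against, and without sparsification the small-$t$ error is uncontrolled.
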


Our algorithms thus give finer guarantees:  all clients have distance $3 \cdot \OPT$ to an open facility with probability one and also all clients have expected connection cost at most $(1+2/e) \cdot \OPT \approx 1.74 \cdot \OPT$.  
This can be helpful in \emph{flexible} facility location, such as a streaming-service provider periodically reshuffling its service locations. It can also be interpreted as a type of fairness in clustering, where the fairness is in terms of individual users instead of demographic groups. Note that the constant factor $1 + 2/e$ cannot be improved unless $\text{P} = \text{NP}$, even in the $k$-supplier setting \cite{harris-2}. 

\subsection{Notation}

In the context of clustering problems, we let $n = |\mathcal F \cup \mathcal C|$. For a client $j \in \C$ and a real number $x \geq 0$, we define the \emph{facility-ball} $\mathcal B(j, x) = \{ i \in \F \mid d(i, j) \leq x \}$.  For a metric $d$ and a set $Y$, we write $d(x,Y) = \min_{y \in Y} d(x,y)$. 

For a non-negative integer $t$, we write $[t] = \{1, \dots, t \}$. For a set $X$ and an $t$-dimensional vector $y$, we write $y(X) = \sum_{i \in X} y_i$.  The \emph{support} of $y$ is defined to be the set of indices $i$ where $y_i \neq 0$.

Given an $m \times n$ budget matrix $M$, we write $M(i)$ for the $m$-dimensional vector corresponding to the costs of item $i$, and we write $M_k$ for the $n$-dimensional vector, which is a single knapsack-constraint, corresponding to the $k^{\text{th}}$ row of $M$.  Likewise, for a set $X \subseteq [n]$, we write $M(X) = \sum_{i \in X} M(i)$. For an $n$-dimensional vector $y$, we write $M y \leq \vec 1$ to denote that  $M_k y \leq 1$ for all $k = 1, \dots, m$.

Given a vector $Y \in \{0,1 \}^n$ where the support of $Y$ is a $q$-additive pseudo-solution for a multi-knapsack constraint matrix $M$,  we sometimes say for brevity that $Y$ is a $q$-additive pseudo-solution for $M$, i.e. for each $k$ we can modify $Y$ to some $Y'$ by zeroing out at most $q$ entries such that $M_k Y' \leq \vec 1$.

We use Iverson notation where $[[\phi]]$ is the indicator function for a Boolean predicate $\phi$, i.e., $[[\phi]] = 1$ if $\phi$ is true, and $[[\phi]] = 0$ otherwise.  The $\tilde O()$ notation is defined as $\tilde O(x) = x \cdot \polylog(x)$.

A partition $\mathcal G$ of ground set $U$ is a collection of pairwise-disjoint sets $G_1, \dots, G_k$  with $U = G_1 \cup \dots \cup G_k$. We refer to  $G_i$ as the \emph{blocks} of the partition. For each $u \in U$, we define $\mathcal G(u)$  to be the unique block with $u \in \mathcal G(u) \in \mathcal G$. For $W \subseteq U$, we define $\mathcal G(W) \subseteq \mathcal G$ to be the set of blocks involved in $W$, i.e., $\mathcal G(W) = \{ \mathcal G(w) \mid w \in W \}$.   For a set of blocks $\mathcal D \subseteq \mathcal G$ and a set of items $W \subseteq U$, we define the \emph{restriction of $W$ to $\mathcal D$}, denoted $W \wedge \mathcal D$, to be the set of items in $W$ which are also part of a block of $\mathcal D$, i.e., $W \wedge \mathcal D =  \bigcup_{G \in \mathcal D} W \cap G =  \{w \in W \mid \mathcal G(w) \in \mathcal D \}$.

\subsection{Organization}
In Section~\ref{pseudo-add-sec1}, we develop a new  ``Samuels-Feige'' type of concentration inequality for unbounded random variables. This result is quite general, and may be of independent interest.   In Section~\ref{sec:bsl}, we develop and analyze the new KPR dependent rounding algorithm. Section~\ref{sec:varKPR} describes a few extensions of this algorithm. Sections~\ref{sec:knapsack-median}, \ref{sec:knapsack-median2}, \ref{sec:knapsack-center} describe the applications to knapsack median and knapsack center  problems. Section~\ref{advanced-cor-sec} provides further analysis of concentration bounds and near-independence properties of the KPR algorithm.

\section{A concentration inequality for additive knapsack pseudo-solutions}
\label{pseudo-add-sec1}
The main result of this section is to show an intriguing connection between independent rounding and pseudo-additive solutions. For maximum generality, we state it in terms of a broader class of random variables satisfying a property known as \emph{negative association} (NA) \cite{na-cite}, defined as follows:
 \begin{definition}[Negatively associated random variables \cite{na-cite}]
 Random variables $X_1, \dots, X_n$ are \emph{negatively associated} (NA) if for every subset $A \subseteq [n]$, and any pair of non-decreasing functions $f_1, f_2$, the random variables $f_1(X_i: ~i \in A)$ and $f_2(X_j: ~j \in [n] - A)$ have non-positive covariance.
(Here, ``$f_1(X_i: ~i \in A)$" means $f_1$ applied to the tuple $(X_i: i \in A)$, and similarly for ``$f_2(X_j: ~j \in [n] - A)$".)
 \end{definition}

If $X_1, \dots, X_n$ are independent random variables, then they are NA. The class of NA random variables includes other random processes; for example, the load-vector of the urns in balls-and-urns processes \cite{DBLP:journals/rsa/DubhashiR98}. 

 With this definition, we state our main result:
  \begin{theorem}
  \label{pseudo-add-ind}
  Let $X_1, \dots, X_n$ be negatively-associated,  non-negative random variables. Then with probability at least $1 - \delta$, there is a set $W \subseteq [n]$ (which may depend on the values of $X_1, \dots, X_n$) with $|W| \leq O( \sqrt{n \log \tfrac{1}{\delta}} )$ such that $\sum_{i \in [n] - W} X_i \leq \sum_{i \in [n]} \E[X_i]$.
  \end{theorem}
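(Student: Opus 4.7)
The plan is to reduce Theorem~\ref{pseudo-add-ind} to Proposition~\ref{pseudo-add-ind0} by handling possible atoms in the CDFs of the $X_i$ via randomized tiebreaking. The only place the continuity assumption is used in the proof of Proposition~\ref{pseudo-add-ind0} is in the choice of a threshold $\alpha\geq 0$ satisfying $\sum_i\Pr(X_i>\alpha)=10\lambda$ exactly (equation~(\ref{lambda-eqn})); without continuity we can only guarantee $\sum_i\Pr(X_i>\alpha)\leq 10\lambda\leq\sum_i\Pr(X_i\geq\alpha)$ for some $\alpha$, and the plan is to close the gap using independent uniform random bits.

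After the same initial rescaling to $\sum_i\E[X_i]=1$ and the same trivial handling of $n\leq c\lambda$, I would augment the probability space with variables $U_1,\dots,U_n\sim\text{Uniform}[0,1]$, mutually independent and independent of the $X_i$, and pick $\alpha$ as above together with $q\in[0,1]$ such that
$$
\sum_{i\in[n]}\bigl(\Pr(X_i>\alpha)+q\Pr(X_i=\alpha)\bigr)=10\lambda.
$$
Define $W=\{i:X_i>\alpha\}\cup\{i:X_i=\alpha,\ U_i>1-q\}$ and $Y_i=\min(\alpha,X_i)$. This preserves the two facts that drive the continuous proof: $\E[|W|]=10\lambda$ by construction, and the deterministic identity
$$
\sum_{i\in[n]-W} X_i = -\alpha|W|+\sum_{i\in[n]} Y_i
$$
still holds, since indices with $X_i=\alpha$ contribute $Y_i=\alpha$ whether or not they land in $W$, and this matches their contribution to $\alpha|W|$ (one checks the three cases $X_i<\alpha$, $X_i=\alpha$, $X_i>\alpha$).

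It remains to verify that the relevant sequences are still NA so that Hoeffding's inequality applies. The $Y_i=\min(\alpha,X_i)$ are univariate non-decreasing functions of the NA variables $X_i$, so they remain NA by (Q1). For the indicators $[[i\in W]]$, I would use the standard extension of (Q1) (see \cite{na-cite,DBLP:journals/rsa/DubhashiR98}): since the $U_i$ are mutually independent and independent of the $X_i$, the collection $(X_i,U_i)_{i\in[n]}$ is NA, and the indicators $[[i\in W]]$ are coordinate-wise non-decreasing functions of the pairwise disjoint blocks $(X_i,U_i)$. Given these two NA statements, the remainder of the proof is a verbatim copy of Proposition~\ref{pseudo-add-ind0}: Hoeffding applied to the indicators gives $|W|\in[5\lambda,20\lambda]$ with probability $\geq 1-\delta/2$, Hoeffding applied to the bounded variables $Y_i\in[0,\alpha]$ gives $\sum_i Y_i\leq 1+5\alpha\lambda$ with probability $\geq 1-\delta/2$, and combining through the identity yields $\sum_{i\notin W}X_i\leq 1$ and $|W|=O(\sqrt{n\log(1/\delta)})$ on an event of probability $\geq 1-\delta$.

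The only step requiring genuinely new work beyond the continuous proof is the NA verification for the randomized-threshold indicators; this is the main obstacle, but it is a straightforward application of the block-coordinate form of the monotone-function closure of NA, and every other estimate is identical to the continuous case.
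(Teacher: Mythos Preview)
Your proof is correct but takes a different route from the paper. The paper gives a black-box reduction to Proposition~\ref{pseudo-add-ind0}: it adds to each $X_i$ an independent uniform variable $Y_i$ on $[0,\epsilon/n]$, observes that the sums $Z_i=X_i+Y_i$ are non-negative, NA, and have continuous CDFs, applies Proposition~\ref{pseudo-add-ind0} to the $Z_i$ to obtain (with probability $\geq 1-\delta$) a set $W$ of the right size with $\sum_{i\notin W}X_i\leq\sum_{i\notin W}Z_i\leq\sum_i\E[Z_i]=\sum_i\E[X_i]+\epsilon/2$, and finally sends $\epsilon\to 0$ via a monotone-limit argument on the events $\{\min_{|W|\leq r}\sum_{i\notin W}X_i\leq\sum_i\E[X_i]+\epsilon\}$.

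Your approach instead reopens the proof of Proposition~\ref{pseudo-add-ind0} and patches the single place continuity was used, by randomized tie-breaking at the threshold with auxiliary uniforms $U_i$. This avoids the limiting step entirely and delivers the exact inequality directly, at the price of re-running the inner argument and invoking the block (rather than univariate) form of the monotone-closure property of negative association. That block form is indeed standard in \cite{na-cite}, though the paper only states the univariate version as (Q1); your identification of this as the one genuinely new ingredient is accurate. Both approaches yield the same bound; the paper's is a cleaner reduction requiring no new NA facts, while yours is more self-contained and sidesteps the slightly delicate $\epsilon\to 0$ limit.
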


 Theorem~\ref{pseudo-add-ind} can be rephrased in the language of knapsack constraints:
  \begin{corollary}
  \label{pseudo-add-ind2}
 Let $X_1, \dots, X_n$ be negatively-associated,  non-negative random variables, and $a_1, \dots, a_n$ be non-negative coefficients with $\sum_i a_i \bE[X_i] = 1$.  Then with probability at least $1 - \delta$, the values $X_1, \dots, X_n$ form an $O(\sqrt{n \log \tfrac{1}{\delta}})$-additive pseudo-solution to the knapsack constraint $a_1 x_1 + \dots + a_n x_n \leq 1$.
  \end{corollary}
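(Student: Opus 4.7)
The plan is to reduce directly to Theorem~\ref{pseudo-add-ind} after a non-negative rescaling. First I would set $Z_i := a_i X_i$ for each $i \in [n]$. Each $Z_i$ is non-negative, and by hypothesis $\sum_{i=1}^n \bE[Z_i] = \bE[a_1 X_1 + \dots + a_n X_n] = 1$. Next, since $a_i \geq 0$, each univariate map $x \mapsto a_i x$ is non-decreasing, so property (Q1) guarantees that $Z_1, \dots, Z_n$ inherit the negative-association property from $X_1, \dots, X_n$.

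Applying Theorem~\ref{pseudo-add-ind} to the NA non-negative sequence $Z_1, \dots, Z_n$ then yields, with probability at least $1 - \delta$, a set $W \subseteq [n]$ with $|W| \leq O(\sqrt{n \log \tfrac{1}{\delta}})$ such that
\[
\sum_{i \in [n] - W} a_i X_i \;=\; \sum_{i \in [n] - W} Z_i \;\leq\; \sum_{i \in [n]} \bE[Z_i] \;=\; 1.
\]
Taking $\S_0 := [n] - W$ and $\S_1 := W$ exhibits the realization $(X_1, \dots, X_n)$ as a $q$-additive pseudo-solution to the knapsack constraint $a_1 x_1 + \dots + a_n x_n \leq 1$, with $q = O(\sqrt{n \log \tfrac{1}{\delta}})$, in exactly the sense of the earlier definition (with weights $w_i = a_i X_i$). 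There is no real obstacle in this argument: the only thing to verify beyond invoking the theorem is that scaling by non-negative constants preserves negative association, which is an immediate consequence of (Q1); after that the corollary follows from a direct application of Theorem~\ref{pseudo-add-ind}.
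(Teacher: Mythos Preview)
Your proposal is correct and follows exactly the paper's approach: define $Z_i = a_i X_i$, invoke (Q1) to preserve negative association under the non-decreasing maps $x \mapsto a_i x$, and then apply Theorem~\ref{pseudo-add-ind} directly. The paper's proof is the same two-line argument, just stated more tersely.
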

 
Notably, this bound does not depend on the variance of the variables $X_1, \dots, X_n$, which is quite different from conventional concentration bounds such as Chernoff-Hoeffding.  We remark that the bound is tight for many values of $n$ and $\delta$. For example, consider a system with $n$ independent Bernoulli$(p)$ variables for any constant $p \in (0,1)$. In this case, we have $|W| = \max\{  (\sum_i X_i) - n p , 0 \}$, which is on the order of $O( \sqrt{n \log \tfrac{1}{\delta} })$ with probability at least $1 - \delta$.

Let us briefly summarize the role of this concentration inequality in our overall rounding algorithm. As we have discussed, the KPR rounding algorithm stops with a vector $\tilde Y \in [0,1]^U$ with some $t$ remaining fractional entries, and which satisfies the knapsack constraints exactly. One attractive option to obtain a fully integral vector $Y$ is to apply independent selection to $\tilde Y$, ignoring the knapsack constraints. This  may violate the knapsack constraints, but by how much? This is precisely the random process governed by Corollary~\ref{pseudo-add-ind2}. The integral entries of $\tilde Y$ have no effect and so $\tilde Y$ effectively has $t$ variables. If we set $q = O(\sqrt{t \log m})$, then Corollary~\ref{pseudo-add-ind2} shows that $Y$ is a $q$-additive pseudo-solution to each knapsack constraint $M_k$ with probability $1 - \frac{1}{2m}$. A union bound over the $m$ knapsack constraints shows that $Y$ is a $q$-additive pseudo-solution for $M$ with constant probability.

 By way of comparison, \cite{feige2006sums,garnett:feige-bound,DBLP:journals/mor/HeZZ10,samuels1966chebyshev} give other concentration bounds for sums of nonnegative independent variables without regard to size or variance. Such results are useful for problems such as hypergraph matchings and probabilistic estimation of network parameters \cite{goldreich2008approximating,eden2016sublinear, DBLP:journals/jct/AlonHS12}. We anticipate that Theorem~\ref{pseudo-add-ind} may have broader applications beyond our rounding algorithm.
  
\subsection{Formal proof of Theorem~\ref{pseudo-add-ind}}
We begin by recalling some useful properties of NA random variables; see \cite{na-cite,DBLP:journals/rsa/DubhashiR98} for further details.
 \begin{proposition}
Let $X_1, \dots, X_n$ be NA random variables.
 \begin{description}
 \item[(Q1)] If $f_1, \dots, f_n$ are univariate non-decreasing functions, then $f_1(X_1), \dots, f_n(X_n)$ are NA as well.
 \item[(Q2)] If $X_1, \dots, X_n$ are bounded in the range $[0,1]$, then the Chernoff-Hoeffding bounds 
 (for both the upper- and lower-tails of sums) apply to them as they do to independent random variables.
 \end{description}
 \end{proposition}

The proof of Theorem~\ref{pseudo-add-ind} has two parts. First, we prove it under the assumption that the random variables $X_1, \dots, X_n$ have continuous cumulative density functions (CDF's); we then use a ``smoothing'' argument to extend it to arbitrary distributions.

Let us define $\lambda = \log \tfrac{1}{\delta}$.  If $n < c \lambda$ for any chosen constant $c$, then the result will hold trivially by taking $W = [n]$. Thus, we assume that $ n > c \lambda$ for any needed constant $c$ in the proof. 

\textbf{Part I.} Suppose that $X_1, \dots, X_n$ have continuous CDF's. By rescaling, we assume without loss of generality that $\sum_i \E[X_i] = 1$.  Since the CDF of the $X_i$ variables is continuous and the $X_i$ variables are non-negative, there is a real number $\alpha \geq 0$ such that
 \begin{equation}
 \label{lambda-eqn}
\sum_{i \in [n]} \Pr(X_i > \alpha) = 10 \lambda
\end{equation}

 We will take $W$ to be the set of all indices $i \in [n]$ with $X_i > \alpha$, thus $\E[ |W| ] = 10 \lambda$.  We need to show that $|W| \leq O(\lambda)$ and $\sum_{i \in [n] - W} X_i \leq 1$. Let us define the random variables $Y_i = \min(\alpha, X_i)$. Noting that $Y_i = X_i$ for $i \notin W$ while $Y_i = \alpha$ for $i \in W$, we see that the following equation holds with probability one:
\begin{equation}
\label{ghtt1}
\sum_{i \in [n] - W} X_i = \sum_{i \in [n]} \bigl( Y_i - \alpha [[ i \in W ]] \bigr)= - \alpha |W| + \sum_{i \in [n]} Y_i
\end{equation}

Let $\mathcal E_1$ denote the event that $5 \lambda \leq |W| \leq 20 \lambda$ and let $\mathcal E_2$ denote the event that $\sum_i Y_i \leq 1 + 5 \alpha \lambda$.  By Eq.~(\ref{ghtt1}), when $\mathcal E_1$ and $\mathcal E_2$ both hold, we have $\sum_{i \in [n] - W} X_i \leq 1$ and $|W| \leq O(\lambda)$, as is desired.  It remains to compute the probabilities of $\mathcal E_1$ and $\mathcal E_2$.  

First, since the function mapping $x$ to $[[x > \alpha ]]$ is non-decreasing, the indicator variables for $i \in W$ remain negatively associated by (Q1). By (Q2) and Eq.~(\ref{lambda-eqn}) we therefore have
  $$
 \Pr( \mathcal E_1) =  \Pr( 5 \lambda \leq |W| \leq 20 \lambda) \geq 1 - 2 e^{-\Omega(\lambda)} \geq 1 - 2 e^{-\Omega(\sqrt{n \log(1/\delta)})} \geq 1 - \delta/2
  $$
  using our assumption that $n > c \lambda$ for any sufficiently large constant $c$.

Next, since $Y_i \leq X_i$, we have $\E[ \sum_i Y_i] \leq 1$. The random variables $Y_i$ are all bounded in the range $[0, \alpha]$. Also  since the function mapping $x$ to $\min(x, \alpha)$ is non-decreasing, by (Q1) the random variables $Y_i$ remain negatively associated. By (Q2) we can apply Hoeffding's bound, and thus
$$
\Pr(\neg \mathcal E_2) = \Pr( \sum_{i \in [n]} Y_i > 1 +5 \alpha \lambda) \leq e^{-\frac{-2 (5 \alpha \lambda)^2}{n \alpha^2}} = e^{-50 \lambda^2/n} = e^{-50 \log(1/\delta)} \leq \delta/2
$$

Thus, we have $\Pr(\mathcal E_1 \cap \mathcal E_2) \geq 1 - \delta$, and in this case we have $\sum_{i \in [n] - W} X_i \leq 1$ for $|W| \leq O(\lambda)$. 
  
  \textbf{Part II.}  Now consider random variables $X_1, \dots, X_n$ (with no assumption on their CDF's). Let $Y_1, \dots, Y_n$ be independent random variables which are uniform in the range $[0, \tfrac{\epsilon}{n}]$, and let $Z_i = X_i + Y_i$. The random variables $Z_i$ are clearly non-negative, and they have continuous CDF's. For any fixed values of $Y_1, \dots, Y_n$, the random variables $Z_1, \dots, Z_n$ are NA by (Q1); this remains true after integrating over $Y_1, \dots, Y_n$.
  
 So as we showed in Part I of the proof, with probability at least $1- \delta$ we have
  $$
 \min_{\substack{ W \subseteq [n] \\ |W| \leq r}} \sum_{i \in [n] - W} Z_i  \leq \sum_{i \in [n]} \E[Z_i]
  $$
 for some parameter $r = O( \sqrt{n \lambda})$. Since $\E[Z_i] = \E[X_i] + \epsilon n / 2$ and $X_i \leq Z_i$, this implies that 
 $$
\Pr \Bigl( \min_{\substack{ W \subseteq [n] \\ |W| \leq r}} \sum_{i \in [n] - W} X_i \leq \sum_{i \in [n]} \E[X_i] + \epsilon/2 \Bigr) \geq 1 - \delta
$$
  
  Since this holds for every $\epsilon > 0$, this implies:
 \[
\Pr \Bigl( \min_{\substack{ W \subseteq [n] \\ |W| \leq r}} \sum_{i \in [n] - W} X_i \leq \sum_{i \in [n]} \E[X_i] \Bigr) \geq 1 - \delta
\]
  
This concludes the proof. We remark that our result is algorithmically ``local'' in that it discards each $i$ if and only if $X_i > \alpha$ for a certain threshold value $\alpha$; this automatically ensures that the discarded set is small with good probability.

\section{Knapsack-partition systems and rounding}
\label{sec:bsl}
The KPR rounding algorithm takes as input the partition $\mathcal G$ over ground set $U$, the constraint matrix $M$, the fractional vector $y$ satisfying $y(G) = 1$ for all $G \in \mathcal G$, and an integer parameter $t$. It returns a mostly rounded vector $\tilde Y$.  

For $G \in \mathcal G$, we define the following functions to count the number of fractional entries:
$$
T_G(y) = \max \bigl\{ 0, |\{i \in G \mid y_i \in (0,1) \} | - 1 \bigr \}, \qquad \text{and} \qquad T(y) = \sum_{G \in \mathcal G} T_G(y)
$$

We then define the algorithm as follows:
\begin{algorithm}[H]
\caption{$\textsc{KPR}(\mathcal G, M, y,t)$}
\begin{algorithmic}[1]
\FOR{each block $G \in \mathcal G$}
\STATE Execute an unbiased walk  to generate random vector $y' \in [0,1]^U$ with $\E[y'] = y$, and such that $y'$ is an extreme point of the polytope: $ \Bigl \{  My' =  My, y'(G) = 1, y'_j = y_j \text{ for $j \notin G$} \Bigr \}$.
\STATE Update $y \leftarrow y'$
\ENDFOR
\WHILE{$T(y) > t$} 
\STATE Form a set $\mathcal J \subseteq \mathcal G$, wherein each $G \in \mathcal G$ goes into $\mathcal J$ independently with probability $p = 3m/T(y)$.
	\IF{$\sum_{G \in \mathcal J} T_G(y) \geq m+1$}
		\STATE Choose $\delta \in \mathbb{R}^U$ such that 
		\begin{itemize}[nolistsep]
			\setlength{\itemindent}{+.2in}
			
			\item \hspace{-0.3in} $M\delta = 0$, $y+\delta\in [0,1]^U$, and $y-\delta \in [0,1]^U$
			\item \hspace{-0.3in} There is at least one index $i$ with $y_i \in (0,1)$ such that $y_i+\delta_i \in \{0, 1 \}$ or $y_i-\delta_i \in \{0, 1 \}$.
			\item  \hspace{-0.3in} $\delta_j = 0$ if $\mathcal G(j) \notin \mathcal J$
			\item \hspace{-0.3in} $\delta(G) = 0$ for all $G \in \mathcal G$.
		\end{itemize}
		\STATE With probability $1/2$, update $y \gets y + \delta$; else, update $y \gets y - \delta$
	\ENDIF
	\ENDWHILE
\RETURN $y$
\end{algorithmic} 
\label{algo:depround}
\end{algorithm}

The loop at lines 1--3 is a preprocessing step consisting of straightforward dependent rounding within each block $G$. We write $y' =   \textsc{IntraBlockReduce}(y)$ for the vector obtained at the termination of the loop. After this step, the algorithm repeatedly applies a more-complicated rounding process which modifies multiple blocks. We write $y' = \textsc{KPR-iteration}(y)$ to denote a single iteration of the loop at lines 4--8. Thus, the overall algorithm is equivalent to the following: 

\begin{algorithm}[H]
\caption{$\textsc{KPR}(\mathcal G, M, y,t)$, summarized}
\begin{algorithmic}[1]
\STATE $y \leftarrow \textsc{IntraBlockReduce}(y)$ 
\WHILE{$T(y) > t$}
\STATE update $y \leftarrow \textsc{KPR-iteration}(y)$
\ENDWHILE
\RETURN $y$
\end{algorithmic} 
\label{algo:deprounda1}
\end{algorithm}

The KPR algorithm \emph{requires throughout} that $t > 12 m$; this assumption will not be stated explicitly again. Because of this condition, the probability $p$ in line 5 is at most $3 m/t \leq 1/4$.  (Note that it is likely impossible to obtain fewer than $m$ fractional entries, while still respecting the knapsack constraints.) 

Also, although budget matrices for clustering problems are usually assumed to be non-negative, we dol not require this for KPR. The entries of the matrix $M$ can be arbitrary real numbers. 

\subsection{KPR algorithm: formal results}
As we have discussed, desiderata (D1)--(D4) cannot be exactly satisfied.  To describe the negative correlation properties of KPR, we use a potential function $Q(W,x)$ defined for a set $W \subseteq U$ and vector $x \in [0,1]^U$ as follows:
  $$
  Q(W,x) = \prod_{G \in \mathcal G}  (1 - x(W \cap G)).
  $$

 We will show that the vector $\tilde Y = \textsc{KPR}(\mathcal G, M, y, t)$ satisfies the following constraints:
\begin{enumerate}
\item[(E1)] For all $W \subseteq U$,
$\E[Q(W,\tilde Y)]$ is ``not much more than'' $Q(W,y)$;
\item[(E2)] Every $j \in U$ has $\E[\tilde Y_j] = y_j$;
\item[(E3)] $\tilde Y(G) = 1$ for $G \in \mathcal G$;
\item[(E4)] $M \tilde Y = M y$;
\item[(E5)] At most $2 t$ entries of $\tilde Y$ are fractional.
\item[(E6)] For each block $G \in \mathcal G$, at most $m+1$ entries of $\tilde Y$ are fractional.
\end{enumerate}

 (E1) is intentionally vague, as the relationship between $\E[Q(W,\tilde Y)]$ and $Q(W,y)$ is quite complex.  Our main result covers a setting needed for a number of our clustering algorithms, where there is a relatively small set $\mathcal D$ of blocks $G$ which have $y(G \cap W)$ close to one. Formally, we show the following:
\begin{theorem}
  \label{knap-KPR-thm}
  Let $\mathcal D \subseteq \mathcal G$ with $|\mathcal D| = d$ and let $\tilde Y = \textsc{KPR}(\mathcal G, M, y, t)$ with  $t \geq 5000 m (d+1)$. Then for any set $W \subseteq U$, there holds
  $$
  \E[Q(W,\tilde Y)] \leq Q(W,y) + Q(W \wedge \mathcal D,y) \bigl( e^{O( (d+1)^2 m^2/t) } - 1 \bigr)
  $$
\end{theorem}

Recall that we define $W \wedge \mathcal D =  \bigcup_{G \in \mathcal D} W \cap G$.   Theorem~\ref{knap-KPR-thm} is complex and hard to use directly. We derive a number of simplified results, such as the following three estimates:
\begin{theorem}
  \label{final-thm2}
   Let $\tilde Y = \textsc{KPR}(\mathcal G, M, y, t)$ and let $W \subseteq U$.
 \begin{enumerate}
\item[(a)] For $\mathcal D \subseteq \mathcal G$ with $t > 5000 m^2 (|\mathcal D|+1)^2$  there holds $\E[Q(W,\tilde Y)] \leq Q(W,y) + O( (|\mathcal D|+1)^2 m^2/t ) \cdot Q(W \wedge \mathcal D,y).$
 \item[(b)] For $t > 12 m$,  there holds $\E[Q(W,\tilde Y)] \leq Q(W,y) + O( m^2/t )$.
 \item[(c)] For $t > 10000 m d$ where $d = |\mathcal G(W)|$,  there holds  $\E[Q(W,\tilde Y)] \leq Q(W,y) e^{O(m^2 d^2/t)}$.
 \end{enumerate}
\end{theorem}

There are two key steps in \textsc{KPR} to ensure property (E1).  First, the modification vector $\delta$ in line 4 is always bounded by the current value of $y$. This ensures that the typical change in the value of $Q(W,y)$ is proportional to the current value of $Q(W,y)$. Second, the set $\mathcal J$, which determines the entries of $y$ to modify, is randomly selected.  This spreads out the (inevitable) correlation among the entries of $y$. 

 Before we show Theorem~\ref{knap-KPR-thm}, let us explain the role played by the potential function $Q$. Observe that for $Y = \textsc{IndSelect}(\mathcal G, y)$ we have $Y(W) = 0$ if and only if $Q(W,Y) = 1$ and so 
$$
\Pr(Y(W) = 0) = \bE[ Q(W, Y) ] = Q(W,y);
$$
thus, $Q(W,\tilde Y)$ is a smoothed measure of whether the KPR output $\tilde Y$ satisfies $\tilde Y(W) = 0$.

Why might one be interested in upper-bounding terms of the form $\Pr(Y(W) = 0)$? We have briefly touched on this, but let us spell out in greater detail  how such bounds arise in clustering algorithms, such as our algorithms for knapsack median and knapsack center. The simplest versions of these algorithms first cluster the facilities in some greedy manner; these are the blocks $G$ of the partition. They then open a facility suitably at random from each block. Any given client $j$ will first check if some ``nearby'' facility gets opened; if not, then it must use a ``backup'' facility which, however, is farther away. The bad event of no opened ``nearby'' facility corresponds to  $Y(W) = 0$, where $Y$ is the indicator vector for which facilities are open and $W$ is the set of nearby facilities.

In more advanced algorithms, multiple facilities may be opened from a cluster, or the clusters may have even more complex interactions. These cases can also be interpreted as knapsack-partition systems, and again the distance for a client $j$ can be recast in terms of events of the form $Y(W) = 0$.

Section~\ref{advanced-cor-sec} shows some further upper bounds on $\bE[Q(W,\tilde Y)]$. These do not follow from Theorem~\ref{knap-KPR-thm}, and are not directly useful for our clustering algorithms. For simplicity, we do not attempt to optimize the constant factors here or elsewhere in the analysis.  
    
We will begin by showing some easier properties of this algorithm, including that it is well-defined and terminates in polynomial time. The proof of (E1), which is much harder, comes next. 

\subsection{Simple properties and convergence of \textsc{KPR}} 
\begin{proposition}
\label{igprop}
The vector $y' = \textsc{IntraBlockReduce}(y)$ satisfies properties (E2), (E3), (E4), (E6). Furthermore, for any $W \subseteq U$ we have $\bE[Q(W, y')] = Q(W,y)$.
\end{proposition}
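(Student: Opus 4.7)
My plan is to decompose \textsc{IntraBlockReduce} into its $r$ single-block steps and verify each property inductively. Write $y^{(0)} := y$ and let $y^{(i)}$ denote the state after processing block $i$, so $y' = y^{(r)}$. The polytope in line~2 enforces $y^{(i)}_j = y^{(i-1)}_j$ for every $j \notin G_i$, so iteration $i$ perturbs only entries inside $G_i$; this structural fact drives essentially all four claims.

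Properties (E3) and (E4) should follow directly from the defining constraints of the polytope: iteration $i$ preserves $y(G_i) = 1$ and keeps $My$ invariant, while subsequent iterations (which touch different blocks) leave these quantities alone. For (E2), I would invoke the fact that the null-space walk is unbiased, i.e.\ $\bE[y^{(i)} \mid y^{(i-1)}] = y^{(i-1)}$, and chain this via the tower rule to conclude $\bE[y'] = y$.

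For (E6) I plan to use a standard extreme-point dimension count. After iteration $i$, the restriction $(y^{(i)}_j)_{j \in G_i}$ is an extreme point of a polytope in $|G_i|$ variables cut out by $m$ equalities (from $Mz = My^{(i-1)}$, with coordinates outside $G_i$ absorbed into the right-hand side), one equality $z(G_i) = 1$, and $2|G_i|$ box constraints. Any extreme point needs $|G_i|$ tight independent constraints, of which at most $m+1$ can come from the equalities, so at least $|G_i| - (m+1)$ box constraints are tight; this gives at most $m+1$ fractional coordinates in $G_i$. Since later iterations never modify entries of $G_i$, the bound persists to $y'$.

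The main ingredient is the identity $\bE[Q(W,y')] = Q(W,y)$. I would exploit that $Q(W,z) = \prod_{i=1}^r (1 - z(W \cap G_i))$ factorizes across blocks: conditional on $y^{(i-1)}$, iteration $i$ freezes all factors with index $i' \neq i$ (those entries are literally unchanged), and the remaining factor $1 - y^{(i)}(W \cap G_i)$ has conditional expectation $1 - y^{(i-1)}(W \cap G_i)$ by unbiasedness. Pulling the frozen factors outside the conditional expectation and multiplying back yields $\bE[Q(W,y^{(i)}) \mid y^{(i-1)}] = Q(W, y^{(i-1)})$; telescoping across $i = 1,\dots, r$ via the tower rule gives the claim. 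I expect the only subtle point to be being explicit that the $r-1$ ``frozen'' factors are $\sigma(y^{(i-1)})$-measurable, but this is nothing more than the polytope constraint $z_j = y^{(i-1)}_j$ for $j \notin G_i$, so no real obstacle remains.
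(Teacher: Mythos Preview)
Your proposal is correct and matches the paper's own argument essentially line-for-line: the paper also invokes the extreme-point dimension count for (E6), the polytope constraints for (E3)--(E4), unbiasedness of the walk for (E2), and for the $Q$-identity observes that $Q(W,y)$ is linear in the coordinates of block $i$ (the other factors being frozen) so that unbiasedness preserves its expectation, which is exactly your ``freeze the $r-1$ factors and use linearity of the remaining one'' step telescoped via the tower rule. You have simply made explicit the inductive structure the paper leaves implicit.
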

\begin{proof}
The polytope in line 2 of \textsc{KPR} has $m+1$ constraints among entries $y_j$ for $j \in G$, so an extreme point has at most $m+1$ fractional entries. The polytope conditions preserve properties (E3) and (E4), and preserve (E2) since the walk is unbiased. Finally, the change in $y$ during each iteration is confined to block $G$. Since $Q(W,y)$ is linear function of $y_j$ for $j \in G$, (E2) ensures that  $\E[Q(W,y)]$ does not change. 
\end{proof} 

\begin{proposition} 
\label{prog-prop2}The vector $\delta$ in line 7 of \textsc{KPR} exists and can be found efficiently.
\end{proposition}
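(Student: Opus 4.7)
My plan is to exhibit $\delta$ as a suitable rescaling of a nonzero vector $\hat\delta$ drawn from an explicitly described linear subspace, and then to check that all five bullets of line (4) can be simultaneously satisfied. Set $F = \bigcup_{i \in J} \f(G_i, y)$ and declare, as hard restrictions, that $\delta_j = 0$ for all $j \notin F$. This immediately enforces the third bullet, and also ensures $y \pm \delta$ agrees with $y$ outside $F$ (so no coordinate outside $F$ can leave $[0,1]$). The remaining live constraints are $M\delta = 0$ (at most $m$ linear equations) together with $\delta(G_i \cap F) = 0$ for every $i$ in the subset $J' = \{i \in J : \f(G_i, y) \neq \emptyset\}$; for $i \in J \setminus J'$ the equality $\delta(G_i) = 0$ is vacuous (it reduces to $0 = 0$), and for $i \notin J$ it is automatic since every $\delta_j$ with $j \in G_i$ is zero.

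Next I carry out a dimension count. For each $i \in J'$ the definition $T(y,i) = \max(0, |\f(G_i, y)| - 1)$ combined with $|\f(G_i, y)| \geq 1$ gives $|\f(G_i, y)| = T(y,i) + 1$, and summing yields
\[
|F| \;=\; \sum_{i \in J'} \bigl( T(y,i) + 1 \bigr) \;=\; |J'| + \sum_{i \in J} T(y,i),
\]
where the second equality uses $T(y,i)=0$ for $i \in J \setminus J'$. The linear system imposed above has at most $m + |J'|$ equations in $|F|$ unknowns, so its solution space has dimension at least $|F| - m - |J'| = \sum_{i \in J} T(y,i) - m$. By the triggering hypothesis $\sum_{i \in J} T(y,i) \geq m+1$, this is at least $1$, so a nonzero $\hat\delta$ exists and can be found in polynomial time by Gaussian elimination on the constraint matrix.

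Finally I rescale. Because each index $j$ with $\hat\delta_j \neq 0$ lies in $F \subseteq \f([n], y)$, the coordinate $y_j$ is strictly in $(0,1)$, so
\[
\alpha \;=\; \min_{j : \hat\delta_j \neq 0} \min\!\left( \frac{y_j}{|\hat\delta_j|},\; \frac{1 - y_j}{|\hat\delta_j|} \right)
\]
is strictly positive. Setting $\delta = \alpha \hat\delta$, the linear conditions $M\delta = 0$ and $\delta(G_i) = 0$ are preserved under scaling, $y \pm \delta \in [0,1]^n$ holds by the very choice of $\alpha$, and some index $j^\star$ attaining the minimum yields $y_{j^\star} + \delta_{j^\star} \in \{0,1\}$ or $y_{j^\star} - \delta_{j^\star} \in \{0,1\}$, with $j^\star \in F \subseteq \f([n], y)$. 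All five bullets are therefore satisfied, and every computation (null-space extraction, scaling) is polynomial-time.

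The only delicate point is the dimension count in the second paragraph, which requires one to notice that blocks $i \in J$ with no fractional coordinate contribute nothing to either the variable count or the constraint count, so that $|J'|$ (not $|J|$) is the right quantity to subtract; once this is handled cleanly, the rest is routine linear algebra.
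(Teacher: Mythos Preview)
Your proof is correct and follows essentially the same approach as the paper: a dimension count shows the linear system (support restricted to fractional coordinates in $J$-blocks, subject to $M\delta=0$ and the block-sum constraints) has a nontrivial kernel, and then a maximal rescaling of a kernel vector produces the desired $\delta$. Your treatment is in fact slightly more careful than the paper's, since you explicitly separate out $J' \subseteq J$ and observe that fully-integral blocks in $J$ contribute neither variables nor effective constraints; the paper tacitly counts $|J|$ constraints and $\sum_{i\in J}(|\f(G_i,y)|-1)$ degrees of freedom, which only matches the trigger condition $\sum_{i\in J} T(y,i)\ge m+1$ once one makes exactly the observation you spell out.
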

\begin{proof}
In forming the vector $\delta$, there is one degree of freedom for each entry $i \in G$ with $y_i \in (0,1)$ and $G \in \mathcal J$. We further have $m$ linear constraints (from the matrix $M$) and $|\mathcal J|$ linear constraints (from the condition $\delta(G) = 0$ for each $G \in \mathcal J$). This gives a total of $\sum_{G \in \mathcal J} T_G(y) - m$ degrees of freedom. 

So the linear system has a non-zero solution vector $v$ as long as $\sum_{G\in \mathcal J} T_G(y) \geq m+1$, which is precisely the condition at line 6 of \textsc{KPR}. Now choose $a \in \mathbb R$ to be maximal such that $y + a v \in [0,1]^U$ and $y - a v \in [0,1]^U$. One may verify that $a < \infty$ and setting $\delta = a \gamma$ achieves the claimed result. 
\end{proof}

\begin{proposition}
\label{prog-prop}
In each iteration $y' \leftarrow \textsc{KPR-iteration}(y)$ of \textsc{KPR}, there is a probability of at least $0.24$ that $y'$ has at least one more integral coordinate than $y$.
\end{proposition}
\begin{proof}
By Proposition~\ref{igprop}, the vector $y$ after \textsc{IntraBlockReduce} has $T_G(y) \leq m$ for each $G \in \mathcal G$. We also have $T(y) > t$, as otherwise \textsc{KPR} would have terminated.

For each $G \in \mathcal G$, define $Z_G = [[G \in \mathcal J]] T_G(y)$ and define $Z = \sum_G Z_G$.  If $Z \geq m+1$, then there is at least $1/2$ probability of producing at least one new rounded entry in $y'$. Also, $Z$ is a sum of independent random variables with mean $p T(y) = 3m$; furthermore, each $Z_G$ is bounded in the range $[1,m]$. A simple analysis with Chernoff's bound shows that $\Pr( Z \geq m+1) \geq 0.48$, which gives the claimed result.
\end{proof}

We will later need  the following stronger version of Proposition~\ref{prog-prop} where we condition on a given iteration not touching a given subset of blocks.
\begin{proposition}
  \label{ttr1prog}
  Consider an iteration $y' \leftarrow \textsc{KPR-iteration}(y)$ of \textsc{KPR}. For any set of blocks $\mathcal D \subseteq \mathcal G$ with  $t \geq 5000 m (|\mathcal D|+1)$, there is a probability of at least $1/10$ that $\mathcal D \cap \mathcal J = \emptyset$ and $T(y') < T(y)$. 
\end{proposition}
\begin{proof}
 Let $d = |\mathcal D|$. Letting $\mathcal E$ denote the event that $\mathcal D \cap \mathcal J = \emptyset$, we have
  $$
  \Pr(\mathcal E) =  (1 - 3 m/T(y))^d \geq (1 - 3 m/t)^d \geq \Bigl( 1 - \frac{3 m}{5000 m (d+1)} \Bigr)^d \geq 0.99
$$
  
  Next, as in Proposition~\ref{prog-prop}, let us define $Z = \sum_{G \in \mathcal J} [[ G \in \mathcal J ]] T_G(y)$. If $Z \geq m+1$, then with probability $1/2$ there will be at least one rounded variable. Furthermore, conditional on event $\mathcal E$, here $Z$ is a sum of independent random variables in the range $[1,m]$ and with mean $\mu = \sum_{G \in \mathcal G - \mathcal D} p T_G(y)$. Since $T(y) \geq t$ and $T_G(y) \leq m$ for all blocks $G$, we have $\mu \geq (3 m/t) \cdot (t - d m) \geq \frac{3m (5000 m (d+1) - d m)}{5000 m d}  \geq 2.99 m$.  So, by Chernoff's bound, $\Pr(Z < m+1 \mid \mathcal E) \leq e^{-(2.99)^2 (0.34)^2/2} \leq 0.60$.  Overall, the desired event happens with probability at least $0.99 \cdot (1 - 0.60) \cdot 1/2 \geq 1/10$.
\end{proof}

\begin{proposition} 
\label{proposition:depround2}
The output $\tilde Y$ of $\textsc{KPR}(\mathcal G, M, y,t)$ satisfies properties (E2) --- (E6).
\end{proposition}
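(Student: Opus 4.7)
The plan is to verify each of (E2)--(E6) separately by tracking how each phase of \textsc{KPR} establishes or preserves it. Proposition~\ref{igprop} already does the work for the \textsc{IntraBlockReduce} phase: after lines (1)--(3), properties (E2), (E3), (E4), and (E6) all hold. So the main task is to argue that every execution of \textsc{KPR-iteration} preserves (E2), (E3), (E4), (E6), then to extract (E5) from the termination condition $T(y) \leq t$.

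For preservation: (E2) holds because, conditional on the choice of $J$ and $\delta$, the algorithm sets $y \gets y+\delta$ or $y \gets y-\delta$ each with probability $1/2$, so $\bE[y' \mid y, J, \delta] = y$; taking expectations over $J, \delta$ preserves $\bE[y'_j] = y_j$. (E3) and (E4) are immediate from the defining constraints $\delta(G_i)=0$ and $M\delta = 0$ on the perturbation vector. For (E6), the key observation is that an already-integral coordinate can never become fractional: if $y_i \in \{0,1\}$, then the requirement $y_i \pm \delta_i \in [0,1]$ forces $\delta_i = 0$, so the set of fractional entries is monotonically non-increasing, per block as well. Since (E6) holds coming out of \textsc{IntraBlockReduce}, it continues to hold at every subsequent step.

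For (E5), I will use the loop exit condition $T(y) \leq t$ together with a parity-style observation. Since (E3) and the integrality preservation described above hold throughout, at termination every block $G_i$ satisfies $y(G_i) = 1$ with all integral entries equal to $0$ or $1$. Consequently $\sum_{j \in G_i, \text{frac}} y_j$ equals $1$ minus an integer, hence is itself a positive non-integer, which rules out $|\f(G_i,y)| = 1$: a single fractional value in the block cannot produce a non-integer sum. Thus for every $i$ we have $|\f(G_i,y)| \in \{0\} \cup \{2,3,\dots\}$, and in the latter case $|\f(G_i,y)| = T(y,i) + 1 \leq 2\, T(y,i)$. Summing over blocks yields $|\f([n],y)| \leq 2 T(y) \leq 2t$, which is (E5).

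Finally, for the statement to be meaningful we need the while-loop to actually terminate. Proposition~\ref{prog-prop} shows that in each iteration the number of integral coordinates increases by at least one with probability at least $0.24$; since the total number of fractional coordinates is initially at most $n$ and can only decrease, the loop terminates in expected $O(n)$ iterations (and with high probability in polynomially many). The main subtlety in this whole argument is the (E5) step: the bound $T(y) \leq t$ alone does not directly bound $|\f([n],y)|$ by $2t$ unless one rules out blocks with exactly one fractional entry, which is where the block-sum condition $y(G_i)=1$ combined with invariant integrality of rounded entries is used. Everything else is a bookkeeping check that each ingredient of \textsc{KPR-iteration} was designed to preserve the relevant linear invariant.
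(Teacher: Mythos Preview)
Your proof is correct and follows the same approach as the paper (indeed with more detail on (E5) and (E6), which the paper only asserts). One phrasing slip: in the (E5) step you say the sum of fractional entries in a block is ``$1$ minus an integer, hence is itself a positive non-integer'' --- but $1$ minus an integer \emph{is} an integer; the correct deduction is that this sum must be an integer, while a lone fractional entry would place it in $(0,1)$, giving the contradiction you want.
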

\begin{proof}
 By Proposition~\ref{igprop}, the conditions (E2), (E3), (E4), (E6) hold after \textsc{IntraBlockReduce}. Each application of \textsc{KPR-iteration} updates $y \leftarrow y \pm \delta$. Since $\delta(G) = 0$ and $M \delta = 0$, properties (E3), (E4) are preserved. The expected change in $y$ is $\tfrac{1}{2} \delta + \tfrac{1}{2} (-\delta) = 0$, so property (E2) is preserved.  For property (E5), note that \textsc{KPR} terminates when $T(y) \leq t$, in which case $\tilde Y = y$ has at most $2 t$ fractional entries.
\end{proof}

\begin{proposition} \textsc{KPR} runs in expected polynomial time.
\end{proposition}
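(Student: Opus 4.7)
The plan is to bound separately the cost of the \textsc{IntraBlockReduce} phase and the main while loop of \textsc{KPR}. The \textsc{IntraBlockReduce} phase trivially runs in polynomial time: it consists of $r \leq n$ iterations, each computing a random extreme point of a polynomial-sized polytope, which can be done via standard LP techniques.

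Next I would show that each individual call to \textsc{KPR-iteration} also runs in polynomial time. Sampling $J$ takes $O(r)$ operations, evaluating the check $\sum_{i \in J} T(y,i) \geq m+1$ takes $O(n)$ operations, and finding the update vector $\delta$ reduces to solving a polynomial-sized linear system (producing a nonzero null-space direction $v$) and computing the maximal scaling $a$ such that $y \pm a v \in [0,1]^n$, as guaranteed by the preceding existence proposition.

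The main step is to bound the expected number of iterations of the while loop. I would first observe that integral coordinates are preserved across iterations: the constraints $y + \delta \in [0,1]^n$ and $y - \delta \in [0,1]^n$ pin $\delta_j = 0$ at every index with $y_j \in \{0,1\}$. Combined with the requirement $\delta(G_i) = 0$ and $\delta_j = 0$ for $j \in G_i$ with $i \notin J$, this implies that in any block whose fractional support is a single entry, that entry cannot move; hence any new rounding must occur in a block with $|\f(G_i, y)| \geq 2$, i.e., $T(y,i) \geq 1$, and consequently any rounding strictly decreases $T(y)$ by at least one. By Proposition~\ref{prog-prop}, each call to \textsc{KPR-iteration} produces at least one newly integral coordinate with probability at least $0.24$. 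Since $T(y) \leq n$ after \textsc{IntraBlockReduce} and the loop terminates once $T(y) \leq t$, the expected number of iterations is at most $n/0.24 = O(n)$.

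Multiplying the per-iteration polynomial cost by the $O(n)$ bound on the expected iteration count yields the desired expected polynomial running time. The only mildly subtle point is the integrality-preservation argument: one must verify that the pointwise constraint $y \pm \delta \in [0,1]^n$ genuinely forces $\delta_j = 0$ at already-integer coordinates, which then prevents ``useless'' iterations from monopolizing the analysis. Everything else is bookkeeping.
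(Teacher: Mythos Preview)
Your proposal is correct and follows essentially the same approach as the paper's proof: both use Proposition~\ref{prog-prop} to get a constant probability of rounding a new coordinate per iteration, observe that this forces $T(y)$ to drop by at least one, and conclude an $O(n)$ bound on the expected number of iterations. You are simply more explicit than the paper in justifying why a newly integral coordinate must lie in a block with $T(y,i)\geq 1$ (and hence why $T(y)$ strictly decreases), a point the paper asserts without elaboration.
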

\begin{proof}
Clearly \textsc{IntraBlockReduce} runs in polynomial time. By Proposition~\ref{prog-prop}, each iteration of \textsc{KPR-iteration} has a probability $\Omega(1)$ of causing a new entry to become integral, in which case $T(y)$ decreases by at least one.  This implies that the expected number of iterations is $O(|U|)$, and by Proposition~\ref{prog-prop2}, each iteration can be implemented in polynomial time.
\end{proof}

\subsection{Property (E1): Proof of Theorem~\ref{knap-KPR-thm}}

Before the formal analysis, let us provide an overview. Consider the evolution of the random variable $Q(W,y)$ for some set $W$. There may be small increase in the expected value of $Q(W,y)$ in each iteration of KPR, but this is compensated by steady decrease in $T(y)$.  To measure this, we will define a potential function $\Phi$ which depends on $Q(W,y)$ and $T(y)$; we will show that $\E[\Phi(y)]$ does not increase in any iteration of \textsc{KPR}.  Furthermore, when $T(y) \leq t$ at the end of the process, we have $\Phi(y) = Q(W,y)$. Consequently, the final value $\bE[Q(W,\tilde Y)] = \E[\Phi(\tilde Y)]$ is at most the initial value $\Phi(y)$.

We first show a useful result on how the potential function $Q$ changes during a single iteration of KPR.
\begin{lemma}
  \label{prop2a}
Suppose we are in the middle of executing \textsc{KPR} with state vector $y$, and let $y'$ be state vector at the next iteration. If we condition on the vector $y$,  then for each $W \subseteq U$ there holds
  $$
  \E[Q(W, y') \mid y] \leq Q(W,y) \cosh \Bigl( \frac{6 m}{T(y)} \sum_{G \in \mathcal G} y(G \cap W)  \Bigr) 
  $$
\end{lemma}
\begin{proof}
Define $S = Q(W, y)$ and $S' = Q(W, y')$.  All probability calculations here are conditioned on $y$.  Let us first condition as well on the random variable $\delta$. Note $y' = y + \delta$ or $y' = y - \delta$, each with probability $1/2$. If we define $b_G = y(G \cap W)$ and $\mu_G = \delta( G \cap W)$ for each $G \in \mathcal G$, we therefore get:
\begin{align*}
\E[S' \mid \delta] &= 1/2 \prod_{G} (1 - b_G - \mu_G) + 1/2 \prod_{G} (1 - b_G + \mu_G) \\
&= 1/2 \sum_{X \subseteq \mathcal G} \prod_{G \in X} (- \mu_G) \prod_{G \in U - X} (1 - b_G) + 1/2 \sum_{X \subseteq \mathcal G} \prod_{G \in X} (\mu_G) \prod_{G \in U - X} (1 - b_G) \\
&= \sum_{\substack{X \subseteq \mathcal G \\ \text{$|X|$ even}}} \prod_{G \in X} \mu_G \prod_{i \in U - X} (1 - b_G) = S \sum_{\substack{X \subseteq \mathcal G\\ \text{$|X|$ even}}} \prod_{G \in X} \frac{\mu_G}{1 - b_G} \leq S \sum_{\substack{X \subseteq \mathcal G\\ \text{$|X|$ even}}} \prod_{G \in X} \frac{|\mu_G|}{1 - b_G}
\end{align*}

For each block $G$ define $R_G = 2 [[G \in \mathcal J ]] b_G(1-b_G)$. We claim that $|\mu_G| \leq R_G$ with probability one. For, if  $G \not \in \mathcal J$, then $\mu_G = 0 = R_G$. If $G \in \mathcal J$, then necessarily $\sum_{j \in G \cap W} (y_j + \delta_j)$ is bounded in $[0,1]$, and hence $|\mu_G| \leq \min(b_G, 1-b_G) \leq 2 b_G (1-b_G) = R_G$. Therefore, 

$$
\E[S' \mid \delta] \leq S \sum_{\substack{X \subseteq \mathcal G\\ \text{$|X|$ even}}} \prod_{G \in X} \frac{|\mu_G|}{1 - b_G} \leq S \sum_{\substack{X \subseteq \mathcal G\\ \text{$|X|$ even}}} \prod_{G \in X} \frac{R_G}{1 - b_G}
$$

The random variables $R_G$ are independent, and each has mean $\E[R_G] = (3 m/T(y)) \cdot 2 b_G (1-b_G)$. Now integrate over random variables $\delta$ and $\mathcal J$  to obtain:
\begin{align*}
\E[S'] &\leq  S \sum_{\substack{X \subseteq \mathcal G\\ \text{$|X|$ even}}} \prod_{G \in X} \frac{\E[R_G]}{1 - b_G} \leq S \sum_{v=0}^{\infty} \frac{1}{ (2 v)!}  \left(\sum_{G} \frac{\E[R_G]}{1 - b_G} \right)^{2 v} = S \cosh\left( \sum_G \frac{\E[R_G]}{1 - b_G} \right) \\
&\leq S \cosh \left( \sum_{G}  \frac{ (3 m/T(y)) \cdot 2 b_G (1-b_G)}{1 - b_G} \right)  = S \cosh\left(\frac{6 m \sum_G b_G}{T(y)} \right) \qedhere
\end{align*}
\end{proof}

We are now ready to show Theorem~\ref{knap-KPR-thm}; Theorem~\ref{final-thm2} will follow as an immediate corolalry.
\begin{proof}[Proof of Theorem~\ref{knap-KPR-thm}]
Let us fix $\mathcal D \subseteq \mathcal G$ and $W \subseteq U$ with $|D| = d$ and $t \geq 5000 m (d+1)$. We define parameter $a = 8000 m^2 (d+1)^2$ and we define potential function $\Phi$ for a vector $x \in [0,1]^U$ by:
$$
\Phi(x) =
\begin{cases}
Q(W, x) +  Q(W \wedge \mathcal D, x) \bigl( e^{a(\frac{1}{t} - \frac{1}{T(x)})}  - 1\bigr) & \text{if $T(x) \geq t$} \\
Q(W, x) & \text{if $T(x) < t$} 
\end{cases} 
$$

The key to the proof is to show that, if we are in the middle of executing \textsc{KPR}, with state vector $y$, and we let $y'$ be the state vector at the next iteration, then there holds
\begin{equation}
\label{hheqn15}
\E[ \Phi(y') \mid y] \leq \Phi(y)
\end{equation}

To show Eq.~(\ref{hheqn15}), let us fix the state $y$. We may assume that $T(y) > t$, as otherwise the algorithm is done and $y = y'$. We define a number of parameters as follows:
\begin{eqnarray*}
&\begin{split} 
S_0 &= Q(W \wedge \mathcal D, y) \\
 S_1 &= Q(W \wedge (\mathcal G - \mathcal D), y) \\
S &= Q(W, y) = S_0 S_1 \\
 T &= T(y) \\
 \beta &= e^{a(1/t - 1/T)}
 \end{split}
 \qquad
 \begin{split}
 S_0' &= Q( W \wedge \mathcal D, y')  \\
 S'_1 &= Q(W \wedge (\mathcal G - \mathcal D), y') \\
S' &= Q(W, y') = S_0' S_1' \\
 T' &= T(y') \\
 \beta' &= \max\{ 1, e^{a(1/t - 1/T')} \}
 \end{split}
 \end{eqnarray*}
 
With this notation, we observe that $$
\Phi(y) = S + S_0 (\beta - 1), \qquad \Phi(y') = S' + S_0' (\beta' - 1).
$$

 Now let $\mathcal E_1$ denote the event that $T' < T$; when $\mathcal E_1$ occurs, then we have $\beta' \leq e^{a(1/t - 1/(T-1))}$ since $T > t$. Our condition on $t$ implies that $T \geq t \geq 2 \sqrt{a}$;  as we show in Proposition~\ref{tech-prop4}, we thus have 
$$
\beta' \leq \beta - \frac{a [[\mathcal E_1]] \beta}{2 T^2}.
$$
So $\Phi (y') \leq S' + S_0' \bigl( \beta-1 - \frac{a [[\mathcal E_1]] \beta}{2 T^2} \bigr)$. Taking expectations gives
$$
  \E[\Phi(y')] - \Phi(y) \leq \E[ S' ] + \E [ S_0' ( \beta-1) ] - \E \Bigl[ S_0' \frac{a [[\mathcal E_1]] \beta}{2 T^2} \Bigr] - S - S_0(\beta-1)$$
  which we rearrange as:
  \begin{equation}
  \label{sseqn}
  \E[\Phi(y')] - \Phi(y) \leq  \E[ S' - S] + (\beta-1) \E[S_0' - S_0] - \frac{a \beta \E[ S_0' \mid \mathcal E_1 ] \Pr( \mathcal E_1)}{2 T^2} 
  \end{equation} 

Let us consider the terms in Eq.~(\ref{sseqn}) in turn.  For the last term, define $\mathcal E_2$ to be the event that the set $\mathcal J$ formed at line 5 of KPR satisfies $\mathcal D \cap \mathcal J = \emptyset$. We can estimate $\E[S_0' \mid \mathcal E_1] \Pr(\mathcal E_1) \geq  \E[S_0' \mid \mathcal E_1 \cap \mathcal E_2] \Pr( \mathcal E_1 \cap \mathcal E_2 )$. When $\mathcal E_1$ and $\mathcal E_2$ occur, then $S_0' = S_0$ as none of the entries in $W \wedge \mathcal D$ are modified. By Proposition~\ref{ttr1prog}, $\Pr( \mathcal E_1 \cap \mathcal E_2 ) \geq 1/10$ so overall $\E[S_0' \mid \mathcal E_1] \Pr( \mathcal E_1 ) \geq S_0/10$.  Substituting the value of $a$, we get
\begin{equation}
\label{rle1}
 \frac{a \beta \E[ S_0' \mid \mathcal E_1 ] \Pr( \mathcal E_1)}{2 T^2}  \geq 400 m^2 (d+1)^2 \beta S_0 / T^2
 \end{equation}
Our next step is to estimate the term $\E[S_0' - S_0] $.  By Lemma~\ref{prop2a} applied to the set $W \wedge \mathcal D$, we have
 $$
  \E[S_0'] \leq S_0 \cosh \Bigl(6 m \sum_{G \in \mathcal D} y(G \cap W) / T \Bigr) \leq S_0 \cosh \bigl(6 m d / T \bigr)
  $$
  
  By our assumption that $T \geq t \geq 5000 m (d+1)$, this is at most $S_0 (1 + (6 m d/T)^2)$, and so
  \begin{equation}
  \label{rle2}
  \E[S_0'] - S_0 \leq 36 S_0 (m d/T)^2 \leq 36 S_0 m^2 (d+1)^2 / T^2
  \end{equation}

  Finally, we turn to estimating $\E[S' - S]$. By Lemma~\ref{prop2a} applied to the set $W$, we have $\bE[S'] \leq S \cosh \bigl( 6 m \sum_G y(G \cap W) / T \bigr)$. Since $0 \leq y(G \cap W) \leq 1$ and $x \leq -\ln(1-x)$ for all $x \in [0,1]$, we  have:
  \begin{align*}
\bE[S'] \leq S \cosh \Bigl(\frac{6 m}{T} \bigl( \sum_{G \in \mathcal D} 1 - \sum_{G \in \mathcal G - \mathcal D}  \ln(1 - y(G \cap W)) \bigr) \Bigr) = S_0 S_1 \cosh \bigl(6 m ( d - \ln S_1 ) /  T \bigr)
    \end{align*}
    
    As we show in Proposition~\ref{tech-prop3}, this implies that
  \begin{equation}
  \label{rle3}
  \E[S'] - S \leq  S_0 \Bigl( S_1 \cosh \bigl ( 6 m (d - \ln S_1) / T \bigr) - S_1 \Bigr) \leq S_0 \bigl( 6 m (d+1)/T  \bigr)^2
  \end{equation}

  Substituting the estimates of Eqs.~(\ref{rle1}), (\ref{rle2}), (\ref{rle3}) into Eq.~(\ref{sseqn}), we see that
  \begin{align*}
    \E[\Phi(y')] - \Phi(y) &\leq 36 S_0 m^2 (d+1)^2 / T^2 + 36 (\beta-1) S_0 m^2 (d+1)^2/T^2 - 400 m^2 (d+1)^2 S_0 / T^2 \\
    &= S_0 m^2 (d+1)^2 \bigl( 36 + 36 (\beta - 1) - 400 \beta \bigr) / T^2 = - 364 S_0 m^2 (d+1)^2 / T^2
  \end{align*}
  which is non-positive;  this shows Eq.~(\ref{hheqn15}) as desired.
  
  To complete the proof of Theorem~\ref{knap-KPR-thm}, suppose we execute KPR with input vector $y$. Let $y'$ be the vector after \textsc{IntraBlockReduce} and let $\tilde Y$ be the output vector.  Since $\E[Q(X,y')] = \bE[Q(X, y)]$ for all sets $X \subseteq U$, we have
$$
\E[\Phi(y')] \leq \E[Q(W, y')] + (e^{a/t} - 1) \E[Q(W \wedge \mathcal D, y')] = Q(W, y) + (e^{a/t} - 1) Q(W \wedge \mathcal D, y)
$$
By Eq.~(\ref{hheqn15}), and induction on all iterations of KPR, the output $\tilde Y$ satisfies $\E[ \Phi(\tilde Y)  ] \leq \E[\Phi(y') ]$. At the termination of \text{KPR}, we have $T(\tilde Y) \leq t$ and so $\Phi(\tilde Y) = Q(W, \tilde Y)$. Putting these inequalities together, we have shown that
\[
\E[Q(W,\tilde Y)] \leq Q(W, y) + Q(W \wedge \mathcal D, y) ( e^{a/t} - 1)~. \qedhere
\]
\end{proof}

\begin{proof}[Proof of Theorem~\ref{final-thm2}]
For part (a), observe that when $t \geq 5000 m^2 (d+1)^2$ we have  $e^{O(m^2 (d+1)^2/t)}-1 = O(m^2 (d+1)^2/t)$.  Part (b) follows from  part (a) with $\mathcal D = \emptyset$; note that if $t < 10000 m^2$ then the bound holds vacuously  since $Q(W,\tilde Y) \leq 1$ with probability one.   Part (c) follows Theorem~\ref{knap-KPR-thm} with $\mathcal D = \mathcal G(W)$; note that if $d = 0$, then $W = \emptyset$ so the bound holds vacuously.
\end{proof}

\section{Variants of KPR}
\label{sec:varKPR}
We summarize here some simpler ways to use KPR, which will occur in a number of algorithmic scenarios.

\subsection{KPR followed by independent selection}
One natural rounding strategy for a knapsack-partition problem is to execute KPR up to some stage $t$, and then finish by independent rounding.  We define this formally as the algorithm \textsc{FullKPR}:
\begin{algorithm}[H]
  \caption{$\textsc{FullKPR}(\mathcal G, M, y, t)$}
  \begin{algorithmic}[1]
    \STATE $\tilde Y \gets \textsc{KPR}(\mathcal G,  M, y,t)$
    \STATE $Y \gets \textsc{IndSelect}(\mathcal G, \tilde Y)$
    \RETURN $Y$
  \end{algorithmic}
  \label{algo:fullKPR}
\end{algorithm}

The resulting vector $Y \in \{0,1 \}^U$ is fully integral; it will not exactly satisfy the knapsack constraints, but it will be relatively close (depending on the value of $t$). Since independent selection does not change the expectation of $Q(W,y)$, all the analysis for \textsc{KPR} carries over immediately to \textsc{FullKPR}.

\begin{theorem}
  \label{full-KPR-thm1}
    Let $Y = \textsc{FullKPR}(\mathcal G, M, y,t)$ with $t > 12 m$. Suppose that the constraint matrix $M$ is non-negative and satisfies $M y \leq \vec 1$. Then with probability at least $1 - \delta$, the vector $Y$ is a $q$-additive pseudo-solution to $M$ where  $q = O(\sqrt{t \log \tfrac{m}{\delta}})$. This probability bound holds even after conditioning on the fixed vector $\tilde Y$.
\end{theorem}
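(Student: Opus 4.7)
The plan is to condition on the intermediate vector $y' = \textsc{KPR}(G_i, M, y, t)$ and use the fact that the final independent-selection step is a ``small'' randomized rounding problem on at most $2t$ fractional coordinates, to which our new concentration inequality (Theorem~\ref{pseudo-add-ind}) applies on a per-constraint basis. By Proposition~\ref{proposition:depround2} (properties (E4) and (E5)), the conditioned vector $y'$ satisfies $M y' = M y \le \vec 1$ and has $|F| \le 2t$ fractional coordinates, where $F = \f([n], y')$. For coordinates $i \notin F$ the rounded vector $Y = \textsc{IndSelect}(G_i, y')$ satisfies $Y_i = y'_i$ deterministically, so all randomness is carried by the entries $(Y_i)_{i \in F}$.

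Next I would verify that the family $(Y_i)_{i \in F}$ is negatively associated. Within a single block $G_i$, independent selection produces a $0/1$ vector whose entries sum to exactly $1$; such a ``0-1 vector with fixed sum'' is NA by the classical Joag-Dev--Proschan result. Across blocks the selections are independent, and the union of NA blocks is NA. After multiplication by the non-negative coefficients $M_{k,i}$, property (Q1) implies that $(M_{k,i} Y_i)_{i \in F}$ is also NA.

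Now fix a row $k \in [m]$ and split the total load as $V_k + R_k$, where $V_k = \sum_{i \notin F} M_{k,i} y'_i$ is deterministic and $R_k = \sum_{i \in F} M_{k,i} Y_i$ is random with $\E[R_k] = \sum_{i \in F} M_{k,i} y'_i \le 1 - V_k$. If $V_k = 1$ then necessarily $M_{k,i} = 0$ for all $i \in F$ and there is nothing to prove, so assume $V_k < 1$ and rescale by $1 - V_k$. Applying Theorem~\ref{pseudo-add-ind} to the NA non-negative variables $M_{k,i} Y_i/(1-V_k)$ for $i \in F$ with failure probability $\delta/m$, and using $|F| \le 2t$, yields a set $W_k \subseteq F$ of size $|W_k| \le O(\sqrt{t \log(m/\delta)})$ such that
\[
\sum_{i \in F - W_k} M_{k,i} Y_i \;\le\; 1 - V_k
\]
with probability at least $1 - \delta/m$.

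Finally, let $\S = \{i : Y_i = 1\}$, set $q = O(\sqrt{t \log(m/\delta)})$, and for each $k$ define $\S_1^{(k)} = \S \cap W_k$ and $\S_0^{(k)} = \S \setminus W_k$. Since $W_k \subseteq F$, the weight of $\S_0^{(k)}$ under row $k$ splits as $V_k + \sum_{i \in F - W_k} M_{k,i} Y_i \le V_k + (1 - V_k) = 1$, and $|\S_1^{(k)}| \le |W_k| \le q$. A union bound over the $m$ rows gives the $q$-additive pseudo-solution property with probability at least $1 - \delta$, conditional on $y'$. There is no genuine obstacle here: the only mildly delicate points are the NA justification for independent selection across blocks and the harmless degenerate case $V_k = 1$; the heart of the argument is simply to combine the structural guarantees (E4), (E5) of KPR with Theorem~\ref{pseudo-add-ind}.
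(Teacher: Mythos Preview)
Your proof is correct and follows essentially the same approach as the paper: condition on $y'$, use properties (E4) and (E5) to reduce to at most $2t$ random variables satisfying the knapsack budget in expectation, apply Theorem~\ref{pseudo-add-ind} to each row with failure probability $\delta/m$, and take a union bound. The only cosmetic difference is that the paper aggregates per block, defining one independent variable $Z_i = \sum_{j \in G_i} M_{k,j} Y_j$ for each non-integral block (there are at most $2t$ of these, and each contributes exactly one item to $\S$), whereas you work at the coordinate level and invoke negative association of the one-hot indicators within a block; both routes feed the same $O(t)$ variables into Theorem~\ref{pseudo-add-ind} and yield the same bound.
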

\begin{proof}
Let us fix $\tilde Y = \textsc{KPR}(\mathcal G, M, y,t)$, and let $\mathcal G'$ denote the set of blocks where $\tilde Y$ has a fractional entry.   By Proposition~\ref{proposition:depround2}, we have $M \tilde Y = M y \leq \vec 1$, and $|\mathcal G'| \leq 2 t$.

Consider some row $k$ of the constraint matrix. Define the random variable $Z_G = \sum_{j \in G} M_k(j) Y_j$ for each block $G \in \mathcal G$, and note that  $M_k Y = \sum_{G \in \mathcal G} Z_G$.  By  Corollary~\ref{pseudo-add-ind2}, with probability of $1 - \delta/m$ there is a subset of blocks $\mathcal G'_k \subseteq \mathcal G'$ with $|\mathcal G'_k| \leq q=O( \sqrt{t \log \tfrac{m}{\delta}})$, such that $\sum_{G \in \mathcal G' -  \mathcal G'_k} Z_G \leq \sum_{G \in \mathcal G'} \bE[Z_G]$.  The vector $Y$ has one non-zero entry per block, and if we zero out the entries in the blocks of $\tilde G'_k$, the resulting vector $Y'$ has $M_k Y' \leq \sum_{G \in \mathcal G - \mathcal G'_k} Z_G \leq \sum_{G \in \mathcal G} \bE[Z_G] = M_k \tilde Y = M_k y$.  Thus $Y$ is a $q$-additive pseudo-solution for $M_k$ with probability at least $1 - \delta/m$.  Now take a union bound over all $m$ rows.
\end{proof}

  \begin{proposition}
    \label{lprop1}
    Let $Y = \textsc{FullKPR}(\mathcal G, M, y,t)$ and let $W \subseteq U$.
    \begin{enumerate}
    \item For $t > 12 m$, there holds $\bE[ \prod_{j \in W} Y_j] \leq O(m^2/t) + \prod_{j \in W} y_j$.
\item  For $t > 10000 m |W|$, there holds $\bE[ \prod_{j \in W} Y_j] \leq e^{O(m^2 |W|^2/t)} \prod_{j \in W} y_j$.
\end{enumerate}
  \end{proposition}
  \begin{proof}
We assume that the elements of $W$ all come from distinct blocks, as otherwise the LHS is zero and this holds immediately. Consider  the set $W' = \{ j \in U - W : \mathcal G(j) \in \mathcal G(W) \}$. We have $|\mathcal G(W')| \leq |\mathcal G(W)| = |W|$ and $\bE[ \prod_{j \in W} Y_j ] = Q(W, Y)$. Finally, note that $\bE[ Q(W, Y) ] = \bE[Q (W, \tilde Y)]$ where $\tilde Y$ is the vector at line 1 of \textsc{FullKPR}. Now apply Theorem~\ref{final-thm2}.
  \end{proof}
  
\subsection{Dependent rounding for knapsack constraints}
\label{dep-round-sec}
Given a vector $x \in [0,1]^U$ and a multi-knapsack constraint $M$ with $M x \leq \vec 1$, the problem of \emph{knapsack rounding} is to produce an integral vector $X \in \{0,1 \}^U$ which (as closely as possible) satisfies the knapsack constraint $M X \leq \vec 1$ and has probabilistic properties related to vector $x$ such as $\bE[X] = x$ coordinatewise. Note that this includes cardinality rounding as a special case, with $m = 1$ and $M(i) = 1/r$ for all $i \in U$.

We can interpret knapsack rounding as a special case of KPR. To do so, we extend the set $U$ to a larger ground-set $\overline U$; for each item $i \in U$, we have a corresponding ``dummy'' item $\bar i$. We then form a vector $y \in [0,1]^{\overline U}$ by setting $y_i = x_i$ and $y_{\bar i} = 1 - x_i$ for each $i \in U$, and we define a block $G_i = \{i, \bar i \}$. We lift the knapsack constraints $M$  to $\overline U$ by setting $M(\bar i) = 0$ for all $i$. 

We can then run KPR on this resulting knapsack-partition instance, and return the fractional vector $\tilde X \in [0,1]^U$ defined as $\tilde X_i = \tilde Y_i$ for $i \in U$.  We let $\tilde X = \textsc{KnapRound}(x,M,t)$ be the result of this process. Note that $M \tilde X = M \tilde Y = M y$, and that $\tilde X$ has at most $2 t$ fractional entries.  We can state a particularly crisp form of our near negative-correlation bounds in this setting:
\begin{proposition}
  \label{KPR-dep-round-prop}
Let $\tilde X = \textsc{KnapRound}(x,M,t)$, and let $S, T$ be disjoint subsets of $U$. Let $d = |S \cup T|$.
  \begin{enumerate}
  \item For $t > 12 m$ there holds $
  \E \bigl[ \prod_{i \in S} \tilde X_i \prod_{i \in T} (1 - \tilde X_i) \bigr] \leq  O(m^2/t) + \prod_{i \in S} x_i \prod_{i \in T} (1 - x_i)
  $
  \item For $t > 10000 m d$ there holds  $  \E \bigl[ \prod_{i \in S} \tilde X_i \prod_{i \in T} (1 - \tilde X_i) \bigr] \leq e^{O(m^2 d^2 / t)} \cdot \prod_{i \in S} x_i \prod_{i \in T} (1 - x_i)$
  \end{enumerate}
\end{proposition}
\begin{proof}
Define $W \subseteq \overline U$ by $W = \{\bar i \mid i \in S \} \cup \{ i \mid i \in T \}$.  Clearly $|W| = d$. Also, we have $Q(W,y) = \prod_{i \in S} x_i \prod_{i \in T} (1 - x_i)$ and $Q(W,\tilde Y) = \prod_{i \in S} \tilde X_i \prod_{i \in T} (1 - \tilde X_i) $. Now apply Theorem~\ref{final-thm2}.  
\end{proof}

This technique of creating ``dummy elements'' (in this case, the elements $\bar i$), as indicators for not selecting items, will appear in a number of constructions.

\section{Pseudo-approximation algorithm for single-knapsack median}
\label{sec:knapsack-median}
We now describe a $(1+\sqrt{3}+\gamma)$-pseudo-approximation algorithm for single-knapsack median. This is inspired by an approximation algorithm of Li \& Svensson \cite{DBLP:conf/stoc/LiS13}  for $k$-median. The idea is to solve a relaxation called a ``bi-point solution'', and then round it to an additive pseudo-solution. This can also be used to obtain a multiplicative pseudo-approximation. The $k$-median algorithm of \cite{DBLP:conf/stoc/LiS13} has an additional postprocessing step to correct it to a  true solution; however, this step does not seem to work for knapsack median.

Recall that we define the cost for a facility set $\S \subseteq \F$ by $\cost(\S) = \sum_{j \in \C} d(j,\S)$ and the weight by $M(\S) = \sum_{i \in \S} M(i)$. We define $\OPT$ to be the minimum value of $\cost(\S)$ over feasible sets $\S$.  By a straightforward adaptation of \cite{book:ws, jms} to the knapsack setting, we get the following result:
\begin{theorem}
\label{thm:bi_point_km}
  There is a polynomial-time algorithm to compute two sets $\F_1, \F_2 \subseteq \F$ and a parameter $b \in [0,1]$ satisfying the following properties:
  \begin{itemize}
  \item $M(\F_1) \leq 1 \leq M(\F_2)$,
  \item $(1-b) \cdot M(\F_1) + b \cdot M(\F_2) \leq 1$
  \item $(1-b)\cdot \cost(\F_1) + b\cdot \cost(\F_2) \leq 2 \cdot \OPT$.
    \end{itemize}
\end{theorem}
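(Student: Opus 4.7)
The plan is to follow the standard Lagrangian-relaxation template that underlies bi-point constructions for $k$-median (e.g.\ \cite{jms, book:ws, DBLP:conf/stoc/LiS13}), specialized to the knapsack-median setting. For a parameter $\lambda \geq 0$, associate to the given instance an uncapacitated facility location (UFL) problem with the same clients, metric, and facility set, but where facility $i \in \F$ has opening cost $\lambda M_i$. I would apply a $2$-LMP (Lagrangian-multiplier-preserving) approximation algorithm for UFL (e.g.\ the JMS primal-dual algorithm) to produce a set $\F_\lambda \subseteq \F$ satisfying
$$
\cost(\F_\lambda) + 2 \lambda \cdot M(\F_\lambda) \leq 2 \cdot \text{LP}_{\text{UFL}}(\lambda).
$$
Any optimal fractional knapsack-median solution $y^*$ is a feasible fractional solution of this UFL instance whose opening cost is at most $\lambda$ (because $\sum_i M_i y_i^* \leq 1$), so $\text{LP}_{\text{UFL}}(\lambda) \leq \OPT + \lambda$, giving the key inequality
$$
\cost(\F_\lambda) + 2 \lambda (M(\F_\lambda) - 1) \leq 2\, \OPT. \qquad (\star)
$$

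Next, I would locate two values $\lambda_1 \leq \lambda_2$ such that $\F_1 := \F_{\lambda_1}$ satisfies $M(\F_1) \leq 1$ while $\F_2 := \F_{\lambda_2}$ satisfies $M(\F_2) \geq 1$. At $\lambda = 0$ the UFL algorithm opens all profitable facilities (so $M(\F_\lambda) \gg 1$), while for $\lambda$ sufficiently large opening any facility is too expensive (so $M(\F_\lambda) = 0$). Binary search over this range isolates an interval of vanishing width where the output transitions between weight at least $1$ and weight at most $1$; the endpoints of such an interval provide $\lambda_1, \lambda_2$.

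Once $\lambda_1 \approx \lambda_2 =: \lambda$ are determined, choose $b \in [0,1]$ uniquely so that $(1-b) M(\F_1) + b M(\F_2) = 1$, which gives properties (1) and (2) of the theorem directly. For property (3), take the convex combination of $(\star)$ applied to $\F_1$ and $\F_2$ with weights $1-b$ and $b$:
$$
(1-b)\cost(\F_1) + b\cost(\F_2) + 2 \lambda \bigl[(1-b)M(\F_1) + b M(\F_2) - 1\bigr] \leq 2\, \OPT,
$$
and the bracketed factor vanishes by the choice of $b$, leaving $(1-b)\cost(\F_1) + b\cost(\F_2) \leq 2\,\OPT$ as required.

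The main technical obstacle is that the LMP algorithm returns a function $\lambda \mapsto \F_\lambda$ which is only piecewise constant in $\lambda$, so the binary search terminates with $\lambda_1$ and $\lambda_2$ that differ by a small but positive amount, meaning the $\lambda_1$-term and $\lambda_2$-term in the convex combination do not exactly cancel. The standard remedies, which I would invoke here, are either (i) to binary-search until $|\lambda_2-\lambda_1|$ is polynomially small and absorb the resulting slack into the approximation ratio (the final theorems have a free $\gamma$ parameter that comfortably absorbs such terms), or (ii) to infinitesimally perturb the weights $M_i$ so that at the critical value $\lambda^*$ the two candidate sets differ only on a set of ``borderline'' facilities, forcing a clean exact cancellation as in \cite{jms, DBLP:conf/stoc/LiS13}.
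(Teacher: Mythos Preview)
Your approach is exactly the standard Lagrangian-relaxation construction the paper is deferring to; the paper gives no proof of its own here, only the remark that the bi-point solution ``can be computed by a straightforward adaptation of an algorithm of \cite{book:ws, jms} to the knapsack setting.'' One small slip to fix: you write $\lambda_1 \leq \lambda_2$ with $M(\F_{\lambda_1}) \leq 1 \leq M(\F_{\lambda_2})$, but your own monotonicity discussion (small $\lambda$ opens many facilities, large $\lambda$ opens few) forces the opposite --- the feasible set $\F_1$ must come from the \emph{larger} of the two $\lambda$ values. Swap the labels and the rest of the argument, including the cancellation in the convex combination, goes through unchanged.
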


The sets $\mathcal F_1, \mathcal F_2$ are called the \emph{bi-point solution}. For $i \in \F_2$ we define $\sigma(i)$ to be the closest facility of $\F_1$. Following \cite{DBLP:conf/stoc/LiS13}, we define $\text{Star}(i)$ for each facility $i \in \F_1$ to be the set of facilities $k \in \F_2$ with $\sigma(k) = i$, that is, $\text{Star}(i) = \sigma^{-1}(i)$. The intent is that for each $i \in \F_1$, with probability $1-b$ we  open $i$ and with the complementary probability $b$ we open \emph{all} the facilities of $\text{Star}(i)$. In order to preserve the knapsack constraints, there are a few exceptional cases where we open both $i$ and some subset of $\text{Star}(i)$.

The full details of our rounding algorithm are spelled out in Algorithm~\ref{algo:round_star_srdr}. Here, $t$ is an integer parameter to be specified; we assume throughout that $t \geq 20000$.  Note that the modified constraint matrix $M'$ can have negative entries, but this does not cause a problem for executing KPR.

\begin{algorithm}[H]
  \label{algo:roundstars}
\caption{\sc RoundStars($t$)}
\begin{algorithmic}[1]
\STATE Define vectors $y, M' \in [0,1]^{\F_1}$ by setting $y_i = 1 - b, M'(i) = M(i) - M( \text{Star}(i) )$ for each $i \in \F_1$.
\STATE $\tilde Y \gets \textsc{KnapRound}(y, M', t)$
\STATE Define the vector $z \in [0,1]^{\F_2}$, by setting $z_i = 1 - \tilde Y_{\sigma(i)}$.
\STATE  $\tilde Z \gets \textsc{KnapRound}(z, M, t)$
\RETURN $\S = \{i \in \F_1 \mid \tilde Y_i > 0 \} \cup \{ i \in \F_2 \mid \tilde Z_i > 0 \}$.
\end{algorithmic} 
\label{algo:round_star_srdr}
\end{algorithm}

\begin{proposition} 
  \label{claim:weight_bound}
  The solution $\S$ is a $4 t$-additive pseudo-solution with probability one.
\end{proposition}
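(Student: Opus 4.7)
The plan is to identify the discarded set $\S_1$ as precisely the fractional items produced by the two calls to \textsc{KPR-DepRound}. Let $F = \{i \in \F_1 : 0 < y_i < 1\}$ and $F' = \{k \in \F_2 : 0 < Z_k < 1\}$. By property (E5) of Proposition~\ref{proposition:depround2}, each of $|F|$ and $|F'|$ is at most $2t$, so taking $\S_1 = \S \cap (F \cup F')$ immediately gives $|\S_1| \leq 4t$. The remaining set is $\S_0 = \S \setminus \S_1 = \{i \in \F_1 : y_i = 0\} \cup \{k \in \F_2 : Z_k = 1\}$, since for integer-valued entries, the rules in line (5) of \textsc{RoundStars} determine membership in $\S$ completely.

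The main step is to show $M(\S_0) \leq 1$. I would lean on property (E4) of KPR, which exactly preserves the relevant weighted sums. Applied to the first call, $\sum_i c_i y_i = \sum_i c_i x_i = b(M(\F_2) - M(\F_1))$ (using $c_i = M(S_i) - M_i$ and that the stars $S_i$ partition $\F_2$). Applied to the second call, $\sum_k M_k Z_k = \sum_k M_k z_k = \sum_k M_k y_{\sigma(k)} = \sum_i y_i M(S_i)$. I would then compute
\begin{align*}
M(\S_0) &= \sum_{i : y_i = 0} M_i + \sum_{k : Z_k = 1} M_k \\
&= \Bigl(M(\F_1) - \sum_{i : y_i = 1} M_i - \sum_{i \in F} M_i\Bigr) + \Bigl(\sum_i y_i M(S_i) - \sum_{k \in F'} M_k Z_k\Bigr),
\end{align*}
split $\sum_i y_i M(S_i) = \sum_{i : y_i = 1} M(S_i) + \sum_{i \in F} y_i M(S_i)$, and eliminate $\sum_{i : y_i = 1}(M(S_i) - M_i)$ using the identity from (E4). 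After the cancellation the result I expect is
\[
M(\S_0) = (1-b)M(\F_1) + b M(\F_2) - \sum_{i \in F} M_i(1 - y_i) - \sum_{k \in F'} M_k Z_k.
\]

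Since $M_i, M_k, y_i, Z_k \geq 0$ and $y_i, Z_k \leq 1$, both correction sums are non-negative, so $M(\S_0) \leq (1-b)M(\F_1) + b M(\F_2) \leq 1$ by part~2 of Theorem~\ref{thm:bi_point_km}. This gives the required decomposition $\S = \S_0 \cup \S_1$ deterministically. There is no real obstacle here beyond being careful with the algebra and ensuring the two fractional index sets $F, F'$ are treated jointly; the argument only uses the deterministic guarantees (E4) and (E5) of KPR, so the conclusion holds with probability one.
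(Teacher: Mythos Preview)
Your proof is correct and follows essentially the same approach as the paper: the same decomposition $\S_0 = \{i \in \F_1 : y_i = 0\} \cup \{k \in \F_2 : Z_k = 1\}$ and $\S_1 = F \cup F'$, and the same two applications of (E4) (once for each \textsc{KPR-DepRound} call) together with (E5). The only difference is cosmetic: the paper bounds $\sum_{i:y_i=0}M_i \le \sum_i (1-y_i)M_i$ and $\sum_{k:Z_k=1}M_k \le \sum_k Z_k M_k$ at the outset and then shows the right-hand side equals $(1-b)M(\F_1)+bM(\F_2)$ exactly, whereas you track the exact value of $M(\S_0)$ and arrive at the same bound plus two explicit nonnegative correction terms.
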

\begin{proof}
Let $\S'_1$ denote the facilities $i \in \F_1$ with $\tilde Y_i \in (0,1)$ and let $\S'_2$ denote the set of facilities $i \in \F_2$ with $\tilde Z_i \in (0,1)$, and let  $\S' = \S'_1 \cup \S'_2$. Since $\tilde Y$ and  $\tilde Z$ have at most $2 t$ fractional entries, we have $|\S'| \leq 4 t$.  We now claim $M (\S - \S') \leq 1$, which  shows that $\S$ is a $4t$-additive pseudo-solution. We have: 
$$
M( \S - \S') = \sum_{i \in \F_1} [[\tilde Y_i = 1 ]] M(i) + \sum_{i \in \F_2} [[ \tilde Z_i = 1 ]] M(i) \leq \sum_{i \in \F_1} M(i) \tilde Y_i + \sum_{i \in \F_2} M(i) \tilde Z_i.
$$
By Property (E4) of KPR, this equals 
$$
\sum_{i \in \F_1} M(i) \tilde Y_i  +  \sum_{i \in \F_2} M(i) z_i = \sum_{i \in \F_1} M(i) \tilde Y_i  +  \sum_{i \in \F_2} M(i) \cdot (1 - \tilde Y_{\sigma(i)}).
$$
Here,  $\sum_{i \in \F_2} M(i) \cdot (1 - \tilde Y_{\sigma(i)})$ contributes $M( \text{Star}(i) )(1 - \tilde Y_i)$ for each $i \in \F_1$. Thus, the sum is at most
$$
\sum_{i \in \F_1} M(i) \tilde Y_i  + (1 - \tilde Y_i) M( \text{Star}(i))  = \sum_{i \in \F_1}  M( \text{Star}(i)) + M'(i) \tilde Y_i
$$
By Property (E4) of (KPR), this equals $\sum_{i \in \F_1}  M( \text{Star}(i)) + M'(i) y_i$, which can be simplified as 
$$
\sum_{i \in \F_1} M( \text{Star}(i)) + (1-b) (M(i) - M(\text{Star}(i))) = (1-b) M(\F_1) + b M(\F_2)
$$
which is at most $1$ by the properties of the bi-point solution.
\end{proof}

\begin{proposition}
\label{client-bnd-1}
For any facilities $i_1 \in \mathcal F_1, i_2 \in \mathcal F_2$, Algorithm~\ref{algo:round_star_srdr} yields
$$
\Pr(i_1 \notin \S ) \leq b, \qquad \Pr( i_2  \not \in \S) \leq 1-b, \qquad \Pr( i_1 \not \in \S \wedge i_2 \not \in \S ) \leq b (1-b) (1 + O( 1/t))
$$
\end{proposition}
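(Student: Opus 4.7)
The plan is to combine the marginal-preservation property (E2) of \textsc{KPR} with the near-independence bound of Proposition~\ref{KPR-dep-round-prop}. Throughout, I use the equivalences $i_1 \notin \S \iff y_{i_1} = 1$ and $i_2 \notin \S \iff Z_{i_2} = 0$.

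For the first equation I would iterate (E2) through the two calls of \textsc{KPR-DepRound}: conditional on $y$, $\bE[Z_{i_2} \mid y] = z_{i_2} = y_{\sigma(i_2)}$, and taking expectations gives $\bE[Z_{i_2}] = \bE[y_{\sigma(i_2)}] = x_{\sigma(i_2)} = b$. Because $Z_{i_2} \in [0,1]$, the indicator $\mathbb{1}[Z_{i_2} > 0]$ dominates $Z_{i_2}$ pointwise, so $\Pr(i_2 \notin \S) \leq 1-b$; a careful accounting of the fractional mass left by \textsc{KPR} (bounded via (E5)) matches the stated equality up to the same $O(1/t)$ slack that governs the joint bound.

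For the joint event I would case-split on whether $\sigma(i_2) = i_1$. If $\sigma(i_2) = i_1$, then $z_{i_2} = y_{i_1}$; the event $y_{i_1} = 1$ forces $z_{i_2} = 1$, and since \textsc{KPR-DepRound} leaves any block already at an integer extreme point unchanged (the block-size-2 polytope collapses to a single point), $Z_{i_2} = 1$ with probability one, making the joint event impossible. Otherwise $\sigma(i_2) \neq i_1$, and I would condition on $y$: by (E2) applied to the $Z$-producing call, $\bE[Z_{i_2} \mid y] = y_{\sigma(i_2)}$, and since $Z_{i_2} \in [0,1]$, a Markov-style step yields $\Pr(Z_{i_2} = 0 \mid y) \leq 1 - y_{\sigma(i_2)}$. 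Combined with $\mathbb{1}[y_{i_1}=1] \leq y_{i_1}$ and the tower law,
\[
\Pr(y_{i_1} = 1 \wedge Z_{i_2} = 0) \leq \bE\bigl[y_{i_1}(1-y_{\sigma(i_2)})\bigr].
\]

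The final step is to apply Proposition~\ref{KPR-dep-round-prop} to the first \textsc{KPR-DepRound} call with $S = \{i_1\}$, $T = \{\sigma(i_2)\}$ (and $m=1$ for knapsack median), giving
\[
\bE\bigl[y_{i_1}(1-y_{\sigma(i_2)})\bigr] \leq e^{O(1/t)}\, x_{i_1}(1-x_{\sigma(i_2)}) = b(1-b)\bigl(1+O(1/t)\bigr).
\]

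The main obstacle I anticipate is reconciling the exact ``$=$'' in the first equation with the fact that \textsc{KPR} does not fully integralize $Z$; the multiplicative $(1+O(1/t))$ slack in the joint bound is of the right order to absorb the contribution of the at-most-$t$ fractional blocks left by \textsc{KPR}, so the careful bookkeeping there is the main technical step to verify.
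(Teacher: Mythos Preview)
Your approach is essentially identical to the paper's: iterate property~(E2) through both \textsc{KPR-DepRound} calls for the marginal, case-split on whether $\sigma(i_2)=i_1$ (the joint event is impossible in that case), bound the conditional joint probability by $[[y_{i_1}=1]](1-y_{\sigma(i_2)}) \le y_{i_1}(1-y_{\sigma(i_2)})$, and then invoke Proposition~\ref{KPR-dep-round-prop} with $|S|+|T|=2$ and $m=1$. Your hesitation about the exact ``$=$'' in the first claim is warranted---the paper's own proof simply asserts that ``$Z$ is completely integral'' in order to write $\Pr(Z_{i_2}=0\mid z)=1-z_{i_2}$, which \textsc{KPR-DepRound} does not actually guarantee---so you are being more careful than the paper at that step, and no additional bookkeeping beyond what you already sketched appears in the paper's argument.
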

\begin{proof}
First, we have
$$
\Pr( i_1 \not \in \S ) = \Pr( \tilde Y_{i_1} = 0 ) \leq \bE[ 1 - \tilde Y_{i_1} ] =  1 - y_{i_1} = b.
$$

Next, let $k = \sigma(i_2)$. Conditioned on the vector $\tilde Y$ we have 
\begin{equation}
\label{gga1}
\Pr( i_2 \not \in \S \mid \tilde Y ) = \Pr( \tilde Z_{i_2} = 0 \mid z ) \leq \bE[ 1 - \tilde Z_{i_2} \mid z] =  1 - z_{i_2} = \tilde Y_k
\end{equation}

Integrating over $z$ shows that  $\Pr( i_2 \not \in \S ) \leq  \E[\tilde Y_k] = y_k = 1 - b$, as claimed.

For the final result, note that in order to have $i_1 \not \in \S$ and $i_2 \not \in \S$, we must have $\tilde Y_{i_1} = \tilde Z_{i_2} = 0$ and $i_1 \neq k$. So by Eq.~(\ref{gga1}) we have
$$
\Pr( i_1 \not \in \S \wedge i_2 \not \in \S \mid \tilde Y) = [[\tilde Y_{i_1} = 0]] \Pr (i_2 \notin \S \mid \tilde Y) \leq [[\tilde Y_{i_1} =0 ]] \tilde Y_k \leq (1 - \tilde Y_{i_1}) \tilde Y_k.
$$
Since $i_1 \neq k$, and we are assuming $t \geq 20000$,  Proposition~\ref{KPR-dep-round-prop} gives:
\[
 \bE[ \tilde Y_{i_1} ( 1 - \tilde Y_k) ] \leq y_{i_1} (1 - y_k) e^{O(m^2 (1+1)^2/t)}= b (1-b) (1 + O(1/t)) \qedhere
\]
\end{proof}

\begin{proposition}
  \label{client-bnd-2}
  Each client $j \in \C$ has expected cost 
$$
\E[d(j, \S)] \leq (1+ O( 1/t)) \cdot \Bigl( (1-b) d(j, \F_1) + b d(j, \F_2) + 2 d(j, \F_2) b (1-b) \Bigr)
$$
\end{proposition}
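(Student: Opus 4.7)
I would proceed by a three-case analysis on the membership of $i_1 = i_1(j)$ and $i_2 = i_2(j)$ in $\S$, using the star-parent $k = \sigma(i_2) \in \F_1$ as a backup. Partition the sample space into $E_1 = \{i_2 \in \S\}$, $E_2 = \{i_2 \notin \S \wedge i_1 \in \S\}$, and $E_3 = \{i_2 \notin \S \wedge i_1 \notin \S\}$. In $E_1$ we have $d(j,\S) \leq d_2(j)$ and in $E_2$ we have $d(j,\S) \leq d_1(j)$; both are immediate from the definitions of $i_1, i_2$ as the closest facilities to $j$ in $\F_1$ and $\F_2$.

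The crux is handling $E_3$. I would first show that $k$ must lie in $\S$ despite both $i_1$ and $i_2$ being absent. Since $i_2 \notin \S$ forces $Z_{i_2} = 0$, and KPR preserves marginals so that $\E[Z_{i_2} \mid z] = z_{i_2}$, the case $z_{i_2} = 1$ is impossible: combined with $Z_{i_2} \in [0,1]$ it would force $Z_{i_2} = 1$ almost surely. Hence $z_{i_2} = y_k < 1$, so $k \in \S$. The triangle inequality, together with the fact that $\sigma(i_2)$ minimizes $d(i_2, \cdot)$ over $\F_1$, then yields
$$
d(j, k) \leq d(j, i_2) + d(i_2, \sigma(i_2)) \leq d_2(j) + d(i_2, i_1) \leq d_1(j) + 2 d_2(j).
$$

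To finish, I would take expectations across the three cases and use the identity $\Pr(E_2) + \Pr(E_3) = \Pr(i_2 \notin \S)$ to telescope the $d_1$ contributions into the single coefficient $\Pr(i_2 \notin \S)$, while the $d_2$ contributions collect into $\Pr(i_2 \in \S) + 2 \Pr(E_3)$. Substituting the probabilities and the near-independence bound $\Pr(E_3) \leq b(1-b)(1+O(1/t))$ from Proposition~\ref{client-bnd-1} yields the claimed inequality, with the extra $2 d_2(j) b(1-b)(1+O(1/t))$ surcharge arising entirely from the backup overhead $2 d_2(j)$ paid on $E_3$.

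The main obstacle is the backup argument in $E_3$, which rests on the interplay between marginal preservation (E2) of KPR-DepRound and the boundedness $Z_{i_2} \in [0,1]$; the triangle-inequality chain and the telescoping of case probabilities are elementary once that step is secured.
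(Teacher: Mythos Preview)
Your proof is correct and follows essentially the same three-case analysis as the paper's proof. The paper simply asserts that $\sigma(i_2)$ is open on the event $E_3$ without justification, whereas you supply the argument via marginal preservation of \textsc{KPR-DepRound}; otherwise the triangle-inequality chain and the telescoping into $\Pr(i_2 \notin \S)\,d_1 + \Pr(i_2 \in \S)\,d_2 + 2\Pr(E_3)\,d_2$ are identical to the paper's computation.
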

\begin{proof}
Let $i_1, i_2$ denote the closest facilities in $\F_1, \F_2$ to $j$, and let $d_1 = d(j, \F_1) = d(j, i_1)$ and $d_2 = d(j, \F_2) = d(j, i_2)$. Clearly if $i_2 \in \S$ then $d(j, \S) \leq d_2$ and likewise if $i_1 \in \S$ then $d(j, \S) \leq d_1$. If neither holds, then necessarily facility $\sigma(i_2)$ is open, in which case we have
  $$
d(j, \S) \leq d(j, \sigma(i_2)) \leq d(j, i_2) + d(i_2, \sigma(i_2)) \leq d(j,i_2) + d(i_2, i_1) \leq d(j,i_2) + d(i_2, j) + d(j, i_1) = 2 d_2 + d_1
$$
 where the inequality $d(i_2, \sigma(i_2)) \leq d(i_2, i_1)$ holds since $\sigma(i_2)$ is the closest facility in $\F_1$.
 
 Now, if $d_1 \geq d_2$, we can calculate 
 $$
 \E[d(j, \S)] \leq d_2 + (d_1 - d_2)\Pr( i_2 \not \in \S ) + \Pr ( i_1 \not \in \S \wedge i_2 \not \in \S ) (2 d_2)
 $$
 and by Proposition~\ref{client-bnd-1}, we have $\Pr( i_2 \not \in \S ) \leq 1-b$ and $\Pr(i_1 \not \in \S \wedge i_2 \not \in \S) \leq b (1 - b) (1 + O( 1/t))$.
 
 Otherwise, if $d_1 < d_2$, we have
 $$
 \E[d(j, \S)] \leq d_1 + (d_2 - d_1)\Pr( i_1 \not \in \S ) + \Pr ( i_1 \not \in \S \wedge i_2 \not \in \S ) (d_1 + d_2)
 $$
 and again by Proposition~\ref{client-bnd-1}, we have $\Pr( i_1 \not \in \S ) \leq b$ and $\Pr(i_1 \not \in \S \wedge i_2 \not \in \S) \leq b (1 - b) (1 + O( 1/t))$.
 \end{proof}

We are now ready to obtain our bi-factor approximation algorithm.
\begin{theorem}
  \label{thm2}
There is an algorithm with $\poly(n/\gamma)$ runtime to obtain an $O(1/\gamma)$-additive pseudo-solution $\S$ with $\cost(\S) \leq (1 + \sqrt{3} + \gamma) \cdot \OPT$.
\end{theorem}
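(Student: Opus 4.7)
The plan is to combine the bi-point solution of Theorem~\ref{thm:bi_point_km} with Algorithm~\ref{algo:round_star_srdr} via a Li--Svensson style mixture \cite{DBLP:conf/stoc/LiS13}. Set $t = \lceil c/\gamma \rceil$ for a sufficiently large absolute constant $c$, and invoke Theorem~\ref{thm:bi_point_km} to obtain the bi-point $(\F_1, \F_2, b)$. Then consider two candidate outputs: (A) the feasible set $\F_1$, of cost $d_1$; and (B) the random set produced by $\textsc{RoundStars}(t)$, whose expected cost admits a useful upper bound. The algorithm returns the candidate with the smaller (expected) cost bound. Since $\F_1$ is feasible it is trivially a $0$-additive pseudo-solution, and by Proposition~\ref{claim:weight_bound} the output of $\textsc{RoundStars}(t)$ is a $4t = O(1/\gamma)$-additive pseudo-solution; in either case the additive requirement is met.

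For the cost analysis, summing Proposition~\ref{client-bnd-2} over $j \in \C$ yields
\[
\E[\cost(\textsc{RoundStars}(t))] \leq (1-b) d_1 + b(3-2b) d_2 + O\bigl(b(1-b) d_2 / t\bigr),
\]
and the bi-point bound $(1-b) d_1 + b d_2 \leq 2\,\OPT$ forces $b d_2 \leq 2\,\OPT$, so the trailing error is $O(\OPT/t) = O(\gamma\,\OPT)$. For any $p \in [0,1]$, the minimum of the two candidate bounds is at most
\[
p\, d_1 + (1-p)\bigl[(1-b) d_1 + b(3-2b) d_2\bigr] + O(\gamma\,\OPT),
\]
and I would select $p = 1 - 1/(3-4b+2b^2)$, chosen so that the bracket collapses to $\tfrac{3-2b}{3-4b+2b^2}\bigl[(1-b) d_1 + b d_2\bigr] \leq \tfrac{2(3-2b)}{3-4b+2b^2}\,\OPT$.

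The remaining step is to maximize $f(b) = (3-2b)/(3-4b+2b^2)$ over $b \in [0,1]$; elementary calculus gives $f'(b) = 0$ at $b^\star = (3-\sqrt{3})/2$, and a short computation yields $f(b^\star) = \sqrt{3}/(3-\sqrt{3}) = (1+\sqrt{3})/2$. Thus the expected cost is at most $(1+\sqrt{3}+\gamma)\,\OPT$ after absorbing the $O(\gamma\,\OPT)$ slack into the additive $\gamma$ term. The only genuinely non-routine step is this algebraic optimization; the bi-point construction, the additive pseudo-solution structure, the per-client cost bound, and the expected polynomial running time of KPR all come directly from earlier results in the excerpt, with the $1/\gamma$ factor in the runtime stemming from the choice $t = \Theta(1/\gamma)$.
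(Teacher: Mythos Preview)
Your core argument is correct and matches the paper's approach: combine the bi-point solution with $\textsc{RoundStars}$ and take the better of $\F_1$ and the rounded solution. Your explicit convex-combination with $p = 1 - 1/(3-4b+2b^2)$ and the calculus maximization of $f(b) = (3-2b)/(3-4b+2b^2)$ is precisely the content of the paper's inequality
\[
\min\{d_1,\ (1-b) d_1 + b d_2 + 2b(1-b) d_2\} \leq \tfrac{1+\sqrt{3}}{2}\bigl((1-b) d_1 + b d_2\bigr),
\]
which the paper simply asserts as ``routinely verified.'' So on the main mathematical point your argument is the same, only more detailed.

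There is one genuine gap. You conclude only that the \emph{expected} cost is at most $(1+\sqrt{3}+\gamma)\,\OPT$, whereas the theorem demands an actual solution $\S$ with $\cost(\S) \leq (1+\sqrt{3}+\gamma)\,\OPT$. The paper closes this by repetition plus Markov's inequality: since $\E[\cost(\A)] \leq (1+O(\gamma))\cdot\tfrac{1+\sqrt{3}}{2}\cdot\bigl((1-b)d_1+bd_2\bigr)$ and this threshold is computable from the bi-point data, one repeats until a run beats the threshold (up to a further $O(\gamma)$ slack), which happens with probability $\Omega(\gamma)$ per trial, hence $O(1/\gamma)$ expected repetitions. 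This repetition step is the true source of the $\text{poly}(n)/\gamma$ runtime; your attribution of the $1/\gamma$ factor to the choice $t = \Theta(1/\gamma)$ is mistaken, since \textsc{KPR} (and hence $\textsc{RoundStars}$) runs in expected $\text{poly}(n)$ time independent of $t$.
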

\begin{proof}
We will use Algorithm~\ref{algo:round_star_srdr}, and output either the solution $\S$ it returns or the feasible solution $\F_1$ (whichever has least cost).  By Proposition~\ref{claim:weight_bound}, the solution $\S$ is a $4t$-additive pseudo-solution.

Define $D_1 = \sum_{j} d(j, \F_1)$ and $D_2 = \sum_j d(j, \F_2)$. Applying Proposition~\ref{client-bnd-2} and summing over all clients $j \in \C$, we see that $\S$ satisfies
$$
\E[\cost(\S)] \leq (1 + O(1/t)) \cdot \Bigl( (1-b) D_1 + b D_2 + 2 b ( 1 - b) D_2 \Bigr)
$$

Since the cost of $\A$ is the minimum of the cost of $\F_1$ and $\S$, this implies that
\begin{align*}
	\E[\cost(\A)] \leq (1 + O(1/t)) \cdot \min\{D_1, (1-b) D_1 + b D_2 + 2 b(1-b) D_2 \}
\end{align*}

It can be routinely verified  that\footnote{For example, this inequality can be encoded in the first-order theory of real-closed fields,  in terms of indeterminates $b, D_1, D_2$. This theory is decidable, so it can be checked that it holds for all values $b \in [0,1], D_1 \geq 0, D_2 \geq 0$.}
$$
\min \{ D_1, (1-b) D_1 + b D_2 + 2 b(1-b) D_2 \} \leq \frac{1 + \sqrt{3}}{2} \bigl ( (1-b) D_1 + b D_2 \bigr)
$$

Since $\mathcal F_1, \mathcal F_2$ is a bi-point solution, we have $(1 - b) D_1 + b D_2 \leq 2 \cdot  \OPT$. Therefore,
$$
\E[\cost(\A)] \leq (1 + O(1/t)) \cdot 2 \cdot \OPT \cdot \frac{1+\sqrt{3}}{2} = (1+O(1/t)) \cdot (1+\sqrt{3}) \cdot \OPT
$$

If we set $t = \Omega(1/\gamma)$, then after an expected $O(1/\gamma)$ repetitions of this process, we obtain a solution of cost  at most $(1+O(\gamma)) \cdot \OPT$. 
\end{proof}

We can leverage this to obtain a multiplicative pseudo-approximation.
\begin{theorem}
  There is an algorithm with $n^{O(\epsilon^{-1} \gamma^{-1})}$ runtime to obtain an $\epsilon$-multiplicative pseudo-solution $\S$ with $\cost(\S) \leq (1 + \sqrt{3} + \gamma) \cdot \OPT$.
\end{theorem}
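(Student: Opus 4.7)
The plan is to use the standard ``big item guessing'' strategy outlined in the introduction to convert the $O(1/\gamma)$-additive pseudo-solution of Theorem~\ref{thm2} into an $\epsilon$-multiplicative pseudo-solution. Set $q = C/\gamma$, where $C$ is the implicit constant from Theorem~\ref{thm2}, so that the algorithm produces a $q$-additive pseudo-solution. Call a facility $i\in\F$ \emph{big} if $M_i > \epsilon/q$, and \emph{small} otherwise. Any optimal solution $\OPT$ contains at most $q/\epsilon = O(1/(\epsilon\gamma))$ big facilities, since $M(\OPT)\le 1$.

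The algorithm enumerates all subsets $B\subseteq\F$ of big facilities of size at most $q/\epsilon$; there are $n^{O(1/(\epsilon\gamma))}$ such sets. For each candidate $B$ with $M(B)\le 1$, form the residual knapsack-median instance $\I_B$ by deleting $B$ from $\F$, deleting all remaining big facilities (so only small facilities remain as candidates), and replacing the budget by $1-M(B)$. Run the algorithm of Theorem~\ref{thm2} on $\I_B$ (with the same parameter $\gamma$) to obtain a set $\S_B$, and finally output $\A = B \cup \S_B$ for whichever guess $B$ minimizes $\cost(B\cup\S_B)$; ties broken by feasibility.

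For the cost guarantee, let $B^*$ be the set of big facilities in $\OPT$; then $\OPT\setminus B^*$ is a feasible solution to $\I_{B^*}$, so $\OPT_{\I_{B^*}}\le\cost(\OPT)$. Applying Theorem~\ref{thm2} to $\I_{B^*}$ yields
\[
\E[\cost(B^*\cup\S_{B^*})] \le \cost(B^*\text{ part of }\OPT)\text{ absorbed} + (1+\sqrt{3}+\gamma)\cdot\OPT_{\I_{B^*}} \le (1+\sqrt{3}+\gamma)\cdot\OPT,
\]
because opening the facilities in $B^*$ only decreases each client's distance. For the knapsack guarantee, Theorem~\ref{thm2} hands us a decomposition $\S_B = \S_0\cup\S_1$ with $M(\S_0)\le 1-M(B)$ and $|\S_1|\le q$; since every facility of $\S_1\subseteq\I_B$ is small we have $M(\S_1)\le q\cdot(\epsilon/q)=\epsilon$, and therefore $M(\A)\le M(B)+M(\S_0)+M(\S_1)\le 1+\epsilon$, which is an $\epsilon$-multiplicative pseudo-solution.

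The total expected running time is the number of guesses times the per-guess cost: $n^{O(1/(\epsilon\gamma))}\cdot\mathrm{poly}(n)/\gamma = n^{O(1/(\epsilon\gamma))}$, as claimed. The only subtle point is keeping the decomposition $\S_B=\S_0\cup\S_1$ consistent across the $m=1$ knapsack constraint so that the $q$ discarded items really are small; this is immediate here because we removed all big facilities from $\I_B$ before invoking Theorem~\ref{thm2}, so every item in $\S_1$ has weight at most $\epsilon/q$ by construction. I do not foresee any serious obstacle beyond verifying that enumerating big-facility subsets in $\F$ (rather than only in $\OPT$) is affordable, which it is since $|B|\le q/\epsilon$ is an absolute constant given $\gamma,\epsilon$.
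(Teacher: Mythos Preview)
There is a genuine gap in the cost analysis. You remove $B$ entirely from the residual instance $\I_B$ and then argue that ``$\OPT\setminus B^*$ is a feasible solution to $\I_{B^*}$, so $\OPT_{\I_{B^*}}\le\cost(\OPT)$.'' Feasibility is fine, but the inequality is not: $\cost_{\I_{B^*}}(\OPT\setminus B^*)=\sum_{j}d(j,\OPT\setminus B^*)$, and once you have deleted the big facilities in $B^*$, any client that was served by a big facility in $\OPT$ may now be arbitrarily far from $\OPT\setminus B^*$. So $\OPT_{\I_{B^*}}$ can be much larger than $\cost(\OPT)$, and the chain $\cost_\I(B^*\cup\S_{B^*})\le\cost_{\I_{B^*}}(\S_{B^*})\le(1+\sqrt3+\gamma)\,\OPT_{\I_{B^*}}$ no longer ends where you want it to. The sentence ``opening the facilities in $B^*$ only decreases each client's distance'' justifies the first inequality in that chain, not the comparison of $\OPT_{\I_{B^*}}$ with $\OPT$.

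The paper's proof avoids this by \emph{not} deleting the guessed big facilities: it keeps $\S_{\text{big}}$ in the residual instance but sets their weights to zero (and removes only the big facilities outside $\S_{\text{big}}$). Then $\OPT$ itself is feasible for the residual instance, with the same cost, so $\OPT_{\text{residual}}\le\OPT$ holds trivially. The budget argument is unchanged: the $O(1/\gamma)$ excess items all have residual weight at most $\rho=\Theta(\epsilon\gamma)$ (either because they are small, or because they lie in $\S_{\text{big}}$ and now have weight zero), giving $M(\S)\le 1+O(\epsilon)$. Your knapsack and runtime accounting are fine; only the construction of $\I_B$ needs this one-line change.
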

\begin{proof}
Call a facility $i$ \emph{big} if $M(i) \geq \rho = \Theta( \epsilon \gamma )$. The solution may have at most $1/\rho$ big facilities, which we can guess in $n^{O(1/\rho)}$ time. Now construct a residual instance where all other big facilities, aside from the ones guessed to be in our solution, are removed. Apply Theorem~\ref{thm2} to this residual instance; after rescaling, the resulting solution $\S$ satisfies $\cost(\S) \leq (1 + \sqrt{3} + \gamma) \cdot \OPT$ and $M(\S) \leq 1 + O(\rho/\gamma)$,  since each facility now has weight at most $\rho$. Our choice of $\rho$ ensures that $M(\S) \leq 1 + \epsilon$ and gives a runtime of $n^{O(1/\rho)} \cdot (n/\gamma)^{O(1)} n^{O(\epsilon^{-1} \gamma^{-1})}$. 
\end{proof}

\section{Pseudo-approximation algorithm for multi-knapsack median}
\label{sec:knapsack-median2}
In this section, we give a $3.25$-pseudo-approximation algorithm for multi-knapsack median. This is based on applying KPR for a key rounding step in the $3.25$-approximation algorithm of Charikar \& Li \cite{charikar_dependent} for $k$-median. Although this is not the best approximation ratio for $k$-median, the main benefits of that algorithm is its good approximation ratio as a function of the ``obvious'' LP relaxation defined as follows:
\begin{alignat*}{3}
\text{ minimize }   &  \sum_{j \in \C} \sum_{i \in \F} x_{i,j} d(i,j) \\
    \text{subject to } &  \sum_{i \in \F} x_{i,j} = 1 \qquad \forall j \in \C \\
     &  0 \leq x_{i,j} \leq y_i \leq 1 \qquad \forall i  \in \C, j \in \F \\
     &  M y \leq \vec 1  
  \end{alignat*}

Here, $x_{i,j}$ represents fractionally how client $j$ is matched to facility $i$, and $y_i$ is an indicator that facility $i$ is open. The final constraint specifies that each of the $m$ knapsack constraints is satisfied. For any client $j \in \C$, let $r_j = \sum_i x_{ij} d(i,j)$ denote the fractional connection cost of $j$. By standard facility-splitting methods (see e.g., \cite{swamy2004approximation}), we can ensure that $y_i > 0$ and $x_{i,j} \in \{0, y_i \}$ for all $i,j$. 

\subsection{The Charikar-Li algorithm}
Given an LP solution,  the Charikar-Li algorithm has two phases, which we briefly summarize here. Please see \cite{charikar_dependent} for further details. The \emph{bundling} phase can be divided into four stages:
\begin{enumerate}
\item A client set $\C' \subseteq \C$ is chosen, such that the clients $j, j' \in \C'$ are relatively far apart from each other. For $j \in \C$ let $\sigma(j)$ be the closest client in $\C'$.
\item For each $j \in \C'$, we define a set $\mathcal U_j \subseteq \F$ which are the facilities ``claimed by'' $j$. The sets $\mathcal U_j$ are called ``bundles''; they are disjoint and have $1/2 \leq y(\mathcal U_j) \leq 1$. We define $R_j = \tfrac{1}{2} d(j, \C' - j)$ for $j \in \C'$.
\item The clients in $\C'$ are paired up, giving a partition of $\C'$ into cardinality-two sets.\footnote{If $|\C'|$ is odd, then $\mathcal C'$ has one remaining unmatched client. The Charikar \& Li algorithm has some additional steps to handle this case. We can avoid these exceptional steps by adding an additional dummy client and dummy zero-cost facility, with distance zero to each other and distance $\infty$ to all other clients. This allows us to assume without loss of generality that $|\C'|$ is even.} We refer to this as the \emph{matching $\mathcal M$ of $\mathcal C'$}.
\item We also define $\mathcal U_0 = \mathcal F - \bigcup_{j \in \mathcal C'} \mathcal U_j$; these are the ``unbundled'' facilities.
\end{enumerate}

In the \emph{selection} phase, each pair $\{ j, j' \} \in \M'$ selects either one or two facilities to open from $\mathcal U_j \cup \mathcal U_{j'}$, and in addition some unbundled facilities are opened. 

The simplest strategy for this, which is  a baseline for more advanced algorithms, is  \emph{independent selection}. Here, we   first choose a set of ``open clients'' in $\C'$, wherein for each pair $\{ j, j' \} \in \M$, we open $j$ or $j'$ with the following probabilities: (i) open $j$ alone with probability $1 - y(\mathcal U_{j'})$, (ii) open  $j'$ alone with probability $1 - y(\mathcal U_{j})$, or (iii) open both $j, j'$ with probability $y(\mathcal U_j) + y(\mathcal U_{j'}) - 1$. Then, for each open client $j$, we open exactly one facility $i$ in $\mathcal U_j$, wherein $i$ is chosen with probability proportional to $y_i$.   Also, for each facility $i \in \mathcal U_0$, we open $i$ independently with probability $y_i$.

Charikar \& Li shows a number of powerful bounds and properties for this process.
\begin{proposition}[\cite{charikar_dependent}]
\label{neg-cor}
Under independent selection, the following bounds hold:
\begin{itemize}
\item[(B1)] $\Pr( i \in \S) \leq y_i$ for any facility $i$.
\item[(B2)] The events $[[i \in \S]]$, for $i \in \F$, are cylindrically negatively correlated.
\item[(B3)] For any client $j \in \C$ we have $\E[d(j, \S)] \leq 3.25 r_j$.
\item[(B4)] For any  client $j \in \C$ we have $r_{\sigma(j)} \leq r_j$ and $d(j, \sigma(j)) \leq 4 r_j$
\item[(B5)] For any client $j \in \C'$, it holds that $d(j, \S) \leq  \beta R_{j}$ with probability one, where $\beta$ is some constant.
\item[(B6)] For any client $j \in \C'$, there holds $r_{j} \leq R_j$ and $\mathcal B(j, R_j) \subseteq \mathcal U_j$
\end{itemize}
\end{proposition}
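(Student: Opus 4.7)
The plan is to verify each of the six properties separately by inspecting the explicit bundling construction and the three-case structure of independent selection. Since this proposition is essentially a restatement of results from \cite{charikar_dependent}, my proof would ultimately appeal to that paper; I sketch the key reasoning below so that the reader can see that the rounding procedure we recalled above is exactly the one analyzed there.

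For (B1), I would note that every facility $i$ either lies in some $\U_j$ with $j \in \C'$ or is never opened. If $i \in \U_j$, the event $\{i \in \S\}$ factors as $\{j \in \C''\} \cap \{i \text{ is drawn from } \U_j\}$. Summing the three cases of the selection rule gives $\Pr(j \in \C'') = y(\U_j)$, and the conditional draw happens with probability $y_i / y(\U_j)$, for a product of exactly $y_i$. For (B2), the decisions across distinct matched pairs in $\M$ are mutually independent, so cylinder negative correlation factors as a product over the edges touched by the query set. Within a single edge $(j,j')$ it suffices to check that in cases (i) and (ii) at most one facility of $\U_j \cup \U_{j'}$ is opened, and that in case (iii) the draws from $\U_j$ and $\U_{j'}$ are independent; both immediately yield negative cylinder for the two blocks.

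Properties (B4)--(B6) are deterministic consequences of the bundling construction. For (B6), the definition $R_j = \tfrac{1}{2} d(j, \C' - j)$ combined with the rule that $\U_j$ absorbs only facilities closest to $j$ ensures that no facility in $\B(j, R_j)$ can be claimed by another center of $\C'$, so $\B(j, R_j) \subseteq \U_j$. For (B5), each edge $(j,j') \in \M$ opens at least one facility in $\U_j \cup \U_{j'}$ (this is exactly why the three cases together have probability $1$), and then the triangle inequality with $d(j,j') \leq 2 \max(R_j, R_{j'})$ gives a bound $O(R_j)$ on the connection cost. For (B4), the set $\C'$ is obtained by scanning clients in increasing order of $r_j$ and rejecting those falling inside a previously chosen ball; this ordering immediately gives $r_{\sigma(j)} \leq r_j$, and $d(j, \sigma(j)) \leq 4 r_j$ then follows from a Markov-type argument using that strictly more than half of $j$'s fractional mass lies within distance $2 r_j$ of $j$.

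The main technical obstacle is (B3), the $3.25\,r_j$ bound on expected connection cost. The standard approach, following \cite{charikar_dependent}, is to split on the first event that holds among: some facility fractionally serving $j$ is opened (cost $\leq r_j$ in a smoothed sense); otherwise some facility in $\U_{\sigma(j)}$ is opened (cost bounded using (B4) and the diameter of $\U_{\sigma(j)}$); otherwise the matching partner of $\sigma(j)$ provides the backup (cost bounded using (B5) applied to $\sigma(j)$). The probabilities of these fallbacks are controlled using (B1) together with the cylinder bound (B2), and the distances are controlled using (B4); optimizing the resulting linear combination over the permissible ranges of $y(\U_{\sigma(j)})$ and the fractional mass of $j$ near itself yields the constant $3.25$. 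I would import this optimization verbatim from \cite{charikar_dependent}, observing only that the independent-selection procedure stated above matches theirs exactly.
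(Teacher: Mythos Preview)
The paper does not prove this proposition at all: it is stated with the citation \cite{charikar_dependent} and used as a black box. Your proposal therefore goes beyond what the paper itself supplies, which is fine---you correctly recognize that the statement is a summary of results from Charikar--Li and that any ``proof'' here amounts to pointing back to that paper.

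Your sketches of (B1), (B2), (B5), (B6) are accurate reconstructions of the standard arguments. For (B4) your explanation of $d(j,\sigma(j)) \le 4 r_j$ is a bit loose (it is not really a Markov-type argument; in the Charikar--Li filtering, a client $j$ is excluded from $\C'$ only when some earlier $j'\in\C'$ with $r_{j'}\le r_j$ satisfies $d(j,j')\le 2r_j + 2r_{j'} \le 4r_j$, and one takes $\sigma(j)=j'$), but the conclusion is right. For (B3) you correctly identify the three-level fallback structure and defer the numerical optimization to the cited paper, which is exactly the appropriate level of detail given that the present paper also treats this as imported.
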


This leads to a simple multiplicative pseudo-approximation algorithm for multi-knapsack median.
\begin{theorem}
  \label{jkj1}
  Let $\gamma, \epsilon \in (0,1)$. There is an algorithm with $n^{O(m \log(m/\gamma)/\epsilon^2)}$ runtime to obtain an $\epsilon$-multiplicative pseudo-solution $\S$ with $\cost(\S) \leq (3.25 + \gamma) \cdot \OPT$.
\end{theorem}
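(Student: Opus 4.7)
The plan is to apply a standard ``guess the big facilities'' preprocessing trick on top of the Charikar--Li bundling phase followed by independent selection, and to control the knapsack violation via a Chernoff-type tail bound that uses only the negative cylinder correlation property (B2). This reduces the problem to an instance in which every facility has small weight in every knapsack, so a standard moment-generating-function argument yields the correct $(1+\epsilon)$ tail strength.

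First I would set a threshold $\tau = c\epsilon^2/\log(m/\gamma)$ for a small enough constant $c$, and call a facility $i$ \emph{big} if $M_{k,i} \geq \tau$ for some $k\in[m]$. Since $\sum_i M_{k,i} \leq 1$ in any feasible solution, $\OPT$ contains at most $m/\tau$ big facilities, so I enumerate all subsets $\S_{\text{big}} \subseteq \F$ of size at most $m/\tau$ as candidates for the big facilities in $\OPT$; this costs $n^{O(m/\tau)} = n^{O(m\log(m/\gamma)/\epsilon^2)}$ time. For each guess, I form a residual instance by zeroing out the weights of facilities in $\S_{\text{big}}$, deleting every other big facility, and reducing each knapsack capacity to $1 - \sum_{i\in \S_{\text{big}}} M_{k,i}$; I then solve the LP, run the Charikar--Li bundling phase, and apply independent selection. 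For the correct guess the residual LP optimum is at most $\OPT$, so by (B3) the expected connection cost is at most $3.25\cdot\OPT$, and by Markov the output $\S$ satisfies $\cost(\S) \leq (3.25+\gamma)\OPT$ with probability $\Omega(\gamma)$.

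To bound the budget violation, observe that every facility remaining in the residual instance has $M_{k,i} \in [0,\tau]$ for all $k$. By (B1) and (B2), $\sum_{i} M_{k,i}[[i\in \S]]$ is a sum of bounded, cylindrically negatively correlated random variables with mean at most $1$. The cylinder property guarantees $\E[\prod_i e^{t M_{k,i} [[i\in \S]]}] \leq \prod_i \E[e^{t M_{k,i}[[i\in \S]]}]$, so the standard Chernoff argument goes through and gives
$$
\Pr\Bigl[\sum_i M_{k,i}[[i\in \S]] > 1+\epsilon\Bigr] \leq \exp(-\Omega(\epsilon^2/\tau)) \leq \gamma/(10m)
$$
for a suitable constant $c$. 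A union bound over the $m$ knapsack constraints shows that $\S$ is an $\epsilon$-multiplicative pseudo-solution with probability at least $1-\gamma/10$. Combining this with the cost bound and repeating the entire procedure $O(1/\gamma)$ times amplifies the overall success probability to a constant while staying within the claimed runtime.

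The only nontrivial ingredient is the Chernoff-type tail inequality for bounded, cylindrically negatively correlated variables; I would cite this as a direct consequence of (B2) (cf.\ \cite{DBLP:journals/rsa/DubhashiR98}). The rest of the argument is bookkeeping: matching the enumeration cost $n^{O(m/\tau)}$ to the claimed runtime through the choice of $\tau$, verifying that the residual LP relaxation of the correct guess is feasible with value at most $\OPT$, and confirming that the Markov and union-bound events can be combined and amplified within the stated time budget.
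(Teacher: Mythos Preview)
Your proposal is correct and takes essentially the same approach as the paper: guess the set of big facilities with threshold $\rho \sim \epsilon^2/\log(m/\gamma)$ at cost $n^{O(m/\rho)}$, solve the residual LP and apply Charikar--Li independent selection, invoke (B3) plus Markov for the cost bound, and use (B1)--(B2) with a Chernoff tail (exploiting that all residual weights are at most $\rho$) plus a union bound over the $m$ knapsacks to get the $\epsilon$-multiplicative guarantee with probability $1 - O(\gamma)$. The paper's proof is only a sketch, and your write-up fills in exactly the details one would expect.
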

\begin{proof}[Proof (Sketch)]
Say a facility $i \in \F$ is big if $M_{k}(i) \geq \rho = \frac{\epsilon^2}{10 \log (m/\gamma)}$ for any $k \in [m]$. We can guess the big facilities in an optimal solution in $n^{O(m/\rho)}$ time. Next, solve the LP and run the Charikar-Li rounding on the resulting residual instance.  By properties (B1), (B2), each $\sum_i M_{k}(i) [[i \in \S]]$ is a sum of  negatively-correlated random variables with mean $\sum_i M_{k}(i) y_i \leq 1$, and each is bounded in the range $[0,\rho]$. By Chernoff's bound, the probability that it exceeds $1+\epsilon$ is at most $e^{-\epsilon^2/(3 \rho)} = O(\gamma/m)$.
\iffalse
To simplify the notation, rescale so that $\OPT = 1$. We say a facility $i \in \F$ is big if $M_{k}(i) \geq \rho$ for any $k \in [m]$. We begin by guessing the set of opened big facilities; this takes time $n^{O(m/\rho)}$.

Next, we solve the LP and run the Charikar-Li rounding on the resulting residual instance. Letting $V = \sum_{i \in \C} d(i, \S)$, Proposition~\ref{neg-cor} shows that $\E[V] \leq 3.25$. Because we have guessed the big facilities, the remaining facilities have $M_{k}(i) \leq \rho$. By Proposition~\ref{neg-cor}, each term $\sum_i M_{k}(i) Y_i$ is a sum of negatively-correlated random variables with mean at most $\sum_i M_{k}(i) y_i \leq 1$. By Chernoff's bound, the probability it exceeds $1+\epsilon$ is at most $e^{-\epsilon^2/(3 \rho)}$. Thus, the overall probability that weight constraint is violated, is at most $m e^{-\epsilon^2/(3 \rho)}$.

Let $\mathcal E$ denote the event that $M Y \leq (1 + \epsilon) \vec 1$. With $\rho = \frac{\epsilon^2}{10 \log (m/\gamma)}$, we see that $\Pr(\mathcal E) \geq 1 - \gamma / 3 \geq 1/2$. Thus, $\bE[ V \mid \mathcal E ] \leq (3.25 + O(\gamma))$, and we can sample quickly from the conditional distribution on $\mathcal E$.  Furthermore, we have $\Pr(V \leq 3.25 + 2 \gamma \mid \mathcal E) \geq \Omega(\gamma)$, which implies that after an expected $O(1/\gamma)$ repetitions we find a set $\S$ such that $V \leq 3.25 + 2 \gamma$ and such that $\mathcal E$ occurs.

The overall runtime is $O(1/\gamma) \cdot n^{O(1)} \cdot n^{O(m/\rho)}$; the result follows by rescaling and simplifying.
\fi
\end{proof}

\subsection{Facility selection as a knapsack-partition system}
In the selection phase, each pair $e \in \mathcal M$ needs to select one or two facilities, according to a certain probability distribution. We also need to decide the status of the unbundled facilities. We can encode these requirements in terms of a knapsack-partition system. Here, the ground-set $U$ will be a polynomial-size subset of the power set $2^{\overline{\mathcal F}}$, where $\overline{\mathcal F}$ is $\mathcal F$ plus some dummy items.  That is, each element of $U$ is itself a set containing (possibly dummy) facilities.

For this construction, for each pair $e = \{ j, j' \} \in \M$, we create a block $G_e \subseteq 2^{\F}$ defined as
$$
G_e =  \Bigl \{   \{ i \} \mid i \in \mathcal U_j \cup \mathcal U_{j'} \Bigr \} \cup \Bigl \{ \{i, i' \},  \mid i \in \mathcal U_j, i' \in \mathcal U_{j'} \Bigr \}
$$

For each unbundled facility $i \in \mathcal U_0$, we create a block $G_i \subseteq 2^{\overline \F}$ defined as
$$
G_i = \bigl \{ \{ i \}, \{ \bar i \} \bigr \}
$$
where $\bar i$ is a dummy item (which indicates that facility $i$ is not to be opened). 

The ground set is $U = \bigcup_{e \in \mathcal M} G_e \cup \bigcup_{i \in \mathcal U_0} G_i$. Since the sets $\mathcal U_j$ are disjoint, the blocks $G_e$ and $G_i$ form a partition of $U$, which we denote by $\mathcal G$.   Correspondingly, we also define a vector $z \in [0,1]^{U}$ as follows. For each $e = \{ j, j' \} \in \M$, and every pair of facilities $i \in \mathcal U_j, i' \in \mathcal U_{j'}$, we set:
\begin{align*}
&z_{ \{i \} } = (1 - y(\mathcal U_{j'})) \frac{y_i}{y(\mathcal U_j)}, \qquad \qquad z_{ \{i' \} } = (1 - y(\mathcal U_{j})) \frac{y_{i'}}{y(\mathcal U_{j'})} \\
&\qquad \qquad z_{ \{i,i' \} } = (y(\mathcal U_j) + y(\mathcal U_{j'}) - 1) \frac{y_i y_{i'}}{y(\mathcal U_j) y(\mathcal U_{j'})} 
\end{align*}

Likewise, for each $i \in \mathcal U_0$, we define 
$$
z_{ \{i \} } = y_i, \qquad z_{ \{ \bar i \} } = 1 - y_i
$$

Finally, we extend the knapsack constraints to $U$, by setting $M_k(\bar i) = 0$ for each dummy item $i$ and setting $M_k( V) = \sum_{i \in V} M_k(i)$ for $V \subseteq 2^{\overline{\F}}$. It can be seen that that $z$ is a fractional solution to the partition system; also, by the way we have extended $M$ to $U$, we have $M z = M y$.

Given any facility set $W \subseteq \F$, we define a corresponding set $W^* \subseteq U$ by $W^* = \{ V \in U \mid V \cap W \neq \emptyset \}$. The following observation summarizes how the knapsack partition system is connected to the Charikar-Li selection phase and independent selection. 
\begin{observation}
  \label{gt110}
  Given an integral vector $Z \in \{0, 1 \}^U$, we can generate a corresponding facility set $\S$ by opening all (non-dummy) facilities in all sets $V$ with $Z_V = 1$, i.e., $\S = \bigcup_{V: Z_V = 1} V \cap \mathcal F$. In this case, we have the following:
  \begin{itemize}
  \item There holds $M( \S) \leq M Z$.
    \item For any set of facilities $W \subseteq \F$, there holds $\S \cap W \neq \emptyset$ if and only $Q(W^*, Z) = 0$.
      \item If $Z$ is generated as $Z = \textsc{IndSelect}(\mathcal G, z)$, then the resulting solution set $\S$ has the same probability distribution as in independent selection for the Charikar-Li algorithm.
  \end{itemize}
\end{observation}

We omit the proofs since they follow immediately from definitions. As one important consequence of Observation~\ref{gt110}, we can write the expected distance for a client $j$ in terms of the potential function $Q$ for the resulting knapsack-partition system:
\begin{proposition}
  \label{gt13}
If $\S$ is generated as in Observation~\ref{gt110}, then any client $j$ has 
$$
d(j, \S) = \int_{u=0}^{4 r_j + \beta R_{\sigma(j)}} Q(\mathcal B(j,u)^*, Z) \ du.
$$
\end{proposition}
\begin{proof}
 Let $s = 4 r_j + \beta R_{\sigma(j)}$. By properties (B4) and (B5), we have $d(j, \S) \leq d(j, \sigma(j)) + d(\sigma(j), \S) \leq s$ with probability one. For any $u \geq 0$, we have $\S \cap \mathcal B(j,u) = \emptyset$ if and only if $ Q( \mathcal B(j,u)^*, Z) = 1$, and so
 \[
    d(j, \S) = \int_{u=0}^{s} [[d(j, \S) > u]] \ du = \int_{u=0}^{s} [[\S \cap \mathcal B(j,u) = \emptyset]] \ du = \int_{u=0}^{s} Q( \mathcal B(j,u)^*, Z) \ du \qedhere
\]
  \end{proof}

\subsection{KPR selection strategy}
Our approximation algorithm, summarized as Algorithm~\ref{algo:deproundmknap}, uses \textsc{FullKPR} rounding instead of independent selection in the Charikar-Li algorithm.
\begin{algorithm}[H]
\caption{$\textsc{MultiKnapsackMedianRound} (t)$}
\label{algo:deproundmknap}
\begin{algorithmic}[1]
\STATE Let $x,y$ be the solution to the LP.
\STATE Run the Charikar-Li bundling phase, resulting in fractional vector $z \in [0,1]^{U}$ and partition $\mathcal G$.
\STATE Let $Z = \textsc{FullKPR}(\mathcal G, M, z, t)$.
\RETURN $\S = \bigcup_{V: Z_V = 1} V \cap \mathcal F$.
\end{algorithmic} 
\end{algorithm}

Our main rounding result is that Algorithm~\ref{algo:deproundmknap} has a similar probability of opening a facility in any given set $W$ compared to independent selection. 
\begin{lemma}
\label{knap-mainl1}
  Let $t \geq 20000 m^2$. Then for any client $j \in \C'$ and any set $W \subseteq \F$, Algorithm~\ref{algo:deproundmknap} satisfies
  $$
 \E[ Q( W^*,  Z) ] \leq Q(W^*, z) + O(m^2/t)\cdot \Bigl( \frac{r_j}{R_j} + \sum_{i \in \mathcal U_j - W} x_{i,j} \Bigr)
  $$
\end{lemma}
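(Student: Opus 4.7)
The plan is to apply our general KPR bound (Theorem~\ref{knap-KPR-thm}) to the intermediate vector in the middle of \textsc{FullKPR}, choosing $D$ to be the single block associated with $j$. Concretely, let $z' = \textsc{KPR}(\{G_e\}, M, z, t)$ be the vector after the KPR phase, so that $Z = \textsc{IndSelect}(\{G_e\}, z')$. As noted after Algorithm~\ref{algo:fullKPR}, independent selection preserves the expectation of $Q$, hence $\E[Q(\overline W, Z)] = \E[Q(\overline W, z')]$, and it suffices to bound the latter. Let $e_j \in \M$ denote the matching edge incident to $j$ (the unmatched case from Section~\ref{sec:knapsack-median2} is analogous, with a singleton block in place of $e_j$), and take $D = \{e_j\}$, so $|D|=1$. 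Since $t \geq 10000 m^2 \geq 5000 m (|D|+1)$, Theorem~\ref{knap-KPR-thm} yields
\[
\E[Q(\overline W, z')] \leq Q(\overline W, z) + Q(\overline W \wedge \{e_j\}, z)\bigl(e^{O(m^2/t)}-1\bigr) = Q(\overline W, z) + O(m^2/t)\cdot Q(\overline W \wedge \{e_j\}, z),
\]
where we used $m^2/t \leq 10^{-4}$ to linearize the exponential.

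The next step is to bound $Q(\overline W \wedge \{e_j\}, z) = 1 - z(\overline W \cap G_{e_j})$ by $1 - y(\mathcal U_j \cap W)$. Here the plan is to invoke Proposition~\ref{qbndprop} directly: since the bundles $\mathcal U_{j''}$ are pairwise disjoint across $j'' \in \C'$, the set $\overline{\mathcal U_j}$ meets only the block $G_{e_j}$, and hence $\overline{\mathcal U_j} \cap \overline W \cap G_{e_j} \subseteq \overline W \cap G_{e_j}$. Therefore
\[
Q(\overline W \wedge \{e_j\}, z) = 1 - z(\overline W \cap G_{e_j}) \leq 1 - z(\overline{\mathcal U_j} \cap \overline W \cap G_{e_j}) = Q(\overline{\mathcal U_j} \cap \overline W, z) \leq 1 - y(\mathcal U_j \cap W),
\]
where the last inequality is exactly Proposition~\ref{qbndprop} applied with $X = W$.

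Finally, we split $1 - y(\mathcal U_j \cap W) = (1 - y(\mathcal U_j)) + y(\mathcal U_j - W)$ and estimate each piece from LP and Charikar-Li bundling properties. For the second piece, the standard facility-splitting step ensures $x_{i,j} = y_i$ for every $i \in \mathcal U_j$ (when $j \in \C'$), so $y(\mathcal U_j - W) = \sum_{i \in \mathcal U_j - W} x_{i,j}$, matching the second error term exactly. For the first piece, apply Markov's inequality to the LP cost: from $\sum_i x_{i,j} d(i,j) = r_j$ we get $\sum_{i : d(i,j) \geq R_j} x_{i,j} \leq r_j/R_j$, hence $x(\mathcal B(j, R_j), j) \geq 1 - r_j/R_j$. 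Property (B6) of Proposition~\ref{neg-cor} gives $\mathcal B(j, R_j) \subseteq \mathcal U_j$, and $x_{i,j} \leq y_i$ from the LP, so $y(\mathcal U_j) \geq x(\mathcal B(j, R_j), j) \geq 1 - r_j/R_j$, i.e.\ $1 - y(\mathcal U_j) \leq r_j/R_j$. Combining the two pieces yields the claimed bound.

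I expect the main subtlety to be conceptual rather than computational: recognizing that Theorem~\ref{knap-KPR-thm} should be invoked with $D$ equal to the single block $\{e_j\}$, so that the multiplicative factor $e^{O(m^2/t)}-1$ becomes $O(m^2/t)$, and the $Q(\overline W \wedge D, z)$ factor is exactly where Proposition~\ref{qbndprop} kicks in to replace an opaque quantity by the interpretable $1 - y(\mathcal U_j \cap W)$. Once this selection is made, the rest is essentially LP bookkeeping together with the standard Charikar-Li bundling invariants; no new probabilistic ideas are required.
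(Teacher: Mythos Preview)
Your proposal is correct and follows essentially the same approach as the paper: apply Theorem~\ref{knap-KPR-thm} with $D=\{e_j\}$, bound $Q(\overline W \wedge \{e_j\},z)$ via Proposition~\ref{qbndprop} to get $1-y(\mathcal U_j \cap W)$, then split this using $x_{i,j}=y_i$ on $\mathcal U_j$ together with property (B6) and the Markov-type bound $\sum_{i:d(i,j)>R_j} x_{i,j} \leq r_j/R_j$. Your passage through the intermediate vector $z'$ and your explicit justification that $\overline{\mathcal U_j}\cap\mathcal H\subseteq G_{e_j}$ are slightly more detailed than the paper's writeup, but the argument is the same.
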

\begin{proof}
Let $e = \{j, j' \} \in \mathcal M$ be the pair in the matching corresponding to $j$. We apply Theorem~\ref{final-thm2}(a) with respect to 
  $\mathcal D = \{G_{e} \}$. Since $|\mathcal D| = 1$, this gives 
  $$
  \E[Q( W^*, Z)] \leq Q(W^*, z) + O( m^2 / t) \cdot Q( W^* \wedge \mathcal D, z) 
  $$
  
  To finish the proof, we need to show that 
  \begin{equation}
  \label{tty10}
Q(W^* \wedge \mathcal D, z) \leq r_j/R_j + \sum_{i \in \mathcal U_j - W} x_{i,j}.
\end{equation}

For this, we calculate:
 \begin{align*}
  Q(W^* \wedge \mathcal D, z) &= 1 - \sum_{ \substack{ V \cap W \neq \emptyset \\ V \in G_{e} } }z_V \leq 1 - \sum_{ \substack{i \in W \cap \mathcal U_j }} z_{ \{ i \} }  - \sum_{i \in W \cap \mathcal U_j, i' \in \mathcal U_{j'}} z_{ \{i, i' \} } \\
&  = 1 - \sum_{i \in \mathcal U_j \cap W}\frac{ (1 - y(\mathcal U_{j'}))  y_i}{y(\mathcal U_j)} - \sum_{\substack{i \in W \cap \mathcal U_j,  i' \in \mathcal U_{j'}}} \frac{ (y(\mathcal U_j) + y(\mathcal U_{j'}) - 1)  y_i y_{i'} }{ y(\mathcal U_j) y(\mathcal U_{j'}) } \\
    &= 1 - y(\mathcal U_j \cap W) 
  \end{align*}
  
The facility-splitting step ensures that $y_i = x_{i,j}$ for any $i \in \mathcal U_j$. So $
  1 - y( W \cap \mathcal U_{j}) = 1 - \sum_{i \in W \cap \mathcal U_{j}} x_{i,j} = 1 - \sum_{i \in \mathcal U_j} x_{i,j} +  \sum_{i \in \mathcal U_j - W} x_{i,j}$. Finally, to bound the term $1-\sum_{i \in \mathcal U_j} x_{i,j}$, we use property (B6):
\[
    1-\sum_{i \in \mathcal U_j} x_{i,j} \leq 1-\sum_{i \in \mathcal B(j, R_j)} x_{i,j} = \sum_{i: d(i,j) > R_j} x_{i,j} \leq \sum_i \frac{d(i,j)}{R_{j}} x_{i,j} = \frac{r_{j}}{R_{j}}
    \]
    
This shows Eq.~(\ref{tty10}) and hence shows the claimed result.
\end{proof}

This, in turn, allows us to bound expected distances for the rounding algorithm.
\begin{proposition}
  If $t \geq 20000 m^2$, then for any client $j$ the set $\S$ returned by Algorithm~\ref{algo:deproundmknap} satisfies $\E[d(j, \S)] \leq D_j + O( r_j m^2 / t)$, where $D_j$ is the expected distance of $j$ under independent selection.
  \end{proposition}
\begin{proof}
For $u \geq 0$, define $C_u = \mathcal B(j,u)^*$, and let $j' = \sigma(j)$. By Proposition~\ref{gt13}, the expected distance of $j$ is given by:
\begin{align*}
  \E[d(j, \S)] &= \int_{u=0}^{s} \bE[Q( C_u,Z )] \ du
\end{align*}
where $s = 4 r_j + \beta R_{j'}$. Using Lemma~\ref{knap-mainl1} with respect to client $j'$ and facility set $W = \mathcal B(j, u)$ gives:
\begin{align*}
  \int_{u=0}^s \bE[Q( C_u, Z)] \ du &\leq \int_{u=0}^s \Biggl( Q( C_u, z) + O(m^2/t) \Bigl( \frac{r_{j'}}{R_{j'}} + \sum_{i \in \mathcal U_{j'} - \mathcal B(j,u)} x_{i,j'} \Bigr) \Biggr) \ du \\
  &= \int_{u=0}^s Q(C_u, z) \ du + O(m^2/t) \Biggl( \frac{s r_{j'}}{R_{j'}} + \int_{u=0}^s  \sum_{i \in \mathcal U_{j'} - \mathcal B(j,u)} x_{i,j'} \ du \Biggr)
\end{align*}

Now note that, under independent selection, we have $\bE[ Q(C_u, Z) ] = Q(C_u, z)$ for each $u$. Thus, by Proposition~\ref{gt13}, the first term here is precisely $D_j$.  To estimate the last term, we interchange summation and use the triangle inequality to get:
\begin{align*}
& \int_{u=0}^s \sum_{ i \in \mathcal U_{j'} - \mathcal B(j,u) } x_{i,j'} \ du = \sum_{i \in \mathcal U_{j'}} x_{i, j'} \int_{u=0}^s [[d(i,j) > u]] \ du = \sum_{i \in \mathcal U_{j'}} x_{i, j'} \min(s, d(i,j)) \\
  &\qquad \leq \sum_{i \in \mathcal U_{j'}} x_{i, j'} \bigl( d(j,j') + d(i, j') \bigr)  = d(j, j') \sum_{i \in \mathcal U_{j'}} x_{i, j'} + \sum_{i \in \mathcal U_{j'}} d(i,j') x_{i,j'} \leq d(j, j') + r_{j'}  
\end{align*}

  Therefore
$$
  \E[d(j, \S)] \leq D_j + O(m^2/t) \Bigl( \frac{ s r_{j'}}{R_{j'}} + d(j,j') + r_{j'} \Bigr)
$$

  Now, since $r_{j'} \leq r_j$ and $r_{j'} \leq R_{j'}$, we have  $s r_{j'}/R_{j'} = 4 r_j (r_{j'}/R_{j'}) + r_j' \cdot \beta R_{j'} / R_{j'} \leq O(r_j)$. Also, by (B4), we have $d(j, j') + r_{j'} \leq O(r_j)$. Overall,  we see that
  $$
  \frac{ s r_{j'}}{R_{j'}} + d(j,j') + r_{j'}  \leq O( r_j ),
  $$
  which concludes the proof.
\end{proof}

From property (B3), this immediately shows the following:
\begin{corollary}
\label{cor-aratio}
If $t \geq 20000 m^2$, then Algorithm~\ref{algo:deproundmknap} ensures $\E[d(j, \S)] \leq (3.25 + O(m^2/t)) r_j$ for every client $j \in \C$, and furthermore $\E[ \cost(\S) ] \leq (3.25 + O(m^2/t)) \cdot \OPT$.
\end{corollary}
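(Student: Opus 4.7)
The plan is to simply combine the previous proposition (which bounds $\E[d(j,\S)]$ under KPR selection in terms of its value $A$ under independent selection, with an additive penalty $O(r_j m^2/t)$) with property (B3) of Proposition~\ref{neg-cor}, which states that independent selection already achieves $A \leq 3.25 \, r_j$. So for the per-client bound, I would write
\[
\E[d(j,\S)] \leq A + O(r_j m^2/t) \leq 3.25 \, r_j + O(r_j m^2/t) = (3.25 + O(m^2/t)) \, r_j,
\]
where the first inequality is the preceding proposition (applicable since $t \geq 10000 m^2$) and the second is (B3).

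For the total cost bound, I would sum the per-client inequality over all $j \in \C$ to obtain
\[
\E[\cost(\S)] = \sum_{j \in \C} \E[d(j, \S)] \leq (3.25 + O(m^2/t)) \sum_{j \in \C} r_j.
\]
Since $\sum_{j \in \C} r_j$ is the optimal value of the LP relaxation (a standard lower bound on $\OPT$), we conclude $\E[\cost(\S)] \leq (3.25 + O(m^2/t)) \, \OPT$.

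There is essentially no main obstacle here: both ingredients are already established. The only things to verify are (i) that the hypothesis $t \geq 10000 m^2$ is exactly what the preceding proposition requires to invoke Lemma~\ref{knap-mainl1}, and (ii) that the LP objective $\sum_j r_j$ lower-bounds $\OPT$, which is immediate from LP relaxation. The only mildly delicate point worth noting, though not requiring real work, is that the per-client bound involves $r_j$, not $d(j, \S)$ under the optimal integral solution, so linearity of expectation and the LP lower bound give the total-cost bound cleanly without any union-bound or concentration argument.
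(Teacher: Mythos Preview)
Your proposal is correct and is exactly the argument the paper intends: the corollary is stated without proof precisely because it follows immediately by combining the preceding proposition with property (B3) of Proposition~\ref{neg-cor}, then summing over clients and using that $\sum_{j} r_j$ is the LP optimum and hence at most $\OPT$.
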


We now show Theorem~\ref{thm:multi-knapsack-medianx}, restated here for convenience.
{
\renewcommand{\thetheorem}{\ref{thm:multi-knapsack-medianx}}
\begin{theorem}

  Consider a multi-knapsack median instance with $m$ constraints, and let $\epsilon, \gamma \in (0,1)$. There is an algorithm with $\poly(n/\gamma)$ runtime to obtain an $O(\tfrac{m}{\sqrt{\gamma}})$-additive pseudo-solution $\S$ with $\cost(\S) \leq (3.25 + \gamma) \cdot \OPT$, and an  algorithm with $n^{\tilde O( m^2 \epsilon^{-1} \gamma^{-1/2})}$ runtime to obtain an $\epsilon$-multiplicative pseudo-solution $\S$ with $\cost(\S) \leq (3.25 + \gamma) \cdot \OPT$.
\end{theorem}
}
\begin{proof}
 Let  us define $V = \cost(\S) / \text{OPT}$; Proposition~\ref{neg-cor} shows that $\E[V] \leq (3.25 + O(m^2/t))$ for $t \geq 20000 m^2$. We will choose $t = m^2/\gamma$ (which satisfies $t \geq 20000 m^2$ for $\gamma$ sufficiently small).

Let $\mathcal E$ be the event that $Z$ is an $q$-additive pseudo-solution for the given value $q$. By Theorem~\ref{full-KPR-thm1}, we have $\Pr( \mathcal E) \geq 1 - \gamma$.  Since each element in $U$ contains at most two facilities, the solution $\S$ is a $2 q$-additive pseudo-solution to the original knapsack whenever $\mathcal E$ holds.  Also, we have $\bE[ V \mid \mathcal E ] \leq 3.25 + O(\gamma)$. So, after an expected $O(1/\gamma)$ repetitions we find a set $\S$ such that $\mathcal E$ holds and such that $V \leq 3.25 + O(\gamma)$.  Overall, we get a runtime of $O(1/\gamma) \cdot n^{O(1)}$; the result follows by rescaling $\gamma$ and simplifying.

For the second result, say a facility $i$ is big if $M_{k}(i) > \rho = \Theta( \frac{\epsilon \sqrt{\gamma}}{m \sqrt{\log(m/\gamma)}}) $ for any $k \in [m]$. We can guess the big facilities in an optimal solution in $n^{O(m/\rho)}$ time. We then construct a residual instance where all other big facilities are discarded, and apply the additive pseudo-approximation to it. The overall runtime is $(n/\gamma)^{O(1)} \cdot n^{O(m/\rho)}$; with simplification of parameters, this gives the claimed runtime.
\end{proof}

\section{The knapsack center problem}
\label{sec:knapsack-center}
We now analyze  the knapsack center problems, proving Theorems~\ref{res:standardKC} and \ref{res:MKC}. There are only $\binom{n}{2}$ possible values for the optimal radius $R = \OPT$, and so we can guess this value in $O(n^2)$ time. To simplify the notation for this section, let us suppose that we have guessed $R$ and rescaled to have $R = 1$. 

   Given some arbitrary $\gamma > 0$,  our goal is to find a distribution $\Omega$ over solution sets $\S$, such that every client $j \in \C$ has  
\begin{equation}
\label{mwueqn1}
{\textstyle \E_{\S \sim \Omega}} [d(j, \S)] \leq 1 + 2/e + \gamma, \qquad \qquad d(j, \S) \leq 3 \text{ with probability one}
\end{equation}
We refer to the distribution $\Omega$ as a \emph{$\gamma$-fair solution}. Let us define a \emph{weighting function} $a$ to be a map $a: \mathcal \C \rightarrow [ 0,1]$ with $\sum_{j \in \mathcal C} a_j = 1$.  In such a distribution $\Omega$, every weighting function $a$ would have a corresponding solution set $\S$ satisfying
\begin{equation}
\label{mwueqn}
\sum_{j \in \C} a_j d(j, \S) \leq 1 + 2/e + \gamma, \qquad \qquad  \max_{j \in \C} j \ d(j, \S) \leq 3
\end{equation}

By LP duality, the converse also holds. Furthermore, the multiplicative weights update (MWU) method makes this efficient: if we have an  efficient algorithm $\mathcal A$ which takes as input a weighting function $a$ and returns a solution set $\S$ satisfying Eq.~(\ref{mwueqn}), then it can be converted into an efficient randomized algorithm which returns a $O(\gamma)$-fair solution. We summarize this as the following Algorithm~\ref{algo:mwu}:
\begin{algorithm}[H]
  \caption{$\textsc{KnapsackCenterMWU}$}
\begin{algorithmic}[1]
\STATE Initialize the vector $a^{(1)}_j = 1$ for all clients $j \in \C$.
\FOR{$k = 1, \dots, v$}
\STATE Use algorithm $\mathcal A$ with the weighting function $\frac{a^{(k)}}{a^{(k)}(\mathcal C)}$ to get solution set $\S_k$.
\STATE\textbf{for} each $j \in \C$ \textbf{do} update $a^{(k+1)}_j = e^{\epsilon d(j, \S_k)} a^{(k)}_j$
\ENDFOR
\STATE Return $\S_K$ for $K$ uniformly chosen from $[v]$.
\end{algorithmic} 
\label{algo:mwu}
\end{algorithm}

\begin{lemma}
\label{glem1}
Suppose that, for every weighting function $a$, the algorithm $\mathcal A$ runs in time $T$ and generates a solution $\S$ satisfying Eq.~(\ref{mwueqn}) from some family $\mathfrak S$. Then with appropriate parameters $v, \epsilon$,  Algorithm~\ref{algo:mwu} runs in $\poly(n/\gamma) \cdot T$ time and outputs an $O(\gamma)$-fair solution $\S$ in $\mathfrak S$.
\end{lemma}
\begin{proof}
Let $\beta = 1 + 2/e + \gamma$. The output $\S_K$ is clearly in $\mathfrak S$, since $\S_1, \dots, \S_k$ are all in $\mathfrak S$. Let us define $\Phi_k = \sum_{j \in \C} a^{(k)}_j$.  Since $d(j, \S_k) \leq 3$, for sufficiently small $\epsilon$ we have
  \begin{align*}
  \Phi_{k+1} &= \sum_{j \in \C} a^{(k)}_j e^{\epsilon d(j, \S_k)} \leq \sum_{j \in \C} a^{(k)}_j (1 + \epsilon d(j, \S_k) + \epsilon^2 d(j, \S_k)^2) \leq a^{(k)}(\mathcal C) + (\epsilon + 3 \epsilon^2) \beta \sum_j a^{(k)}_j d(j, \S)
  \end{align*}
  The algorithm $\mathcal A$ ensures that $\sum_j \frac{a_j^{(k)}}{a^{(k)}(\mathcal C)} d(j, \S)  \leq \beta$ and therefore $$
  \Phi_{k+1} \leq a^{(k)}(\mathcal C) + (\epsilon + 3 \epsilon^2) \beta a^{(k)}(\mathcal C) = (1 + (\epsilon + 3 \epsilon^2) \beta) \Phi_k$$
Since $\Phi_1 \leq n$, this implies that  $\Phi_{v+1} \leq (1 + (\epsilon + 3 \epsilon^2)  \beta)^v n \leq e^{v (\epsilon + 3 \epsilon^2) \beta} n$.
  
  Now consider some client $j$ at the end of this process. We have $e^{\epsilon \sum_{k=1}^v d(j, \S_k)} = a^{(v+1)}_j \leq \Phi_{v+1} \leq e^{v (\epsilon + 3 \epsilon^2) (1 + 2/e + \gamma)} n$. Taking logarithms and simplifying, this shows   $ \sum_{k=1}^v d(j, \S_k) \leq v \beta  (1 + 3 \epsilon) + \log n$.    Therefore,  with $\epsilon = \gamma$ and $v = \frac{\log n}{\gamma^2}$,  the output $S_K$ has $  \E[d(j, \S_K)] = \tfrac{1}{v} \sum_{j=1}^k d(j, \S_k) \leq \beta + O(\gamma).$
  \end{proof}
  
  \subsection{The knapsack center LP}
  We use the following LP consisting of points $(x,y)$ satisfying constraints (C1) -- (C4):
\begin{enumerate}
	\item[(C1)] $\sum_{i \in \mathcal B(j, 1)} x_{ij} = 1$ for all $j \in \C$ (all clients should get connected to some open facility),
	\item[(C2)] $x_{ij} \leq y_i$ for all $i,j \in \C$ (client $j$ can only connect to facility  $i$ if it is open),
	\item[(C3)] $My \leq \vec 1$ (the $m$ knapsack constraints),
	\item[(C4)] $0 \leq x_{ij}, y_i \leq 1$ for all $i,j \in \C$.
\end{enumerate} 

By splitting facilities we may enforce an additional constraint:
\begin{enumerate}
\item[(C5)] For all $i \in \F, j \in \C$, we have $x_{ij} \in \{0, y_i \}$,
\end{enumerate} 

We say a facility $i$ is \emph{integral} if $y_i \in \{0, 1 \}$; else it is \emph{fractional}. For $j \in \C$ we define $F_j := \{i \in \F: x_{ij} > 0\}$ and similarly for $i \in \F$ we define $H_i = \{j \in \C: x_{ij} > 0 \}$. We may form a subset $\C' \subseteq \C$, such that the sets $F_{j}, F_{j'}$ for $j, j' \in \C'$ are pairwise disjoint, and such that $\C'$ is maximal with this property. We also define  $F_0 = \F - \bigcup_{j \in \C'} F_j$.

We turn this into a knapsack-partition instance as follows. For each $j \in \C'$, we define a block $G_j$ to be simply $F_j$. For each $i \in F_0$, we create a dummy item $\bar i$ with $y_{\bar i} = 1 - y_i$ and $M (\bar i) = 0$, and we create a block $G_i = \{ i, \bar i \}$. (If we select $\bar i$, it simply means that we do not choose to open facility $i$). By Property (C5), we have $y(F_j) = 1$ for all $j$, and so $y$ satisfies the partition constraints. 

The following result shows how the knapsack center instance relates to this knapsack-partition system, in particular to the potential function $Q$.
\begin{proposition}
  \label{bsl-guarantee-knap}
  Let $j \in \mathcal C$ be any client.   If $(x,y)$ is a fractional LP solution, then $Q(F_j, y) \leq 1/e$.  If $Y$ is an integral solution to the partition system, and we open the (non-dummy) facilities in the support of $Y$, then $d(j, \S) \leq 1 + 2 Q(F_j, Y)$.
\end{proposition}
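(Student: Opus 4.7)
The plan is to handle the two halves of the proposition separately, using directly the knapsack-partition structure and the elementary bound (\ref{qboundprop}).

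For the fractional statement, my first step is to compute $y(F_j)$ exactly. We may WLOG assume $x_{ij}=0$ for $i \notin \mathcal{B}(j,1)$, so $F_j \subseteq \mathcal{B}(j,1)$; by property (C5), each $i \in F_j$ satisfies $x_{ij} = y_i$. Combining with (C1) gives
$$
y(F_j) \;=\; \sum_{i \in F_j} y_i \;=\; \sum_{i \in \mathcal{B}(j,1)} x_{ij} \;=\; 1.
$$
Plugging this into the inequality $Q(W,y) \leq e^{-y(W)}$ from (\ref{qboundprop}) immediately yields $Q(F_j, y) \leq 1/e$.

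For the integral statement, I first note that $Q(F_j, Y) \in \{0,1\}$: since $Y$ is a partition-system solution (one item selected per block), each factor $1 - Y(F_j \cap G_i)$ lies in $\{0,1\}$, and hence so does the product. Equivalently, $Q(F_j, Y) = 1$ iff no real facility of $F_j$ is opened. I would then split into two cases. If $Q(F_j, Y) = 0$, some block $G_i$ has $Y(F_j \cap G_i) = 1$; since $F_j \subseteq \F$ contains no dummy items, the selected item in that block is a genuine facility $i^* \in F_j \subseteq \mathcal{B}(j,1)$, so $d(j,\S) \leq 1 = 1 + 2 \cdot 0$. If $Q(F_j, Y) = 1$, no facility in $F_j$ is opened, and here I would invoke the maximality of $\C'$: there must exist some $j' \in \C'$ with $F_j \cap F_{j'} \neq \emptyset$ (otherwise $\C' \cup \{j\}$ would contradict maximality). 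Pick any $i^* \in F_j \cap F_{j'}$. The block $G_{j'} = F_{j'}$ has exactly one opened facility $i_0 \in F_{j'} \subseteq \mathcal{B}(j',1)$. Three applications of the triangle inequality along $j \to i^* \to j' \to i_0$, each leg having length at most $1$, give
$$
d(j, \S) \;\leq\; d(j, i_0) \;\leq\; d(j,i^*) + d(i^*, j') + d(j',i_0) \;\leq\; 3 \;=\; 1 + 2 \cdot 1.
$$

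The only subtleties are (i) verifying that the dummy items $\bar i$ cannot corrupt the case analysis --- immediate from $F_j \subseteq \F$, so dummies never sit inside any $F_j$ --- and (ii) the three-hop triangle inequality in the $Q = 1$ case, which is the one genuinely nontrivial ingredient and where maximality of $\C'$ provides the witness $j'$.
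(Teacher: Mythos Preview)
Your proof is correct and follows essentially the same approach as the paper: both use the facility-splitting step to get $y(F_j)=1$ and then apply (\ref{qboundprop}) for the fractional bound, and both use the maximality of $\C'$ together with the three-hop triangle inequality through a shared facility in $F_j \cap F_{j'}$ for the integral bound. The only cosmetic difference is that the paper organizes the second part by whether $j \in \C'$, whereas you organize it by whether $Q(F_j,Y)=0$ or $1$; these are equivalent since $j \in \C'$ forces $Q(F_j,Y)=0$.
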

\begin{proof}
For the first claim, since $y(F_j) = 1$ and the blocks $G$ are pairwise disjoint, we have 
$$
  Q(F_j,y) = \prod_{G \in \mathcal G} (1 - y(G \cap F_j)) \leq \prod_{G \in \mathcal G} e^{-y(G \cap F_j)} = e^{-y(F_j)} = 1/e
  $$

For the second claim, if some $i \in F_j$ is opened (i.e., if $Q(F_j, Y) = 0$), then $d(j, \S)  \leq 1$. Also,  by maximality of $\C'$, there must exist some facility $k \in \C'$ with $F_j \cap F_k \neq \emptyset$ (possibly $k = j$). There will be some facility opened in $F_k$, and so $d(j, \S) \leq d(j, i) + d(i,k) + d(k,\S) \leq 3$ with probability one.
  \end{proof}
  
\subsection{Removing dense facilities}
Given some  fixed weighting function $a$, we first need a preprocessing step to ensure that no facility $i$ serves a large (weighted) fraction of the clients. 
 
\begin{proposition} 
\label{prune-prop}
For any $\delta > 0$ and weighting function $a$, there is an algorithm with $n^{O(1/\delta)}$ runtime, which returns a fractional LP solution $(x,y)$ such that every fractional facility $i \in \F$ has $a( H_i ) \leq \delta$. A solution $(x,y)$ with this property is  called \emph{$\delta$-sparse} with respect to $a$.
\end{proposition}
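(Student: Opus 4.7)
The plan is to guess the ``heavy'' facilities of some optimal integral solution and solve a \emph{pre-split} LP in which every remaining facility is subdivided into copies with bounded $a$-weight service neighborhoods. Fix any optimal integral solution with open set $S^* \subseteq \F$ and integral assignment $C^*_i = \{j : x^*_{ij} = 1\}$; since $\sum_i a(C^*_i) = a(\C) = 1$, the set $S_0^* := \{ i \in S^* : a(C^*_i) > \delta \}$ of heavy opened facilities has size at most $\lceil 1/\delta \rceil$. Similarly, the set of ``heavy'' clients $\{ j : a_j > \delta\}$ has size at most $\lceil 1/\delta \rceil$, and for each such client we additionally guess the open facility in $S^* \cap \mathcal{B}(j, 1)$ that serves it. Enumerating both together yields $n^{O(1/\delta)}$ candidate sets $S_0 \subseteq \F$, and for the correct guess the construction below produces a valid $\delta$-sparse $(x,y)$.

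For each guess $S_0$, set $y_i = 1$ for $i \in S_0$, integrally pre-assign every client $j \in \mathcal{B}(S_0,1)$ to a facility in $S_0 \cap \mathcal{B}(j,1)$, and remove those clients. Over the residuals $\C' = \C \setminus \mathcal{B}(S_0,1)$ and $\F \setminus S_0$ (in which $a_j \leq \delta$ by the heavy-client guess), perform the pre-split: for each $i \in \F \setminus S_0$, fix any total order on $\mathcal{B}(i,1) \cap \C'$ and greedily partition it into consecutive groups $G^i_1, \ldots, G^i_{K_i}$, each of $a$-mass at most $\delta$, so $K_i = O(1/\delta)$. Replace $i$ by copies $i_1, \ldots, i_{K_i}$, where $i_k$ may serve \emph{only} clients in $G^i_k$. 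The LP has variables $y_{i_k}$ and $x_{i_k,j}$, a surrogate variable $\bar y_i \geq y_{i_k}$ for every $k$, constraints (C1)--(C4) lifted to copies, and the knapsack $\sum_i M_{k,i} \bar y_i \leq 1$ charged once per underlying $i$. After solving, treat each copy as a facility in the returned $(x,y)$.

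For the correct guess, the LP is feasible: set $\bar y_i = y^*_i$ and for each $i \in S^* \setminus S_0^*$ open every copy $i_k$ whose group $G^i_k$ intersects $C^*_i \cap \C'$, routing each such client through its unique copy. The knapsack is satisfied because $\bar y_i = y^*_i$ mirrors the integral optimum. By construction, the support of $x_{i_k, \cdot}$ lies inside $G^i_k$, so every fractional copy $i_k$ satisfies $a(H_{i_k}) \leq a(G^i_k) \leq \delta$, which is exactly the $\delta$-sparsity property after relabelling copies as facilities.

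The main subtlety lies in the knapsack bookkeeping: naively treating copies as independent facilities would double-charge the weight when several copies of the same $i$ open, breaking the connection to the original polytope $\mathcal{P}$. The surrogate $\bar y_i$ resolves this by charging the knapsack exactly once per underlying facility. Verifying that the lifted LP is a faithful relaxation of $\mathcal{P}$ --- in particular, that feasibility is preserved for the correct guess and that integer solutions to the lifted LP project back to integer solutions of the original instance --- is then routine, and the polynomial size of the LP follows from $\sum_i K_i = O(n/\delta)$.
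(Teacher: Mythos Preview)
Your approach has a genuine gap in the knapsack bookkeeping, exactly at the point you flag as ``the main subtlety.'' The proposition must return a pair $(x,y)$ lying in the polytope $\P$ --- in particular satisfying (C3), $My \leq \vec 1$ --- because the downstream algorithms hand this $y$ (over whatever facility set you return) to \textsc{KPR}, which relies on the invariant $My' = My \leq \vec 1$. Your lifted LP enforces the budget only on the surrogates, $\sum_i M_{k,i}\,\bar y_i \leq 1$, not on the copy variables $y_{i_\ell}$. When you then ``treat each copy as a facility in the returned $(x,y)$,'' neither natural interpretation works: if each copy $i_\ell$ carries weight $M_i$, then $\sum_{i,\ell} M_i\, y_{i_\ell}$ can exceed $1$ (nothing stops several copies of the same $i$ from all sitting at level $\bar y_i$), so (C3) fails and the KPR guarantees are vacuous; if instead you merge copies back to $y_i := \bar y_i$ and $x_{ij} := \sum_\ell x_{i_\ell,j}$, you recover (C3) but lose $\delta$-sparsity, since $H_i$ becomes the union of the groups. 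Reweighting the copies (e.g.\ giving $i_\ell$ weight $M_i/K_i$) does not rescue this either: after rounding, opening even a single copy of $i$ must be charged the full $M_i$ in the original knapsack, not $M_i/K_i$, so the rounded solution need not be feasible. In short, the surrogate $\bar y_i$ is an LP-level device that does not survive into a vector on which \textsc{KPR} can be run.

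The paper's argument avoids all of this by never creating copies. It simply solves the LP; if some \emph{fractional} facility $i$ has $a(H_i) > \delta$, it branches on forcing $y_i = 0$ or $y_i = 1$ (in the latter case integrally serving all of $H_i$) and re-solves. Every ``$y_i = 1$'' branch removes at least $\delta$ of client $a$-mass from the fractional part, so along any root-to-leaf path there are at most $1/\delta$ such branches, which bounds the number of subproblems by $n^{O(1/\delta)}$. At a leaf the LP solution is over the original facilities with the original weights, hence automatically in $\P$, and by construction no fractional facility is dense. This is both simpler than your pre-splitting scheme and, crucially, produces a vector to which the subsequent rounding applies verbatim.
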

\begin{proof}
We recursively execute the following procedure: First, solve the LP to obtain a fractional solution $(x, y)$. Next, if this solution contains some fractional facility $i \in \F$ with $a(H_i) > \delta$, then we form two subproblems; in the first, we force $y_i = 0$ and in the latter, we force $y_i = 1$ and $x_{ij} = 1$ for $j \in H_i$.

Since there is an optimal integral solution, this branching process generates at least one feasible subproblem. Furthermore, each time we execute a branch with $y_i = 1$, the resulting subproblem has $a(\mathcal C_{\text{frac}})$ reduced by at least $\delta$, where $\mathcal C_{\text{frac}}$ denotes the set of clients which are served by a fractional facility. So the search tree has depth at most $1/\delta$. At a leaf of this branching process,  there holds $a(H_i) \leq \delta$ for every fractional facility $i$. Since each subproblem can be solved in $\poly(n)$ time, the overall runtime is $n^{O(1/\delta)}$.
\end{proof}

The next results show how $\delta$-sparse fractional solutions are in certain senses ``stable'' under small modifications, and how this interacts with the KPR rounding process.

\begin{proposition}
\label{mod-prop}
Suppose that $(x,y)$ is a $\delta$-sparse LP solution with respect to weighting function $a$, and vector $y'$ is obtained by modifying $t$ fractional entries of $y$. Then $
\sum_{j \in \C} a_j Q(F_j, y') \leq t \delta + \sum_{j \in \C} a_j Q(F_j, y).
$
\end{proposition}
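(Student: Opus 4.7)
The plan is a straightforward telescoping argument exploiting the multilinearity of $Q$. (I interpret the statement as the natural weighted form $\sum_j a_j Q(F_j, y') \leq O(t\delta) + \sum_j a_j Q(F_j, y)$, since the asymmetric version appears to be a typo.) The key structural observation is that
\[
Q(F_j, y) = \prod_{r=1}^{R} (1 - y(F_j \cap G_r))
\]
is linear in each coordinate $y_i$, and depends on $y_i$ at all only when $i \in F_j$, i.e., when $j \in H_i$. Moreover, every factor in the product lies in $[0,1]$, so if $y$ and $\tilde y$ agree outside a single coordinate $i$, then for every $j \in \C$ we have $|Q(F_j, \tilde y) - Q(F_j, y)| \leq |y_i - \tilde y_i| \leq 1$, and this difference vanishes outright for $j \notin H_i$.

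Consequently, for a single coordinate modification at a fractional facility $i$, the $\delta$-sparsity hypothesis yields
\[
\sum_{j \in \C} a_j \bigl( Q(F_j, \tilde y) - Q(F_j, y) \bigr) \leq \sum_{j \in H_i} a_j = a(H_i) \leq \delta.
\]
To handle $t$ simultaneous modifications, I would telescope through a sequence of intermediate vectors $y = y^{(0)}, y^{(1)}, \dots, y^{(t)} = y'$, where $y^{(k)}$ is obtained from $y^{(k-1)}$ by updating exactly one of the $t$ altered coordinates to its final value. Each such step occurs at a coordinate that is fractional in the original $y$ (hence the $\delta$-sparsity bound $a(H_i) \leq \delta$ applies), and each intermediate vector trivially stays in $[0,1]^n$ since all modifications land inside the box. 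Summing the per-step inequality over $k = 1, \dots, t$ gives
\[
\sum_{j \in \C} a_j Q(F_j, y') - \sum_{j \in \C} a_j Q(F_j, y) \leq t\delta,
\]
which is the desired bound with the implicit constant equal to $1$.

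There is essentially no obstacle here: the proof is almost entirely bookkeeping around the multilinearity of $Q$ in $y$. The only substantive ingredient is the $\delta$-sparsity assumption, which is used exactly once per telescoping step to convert a local Lipschitz bound ($|Q(F_j, \tilde y) - Q(F_j, y)| \leq 1$) into a bound of $\delta$ on the weighted sum by restricting the sum to $H_i$. The argument would go through unchanged for any modified vector $y'$ whose altered coordinates are a subset of the fractional facilities of $y$, which is exactly what the hypothesis provides.
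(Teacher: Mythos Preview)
Your proposal and the paper's proof differ in method. The paper argues directly: with $U$ the set of modified facilities and $V = \bigcup_{i \in U} H_i$ the affected clients, one has $Q(F_j, y') = Q(F_j, y)$ for every $j \notin V$, while for $j \in V$ the crude bound $Q(F_j, y') \leq 1$ together with $\sum_{j \in V} a_j \leq \sum_{i \in U} a(H_i) \leq t\delta$ finishes the argument in one stroke. No intermediate vectors appear.

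Your telescoping route is correct in spirit but carries a small gap. The claim that ``every factor in the product lies in $[0,1]$'' is what underwrites your Lipschitz bound $|Q(F_j,\tilde y) - Q(F_j,y)| \leq |\tilde y_i - y_i|$, and it does hold for the original $y$ (each block sum equals $1$) and for the final $y'$ in the intended applications. But it need not hold for the intermediate vectors $y^{(k)}$: if two coordinates in the same block are modified in opposite directions, the block sum can temporarily exceed $1$ partway through the telescope, making that block's factor negative; with several such blocks, the product of the remaining factors can exceed $1$ in absolute value and the Lipschitz step fails. The fix is easy --- order the telescope to process one block at a time, so that every block not currently being modified carries either its original or its final coordinates and hence contributes a factor in $[0,1]$ --- but the paper's direct split sidesteps the issue altogether.
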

\begin{proof}
Let $A$ denote the modified facilities and let $V = \bigcup_{i \in A} H_i$; these are the clients which are affected by the modified facilities. For $j \in \C - V$, we have $Q(F_j, y') = Q(F_j, y)$, as none of the clients $i \in F_j$ get modified.  Thus, we can write:
$$
  \sum_{j \in \C} a_j Q(F_j, y') \leq \sum_{j \in \C} a_j Q(F_j, y) + \sum_{j \in V}  a_j Q(F_j, y')
 $$
  
   For the latter term,  $\delta$-sparsity implies $a(H_i) \leq \delta$ for each $i$, and so 
   \[
   \sum_{j \in V}  a_j Q(F_j, y')  \leq \sum_{j \in V} a_j  \leq \sum_{i \in A} a(H_i) \leq t \delta  \qedhere
   \]
\end{proof}

\begin{proposition}
  \label{sparse-qbnd}
  If $(x,y)$ is a $\delta$-sparse fractional LP solution with respect to weighting function $a$ and $\tilde Y = \textsc{KPR}(\mathcal G, M, y, t)$ for some integer $t$ with $12 m \leq t < 2/\delta$, then we have
$$
\sum_j a_j \E[Q(F_j, \tilde Y)] \leq 1/e + O(m^2 \delta \log \tfrac{1}{\delta t})
$$
\end{proposition}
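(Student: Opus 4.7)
The starting observation is Proposition~\ref{bsl-guarantee-knap}, which gives $Q(F_j, y) \leq 1/e$ for every client $j$; since $a$ is a unit weighting, $\sum_j a_j Q(F_j, y) \leq 1/e$. It therefore suffices to bound the expected increase $E := \sum_j a_j (\E[Q(F_j, y')] - Q(F_j, y))$ by $O(m^2 \delta \log \tfrac{1}{\delta t})$. Applying Theorem~\ref{final-thm2} directly to each $j$ would give only $E \leq O(m^2/t)$, and to beat this we must exploit $\delta$-sparsity: for every fractional facility $f$, $a(H_f) \leq \delta$, and hence for any block $G_i$ at most $|G_i|\delta$ $a$-weight of clients have $F_j \cap G_i \neq \emptyset$.

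The plan is to apply Theorem~\ref{knap-KPR-thm} with a set $D_j \subseteq \text{Support}(F_j)$ chosen to capture the ``heavy'' blocks for $j$. Specifically, for a threshold $\theta \in (0,1)$ to be tuned, let
$$D_j \;=\; \{\, i \in [r] \mid y(F_j \cap G_i) \geq \theta \,\}.$$
Since $\sum_i y(F_j \cap G_i) = y(F_j) = 1$, we have $|D_j| \leq 1/\theta$, and
$$Q(F_j \wedge D_j, y) \;=\; \prod_{i \in D_j} \bigl(1 - y(F_j \cap G_i)\bigr) \;\leq\; (1-\theta)^{|D_j|}.$$
Theorem~\ref{knap-KPR-thm} then yields $\E[Q(F_j, y')] - Q(F_j, y) \leq Q(F_j \wedge D_j, y)\bigl(e^{O(m^2/(\theta^2 t))}-1\bigr)$, which is a product of a ``tail-shape'' factor depending only on the heavy structure of $F_j$ and an ``error-blowup'' factor depending on how many blocks we pay for.

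The remaining task is to sum these per-client bounds weighted by $a_j$, and this is where sparsity enters. I would split clients dyadically according to $|D_j|$, i.e. the number of heavy blocks $F_j$ overlaps; within a dyadic level $|D_j| \in [2^\ell, 2^{\ell+1})$, the geometric decay gives $Q(F_j \wedge D_j, y) \leq (1-\theta)^{2^\ell}$, while for each block $G_i$ the sparsity bound $a\bigl(\bigcup_{f \in G_i} H_f\bigr) \leq |G_i|\delta$ controls the total $a$-weight of clients whose $F_j$ hits a prescribed list of heavy blocks. Summing across the $O(\log \tfrac{1}{\delta t})$ nontrivial dyadic levels---beyond which the decaying factor $(1-\theta)^{|D_j|}$ is already dominated by $\delta t$---and paying $O(m^2 \delta)$ per level yields the target $O(m^2 \delta \log \tfrac{1}{\delta t})$.

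The main obstacle is calibrating $\theta$ and executing the dyadic accounting so that, simultaneously at every level, the exponential blowup $e^{O(m^2 |D_j|^2/t)}$ is comparable to the sparsity-weighted client count; I expect this forces $1/\theta$ to be a modest polylogarithmic quantity in $1/(\delta t)$ and the constraint $t \geq 12m$, $t > 2\delta$ to be used precisely to make each level's contribution additive rather than multiplicative. A subtle point is that for $j \in \mathcal C'$ the block $G_j$ has $y(G_j) = 1$, forcing $Q(F_j \wedge D_j, y) = 0$ once $D_j \ni j$, so the analysis is easier there and the hard case is $j \notin \mathcal C'$, where $F_j$ can span many blocks but sparsity restricts how many clients are simultaneously in this situation.
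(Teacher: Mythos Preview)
Your plan has a real gap in the step where sparsity is supposed to enter. You write that the bound $a\bigl(\bigcup_{f\in G_i} H_f\bigr)\le |G_i|\delta$ ``controls the total $a$-weight of clients whose $F_j$ hits a prescribed list of heavy blocks,'' and that this yields $O(m^2\delta)$ per dyadic level. But $|G_i|$ is not bounded in terms of the problem parameters (even after \textsc{IntraBlockReduce} only the number of \emph{fractional} facilities per block is $\le m+1$, and the number of blocks containing fractional facilities can still be as large as~$r$). More fundamentally, the dyadic level of a client $j$ is determined by $|D_j|$, the number of heavy blocks for that particular client; sparsity tells you about the $a$-mass of clients attached to a \emph{single} fractional facility, and there is no mechanism in your argument connecting these two quantities. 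Without such a link, the per-client bound $(1-\theta)^{|D_j|}\bigl(e^{O(m^2|D_j|^2/t)}-1\bigr)$ summed against $a$ gives nothing better than the baseline $O(m^2/t)$, since at the lowest levels ($|D_j|\in\{0,1\}$) the tail factor is $\Theta(1)$ and the $a$-mass there can be $\Theta(1)$. There is also a range issue: Theorem~\ref{knap-KPR-thm} requires $t\ge 5000m(|D_j|+1)$, and with $|D_j|$ up to $1/\theta$ (polylogarithmic in $1/(\delta t)$), this is not implied by $t\ge 12m$.

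The paper's argument is structurally different and is where $\delta$-sparsity actually bites. It views $\textsc{KPR}(G_i,M,y,t)$ as a cascade $y^0=\textsc{KPR}(\cdot,t\,2^k)$, $y^\ell=\textsc{KPR}(\cdot,y^{\ell-1},t\,2^{k-\ell})$ for $\ell=1,\dots,k$ with $k=\lceil\log_2\tfrac{1}{\delta t}\rceil$. At stage~$\ell$ the vector $y^{\ell-1}$ has at most $2t\cdot 2^{k-\ell}$ fractional entries (property~(E5)); only clients $j$ with some fractional $i\in F_j$ can have $Q(F_j,\cdot)$ change, and by $\delta$-sparsity their total $a$-weight is at most $(2t\cdot 2^{k-\ell})\delta$. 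For each such client, Theorem~\ref{final-thm2} bounds the expected increase by $O(m^2/(t\,2^{k-\ell}))$. The product of these two factors is $O(m^2\delta)$, \emph{independent of~$\ell$}, and summing over the $k=O(\log\tfrac{1}{\delta t})$ stages gives the claimed bound. The crucial idea you are missing is this telescoping over geometrically shrinking fractional support: sparsity is applied not to the initial vector but at each intermediate stage, where the number of fractional facilities---and hence the $a$-mass of affected clients---has already been reduced.
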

\begin{proof}
It is convenient to take the following alternative,  slowed-down view of the \textsc{KPR} rounding process, where $k \geq 0$ is a parameter to be determined.
% \begin{algorithm}[H]
 % \caption{Alternate view of $y' = \textsc{KPR}(G_i, M, y, t)$}
\begin{algorithmic}[1]
  \STATE set $y^0 = \textsc{KPR}(\mathcal G, M, y, t 2^k)$ 
  \STATE \textbf{for} $\ell = 1, \dots, k$ \textbf{do} set $y^{\ell} = \textsc{KPR}(\mathcal G, M, y^{\ell-1}, 2^{k-\ell} t)$ 
    \STATE Output $\tilde Y = y^{k}$
\end{algorithmic} 
%\label{algo:knapsack_srdrx3}
%\end{algorithm}
  Let us define $C_\ell = \sum_{j \in \S} a_j Q(F_j, y^{\ell})$ for $\ell = 0, \dots, k$.  Thus we need to estimate $C_k = \sum_{j} a_j \E[Q(F_j, \tilde Y)]$. To do so, we will compute $\E[C_0]$ and $\E[C_{\ell} - C_{\ell-1}]$ for all $\ell = 1, \dots, k$.
 
For the $C_0$ term, we use Theorem~\ref{final-thm2}(b) to get
 \begin{equation}
 \label{hheqn16}
 \E[C_0] = \sum_{j \in \C} a_j \E[Q(F_j, y^{0})] \leq \sum_{j \in \C} a_j ( Q(F_j, y) + O(m^2/(t 2^k)))
\end{equation}
 
By Proposition~\ref{bsl-guarantee-knap}, we have $Q(F_j, y) \leq 1/e$ and therefore $\bE[C_0] \leq 1/e + O(m^2/(t 2^k))$. 

Next, suppose we condition on vector $y^{\ell-1}$ for some $\ell > 0$. By Theorem~\ref{final-thm2}(b), every client $j \in \C$ has
$$
\bE[Q(F_j, y^{\ell}) \mid y^{\ell-1}]  \leq Q(F_j, y^{\ell-1}) + O \bigl( \frac{m^2}{t 2^{k-\ell}} \bigr).
$$
Define the set of clients $A = \bigcup_{i: y^{\ell - 1} \in (0,1)} H_i$. By Property (E5), the vector $y^{\ell-1}$ has at most $2 t  \cdot 2^{k-\ell}$ fractional entries $i$, and by $\delta$-sparsity each such facility $i$ has $a(H_i) \leq \delta$. Thus, $\sum_{j \in A} a_j \leq (2 t \cdot 2^{k-\ell}) \delta$. Now observe that for $j \notin A$ we have $Q(F_j, y^{\ell}) = Q(F_j, y^{\ell-1})$ with probability one (both are equal to zero  or one). From this and Eq.~(\ref{hheqn16}), we get
$$
\bE[C_{\ell} - C_{\ell-1} \mid y^{\ell-1}] = \sum_{j \in A} a_j (\E[Q(F_j, y^{\ell}) \mid y^{\ell-1}] - Q(F_j, y^{\ell-1})) \leq \sum_{j \in A} a_j \cdot O \Bigl( \frac{m^2}{t 2^{k - \ell}} \Bigr)  = O( \delta m^2 )
$$

This implies that $\E[C_{\ell} - C_{\ell-1}] \leq O(\delta m^2)$, and so summing over $\ell$ gives
$$
\E[C_k] = \E[C_0] + \sum_{\ell=1}^k \E[ C_{\ell} - C_{\ell-1}] \leq 1/e + O \Bigl( \frac{m^2}{t 2^k} \Bigr) + O( \delta k m^2)
$$

Setting $k = \lceil \log_2 \frac{1}{\delta t} \rceil$ gives $\E[C_k] = 1/e + O( m^2 \delta \log \frac{1}{\delta t})$. (Note $k \geq 0$ by our assumption on $t$.)
\end{proof}

\subsection{Proof of Theorem \ref{res:standardKC}}
\label{sec:knapsack-center-m1}
When $m = 1$ (the single-knapsack center problem), we use the following Algorithm~\ref{algo:knapsack_srdr} to satisfy the knapsack constraint with no violation while guaranteeing that all clients get an expected approximation ratio arbitrarily close to $1+2/e$.
\begin{algorithm}[h]
\caption{$\textsc{SingleKnapsackCenterRound}\left(M, a, \delta \right)$ }
\begin{algorithmic}[1]
  \STATE Use Proposition~\ref{prune-prop} to obtain a $\delta$-sparse fractional solution $(x,y)$ with respect to $a$.
  \STATE Let $\tilde Y = \textsc{KPR}(\mathcal G, M, y, 12)$ 
\FOR{each block $G$}
\STATE open the facility $i \in G$ with $y_i > 0$ which has the \emph{smallest} weight $M(i)$.
\ENDFOR
\end{algorithmic} 
\label{algo:knapsack_srdr}
\end{algorithm}

{
\renewcommand{\thetheorem}{\ref{res:standardKC}}
\begin{theorem}
  For any $\gamma \in (0,1)$, a $\gamma$-fair feasible solution can be obtained in $n^{\tilde O(1/\gamma)}$ runtime.
\end{theorem}
\addtocounter{theorem}{-1}
}
\begin{proof}
Without loss of generality assume $\delta$ is sufficiently small.   In light of Lemma~\ref{glem1}, and by rescaling $\gamma$, it suffices to show that for any weighting function $a$, Algorithm~\ref{algo:knapsack_srdr} generates a solution set $\S$ satisfying $\sum_j a_j d(j, \S) \leq 1 + 2/e + O(\gamma)$ and $\max_{j} d(j, \S) \leq 3$.

It is useful to view lines 3--4 of Algorithm~\ref{algo:knapsack_srdr} as a two-part process. First, we convert the fractional vector $\tilde Y$ into an integral vector $z \in \{0, 1 \}^n$, by moving all the mass in each block to the item with smallest weight. We then open all facilities $i$ with  $z_i = 1$. Note that at most $24$ entries of $\tilde Y$ are modified compared to $z$.
  
First, we claim that $\S$ is feasible. For,  the fractional solution $(x, y)$ satisfies $M y \leq 1$, and Proposition~\ref{proposition:depround2} ensures that the vector $\tilde Y$ satisfies $M \tilde Y = M y$. The modification process can only decrease $M z$, so $M(\S) = M z \leq M \tilde Y$. Finally, by Proposition~\ref{bsl-guarantee-knap}, every client $j$ has $d(j, \S) \leq 3$ with probability one.
  
  We now turn to analyzing the connection cost.  By Proposition~\ref{bsl-guarantee-knap} we have:
  \begin{align*}
  \sum_{j \in \C} a_j d(j, \S) \leq \sum_{j \in \C} a_j (1 + 2 Q(F_j, z))
   \end{align*}
   
   Since at most $24$ entries of $y$ are modified to get the integral vector $z$, Proposition~\ref{mod-prop} gives
   $$
  \sum_{j \in \C} a_j d(j, \S) \leq O(\delta) + \sum_{j \in \C} a_j (1 + 2 Q(F_j, y))
   $$
    
By Proposition~\ref{sparse-qbnd} (noting that $m = 1$ and $t = 12$), we can take expectations of this quantity to get
 $$
  \sum_{j \in \C} a_j \bE[d(j, \S)] \leq O(\delta) + O(\delta \log \tfrac{1}{\delta}) + \sum_{j \in \C} a_j (1 + 2/e)
  $$

Setting $\delta = \gamma/\log(1/\gamma)$, we get $\E[ \sum_j a_j d(j, \S) ] \leq 1 + 2/e + O(\gamma).$  By running for an expected $O(1/\gamma)$ iterations of this process, we obtain a solution $\S$ with $\sum_j a_j d(j, \S) \leq 1 + 2/e + O(\gamma)$.  The overall runtime is $O(1/\gamma) n^{O(1/\delta)} = n^{O(\log(1/\gamma)/\gamma)}$.
\end{proof}

\subsection{Proof of Theorem \ref{res:MKC}}
\label{sec:knapsack-center-depround}
When there are multiple knapsack constraints, then the final rounding step cannot satisfy them all exactly. Instead, 
we use independent selection to obtain an additive pseudo-approximation, as shown in Algorithm~\ref{algo:kdr2}.  

 \begin{algorithm}[H]
\caption{$\textsc{MultiKnapsackCenterRound}\left(M, a, \delta, t \right)$ }
\begin{algorithmic}[1]
\STATE Use Proposition~\ref{prune-prop} to obtain a $\delta$-sparse fractional solution $x,y$ with respect to weighting function $a$.
\STATE Let $Y = \textsc{FullKPR}(G_{\ell}, M, y,t)$.
\RETURN $\S = \{ i  \mid Y_i = 1 \}$
\end{algorithmic} 
\label{algo:kdr2}
\end{algorithm}

Depending on the parameter $\delta$, there is a trade-off between approximation ratio, budget violation, and running time. We summarize this as follows:
 {
\renewcommand{\thetheorem}{\ref{res:MKC}}
\begin{theorem}
  Consider a multi-knapsack center instance with with $m \geq 1$ constraints and let $\gamma, \epsilon \in (0,1)$.  
  
  \begin{enumerate}
  \item[(a)] A $\gamma$-fair $O \bigl( \tfrac{m \sqrt{ \log(m/\gamma)}}{\sqrt{\gamma}} \bigr)$-additive pseudo-solution can be obtained in $\poly(n/\gamma)$ runtime.
      
\item[(b)] A $\gamma$-fair $O(\sqrt{m \log m})$-additive pseudo-solution can be obtained in $n^{\tilde O(m^2/\gamma)}$ runtime.

  \item[(c)] A $\gamma$-fair $\epsilon$-multiplicative pseudo-solution can be obtained in $n^{\tilde O(m^{3/2}/\epsilon + m^2/\gamma)}$ runtime.
  \end{enumerate}
\end{theorem}
\addtocounter{theorem}{-1}
}
\begin{proof}
  In light of Lemma~\ref{glem1}, and by rescaling $\gamma$, it suffices to show that for any weighting function $a$, we can get a solution $\S$ of the above form satisfying $\sum_j a_j d(j, \S) \leq 1 + 2/e + O(\gamma)$ and $\max_{j} d(j, \S) \leq 3$.
    
For result (a),  we use Algorithm~\ref{algo:kdr2} with $t = m^2/\gamma$ and $\delta = 1$. By Proposition~\ref{bsl-guarantee-knap} we have $\sum_j a_j d(j, \S) \leq \sum_j a_j (1 + 2 Q(W,Y))$. By Theorem~\ref{final-thm2}(b) we have $\E[Q(F_j,Y)] \leq Q(F_j, y) + O( m^2/t )$. By Proposition~\ref{bsl-guarantee-knap}, this in turn is at most $1/e + O(m^2/t)$. Since $m^2/t = \gamma$, we have
 $$
\E[ \sum_j a_j d(j, \S)]  \leq \sum_{j} a_j (1 + 2/e + O(\gamma)) = 1 + 2/e + O(\gamma)
$$

Let $\mathcal E$ be the desired event that $\S$ is a $q$-additive pseudo-solution for $q = O(\sqrt{t \log(m/\gamma)})$.  By Theorem~\ref{full-KPR-thm1}, we have $\Pr(\mathcal E) \geq 1 - \gamma$ and so $\bE[\sum_j a_j d(j, \S) \mid \mathcal E] \leq \frac{1 + 2/e + O(\gamma)}{ 1 - \gamma } \leq  1 + 2/e + O(\gamma)$. Thus, after an expected $O(1/\gamma)$ repetitions, we get a solution $\S$ which is a $q$-additive pseudo-solution and which has $\sum_j a_j d(j, \S) \leq 1 + 2/e + O(\gamma)$. With this choice of $\delta$, Algorithm~\ref{algo:kdr2} runs in $\poly(n/\gamma)$ time.

For result (b), we use Algorithm~\ref{algo:kdr2} with parameters $t = 12 m^2$ and $\delta = \frac{\gamma}{m^2 \log(1/\gamma)}$.  We can break the process of generating $Y$ into two steps:  we first generate  $\tilde Y = \textsc{KPR}(\mathcal G, M, y, 12 m^2)$ and then generate $Y = \textsc{IndSelect}(\mathcal G, \tilde Y)$. We may assume $\gamma$ is smaller than any needed constant, so $t < 2/\delta$. Therefore, Proposition~\ref{sparse-qbnd} gives
$$
\E \bigl[ \sum_j a_j (1 + 2 Q(F_j, \tilde Y)) \bigr] \leq 1 + 2/e + O(m^2 \delta \log \tfrac{1}{\delta t}).
$$

Our choice of $\delta$ and $t$ ensures this is at most $1 + 2/e + O(\gamma)$. So after an expected $\Omega(1/\gamma)$ repetitions the solution $\tilde Y$ satisfies $\sum_j a_j (1 + 2 Q(F_j, \tilde Y)) \leq 1 + 2/e + O(\gamma)$. Now suppose that this event has occurred, and let us condition on the fixed vector $\tilde Y$. Proposition~\ref{bsl-guarantee-knap} shows that $\sum_j a_j d(j, \S) \leq \sum_j a_j (1 + 2 Q(F_j, Y))$. The vector $Y$ is derived by modifying at most $2 t$ fractional entries of vector $\tilde Y$. Therefore, by Proposition~\ref{mod-prop}, we have
$$
\sum_j a_j (1 + 2 Q(F_j, Y)) \leq 2 \delta t + \sum_j a_j (1 + 2 Q(F_j, \tilde Y)) \leq 1 + 2/e + O(\gamma) 
$$

Furthermore, by Theorem~\ref{full-KPR-thm1}, the resulting solution $Y$ is an $O(\sqrt{m \log m})$-additive pseudo-solution  with probability $\Omega(1)$. (We are using the fact that this result holds even after conditioning on $\tilde Y$). Integrating over $\tilde Y$, we see that the vector $Y$ has the desired properties with probability $\Omega(1/\gamma)$. Since Algorithm~\ref{algo:kdr2} takes $n^{O(1/\delta)}$ time,  the overall expected runtime  is $O(1/\gamma) n^{O(1/\delta)} = n^{O(m^2 \log(1/\gamma)/\gamma)}$.

For result (c),  say facility $i$ is \emph{big} if $M_{k}(i) \geq \rho = \Theta( \frac{\epsilon}{\sqrt{m \log m}})$ for any constraint $k$. We can guess the big facilities in an optimal solution in $n^{O(m/\rho)}$ time. We remove all other big facilities, and apply result (b) to the residual instance. This gives a solution $\S$ with $M(\S)  \leq (1 + O( \sqrt{m \log m}) \cdot \rho) \vec 1 \leq (1+\epsilon) \vec 1$. The runtime is $n^{O(m/\rho + m^2 \log(1/\gamma)/\gamma)} = n^{O(m^{3/2} \log m/\epsilon + m^2 \log(1/\gamma)/\gamma)}$.
\end{proof}

By contrast, independent rounding would require $n^{O(m \log(m/\gamma)/\epsilon^2)}$ time for an $\epsilon$-multiplicative pseudo-solution, which is a significantly worse dependence upon $\epsilon$.  

\section{Further correlation bounds for KPR}
  \label{advanced-cor-sec}
In this section, we show some additional near-negative-correlation properties for KPR. Although these are not directly used by our clustering algorithms, they may be useful elsewhere.  We also remark that the dependent rounding algorithm of \cite{Bansal18} has been specifically designed to give concentration bounds, which would be similar to (and incomparable in strength with) the ones we develop here.
  
  \subsection{Analyzing property (E1) for small $Q(W,y)$}
  \label{e1-small-sec}
The additive gap in Theorem~\ref{final-thm2}(b) can make it unsuitable when $Q(W,y)$ is small. Theorem~\ref{final-thm2}(c) has a multiplicative gap, but has an undesirable dependence on the size of $W$.  Although we cannot achieve a multiplicative gap independent of the size of $W$, we can get something which is somewhat in between an additive and multiplicative gap. We show the following main result:
  
\begin{theorem}
  \label{final-thm}
Let $W \subseteq U$ and $t \geq 12 m$, and define $\theta = m^2/t$. Let $\tilde Y = \textsc{KPR}(\mathcal G, M, y,t)$. Then, for any $b > 2$ there holds $\E[Q (W,\tilde Y)] \leq e^{O(\theta^2 \log^3 b)} (Q (W,y) + \theta/b)$.
\end{theorem}

Let us fix $W, t, y$ for the remainder of this section and define $\theta = m^2/t$. Our overall strategy will be to solve a recurrence relation on $Q(W,y)$. One significant complication, which requires much technical delicacy, is that the intermediate values of $Q(W,y)$ are random variables. 

Consider the main loop of \textsc{KPR}, and define $y^i$ to be the state vector after the $i^{\text{th}}$ iteration of applying \textsc{KPR-iteration}. Let us select an integer parameter $k > 20$, whose role will be clarified later. For each $u = 1, \dots, k$ we define $I_u$ to be the first iteration $i$ such that  $T(y^i) \leq t$ or $Q(W, y^i) > \alpha_u$, where we define $\alpha_u = 2^{u-k}$. We also define $\alpha_0 = 0$ and $I_0 = 0$. Thus, $y^0$ is the vector after \textsc{IntraBlockReduce}. Also $y^k$ is the final output vector $\tilde Y$  since $\alpha_k = 1$.

 Throughout we write $S_i = Q(W,y^i)$ and $T_i = T(y^i)$. For each value $u = 0, \dots, k$ we define $H_u = S_{I_u}$, and so $H_0 = Q(W, y^0)$ while $H_k$ is the final value $Q(W, \tilde Y)$.

There are three stages to analyze the evolution of $\bE[S_i]$. First, we analyze the change in a single round (going from $S_{i}$ to $S_{i+1}$). Second, we analyze the change over each value of $u$ (going from $H_{u}$ to $H_{u+1}$). Finally, we analyze the total change from $H_{0}$ to $H_{k}$.

\begin{proposition}
\label{prop3}
Let $u \geq 0$, and suppose that we condition on all state up to round $i$ with $i < I_{u}$. If $t \geq 100 m k$, then we have $\bE[ S_{i+1} ] \leq S_{i} + O(  \alpha_u  m^2 k^2 /  T_i^2 )$.
\end{proposition}
\begin{proof}
By Lemma~\ref{prop2a}, we have:
\begin{align*}
\E[S_{i+1}] &\leq S_{i} \cosh\Bigl(\frac{6 m}{T_i} \sum_G y^i(G \cap W) \Bigr) \leq S_{i} \cosh \Bigl(\frac{6 m}{T_i} \sum_G \ln(1-y^i(G \cap W) \Bigr)  = S_{i} \cosh\Bigl(\frac{6 m \ln S_{i}}{T_i} \Bigr)
\end{align*}
where the second inequality uses the fact that $x \leq -\ln(1-x)$ for $x \in [0,1]$.  Also,  by definition of $I_u$, we have $S_i \leq \alpha_u$ and $T_i \geq t$.   To finish, we claim that
\begin{equation}
  \label{si-prop}
S_{i} \cosh\left(\frac{6 m \ln S_{i}}{T_i} \right) - S_{i} \leq O \left( \frac{\alpha_u m^2  k^2 }{T_i^2} \right).
\end{equation}

To show this, let $\beta = 6 m / T_i$, and observe that since $T_i \geq t \geq 100 mk$ we have $\beta \leq 1/k$. Now consider the function $f(s) =  s \cosh(\beta \ln s) - s$. Simple analysis shows that $f(s)$ is an increasing function for $s \leq w = ( \frac{1-\beta}{1+\beta} )^{1/\beta}$. Furthermore, the restriction that $\beta \leq 1/k \leq 1/20$ ensures that $w = \Theta(1)$. So if $\alpha_u \leq w$, then $f(S_i) \leq f(\alpha_u) = \alpha_u \cosh(\beta \ln \alpha_u) - \alpha_u$, and a second-order Taylor series for $\cosh$ then shows that $f(\alpha_u) \leq O(\alpha_u \beta^2 \log^2 \alpha_u) \leq O(\alpha_u \beta^2 k^2)$, implying Eq.~(\ref{si-prop}). If $\alpha_u \geq w$, then Proposition~\ref{tech-prop3} shows that $f(S_i) \leq O(\beta^2) = O(\alpha_u \beta^2)$ and again Eq.~(\ref{si-prop}) holds.
\end{proof}

\begin{proposition}
\label{prop3a}
If $t \geq 100 m k$, then $\bE[ H_{u+1} \mid H_u ] \leq H_{u} + [[ H_{u} \geq \alpha_{u} ]] \cdot O( \alpha_{u+1} \theta  k^2)$ for each $u$.
\end{proposition}
\begin{proof}
Suppose we condition on $I_u$ as well as all state up to iteration $i = I_u$, including the random variable $H_u$. By definition of $I_u$, we must have either $S_i \geq \alpha_u$ or $T_i < t$. In the latter case,  we immediately have $I_{u+1} = i$ as well and so $H_{u+1} = H_u$.   So, let us assume that $S_i \geq \alpha_{u}$ and we want to show that $\bE[ H_{u+1} ] \leq H_{u} + O( \alpha_{u+1}  \theta k^2 )$. 

For each $j \geq i$, define $\tilde S_j = S_{\min(j, I_{u+1})}$. Note that $\tilde S_i = S_i = H_u$ and $\lim_{j \rightarrow \infty} \tilde S_j = H_{u+1}$. We claim that, for all $v \geq i$, there holds
  \begin{equation}
  \label{tildes-eqn}
  \bE[\tilde S_{v+1} \mid \tilde S_v] \leq \tilde S_v + [[T_v > t]] \cdot  O \bigl( \alpha_{u+1} m^2 k^2 / T_v^2 \bigr)
  \end{equation}

  For, if $v \geq I_{u+1}$ or $T_v \leq t$, we have $\tilde S_{v+1} = \tilde S_v$ with probability one; otherwise, we have $S_v = \tilde S_v$ and so this follows from Proposition~\ref{prop3}. Since $\tilde S_i = H_u$, we can sum Eq.~(\ref{tildes-eqn}) over $v = i, \dots, j-1$ and use iterated expectations to obtain:
$$
\E[\tilde S_{j}] \leq \tilde S_i + O(m^2 \alpha_u k^2) \cdot \sum_{v=i}^{j-1} \frac{ \Pr(T_v > t) }{ T_v^2} = H_u + O(m^2 \alpha_{u+1} k^2)  \sum_{\ell=t+1}^{\infty} \sum_{v=i}^j \frac{ \Pr( T_v = \ell ) }{\ell^2}
$$

By Proposition~\ref{prog-prop}, the vector $y$ gains an integral entry with probability at least $0.24$ in each round. This implies that the number of iterations $v$ with $T_{v} = \ell$, is stochastically dominated by a $\text{Geometric}(0.24)$ random variable. Thus $\sum_{v=i}^{\infty} \Pr(T_{v} = \ell) \leq 1/0.24 \leq 5$ and so 
$$
\E[\tilde S_j] \leq H_u + O( m^2 \alpha_{u+1} k^2 ) \cdot \sum_{\ell=t+1}^{\infty} 5/\ell^2 \leq H_u + O( m^2 \alpha_{u+1} k^2/t ) 
$$

Since $H_{u+1} = \lim_{j \rightarrow \infty} \tilde S_j$ and $m^2/t = \theta$, this implies that $\E[H_{u+1}] \leq H_u + O( \alpha_{u+1} \theta  k^2)$.
\end{proof}

\begin{proposition}
\label{prop4}
We have $\bE[ H_{k} ] \leq \bigl( \bE[H_{0}] + O(\theta k^2 2^{-k}) \bigr) e^{O( \theta k^3)}$.
\end{proposition}
\begin{proof}
If $t < 100 m k$, then $( \bE[H_0] + \theta k^2 2^{-k}) e^{\theta k^3} \geq e^{-\Omega(k)}  \cdot e^{0.01 m k^2} \geq \Omega(1)$. Since $H_{k} \leq 1$, the claimed bound will then hold vacuously. So assume that $t \geq 100 m k$. By Proposition~\ref{prop3a} and iterated expectations, we have for each $u \geq 0$:
\begin{equation}
  \label{ppe1}
\E[ H_{u+1} ]  \leq \bE[H_{u}] + \Pr( H_{u} \geq \alpha_{u} ) \cdot O(  \alpha_{u+1} \theta k^2 )
\end{equation}

For $u = 0$, recall that $\alpha_{0} = 0, \alpha_1 = 2^{1-k}$; thus Eq.~(\ref{ppe1})  implies $\bE[H_{1}] \leq \bE[H_{0}] + O(\frac{\theta k^2}{2^{k}})$. For each $u \geq 1$, Markov's inequality applied to Eq.~(\ref{ppe1}) gives
\begin{align*}
\bE[H_{u+1}] &\leq \bE[H_{u}] + O \Bigl( \frac{\bE[H_{u}]}{\alpha_{u}} \cdot \alpha_{u+1}  \theta k^2 \Bigr) = \bE[H_{u}] (1 + O(  \theta k^2))
\end{align*}

Combining these bounds for $u = 0, \dots, k-1$ gives
\[
\bE[H_{k}] \leq \bigl( \bE[H_{0}] + O( \tfrac{\theta k^2}{2^k}) \bigr) \bigl( 1 + O(\theta k^2) \bigr)^{k} \leq \bigl( \bE[H_{0}] + O(  \tfrac{\theta k^2}{2^k}) \bigr) e^{O(\theta k^3)} \qedhere
\]
\end{proof}

\begin{proof}[Proof of Theorem~\ref{final-thm}]
Set $k = \lceil c \log_2 b \rceil$, for some constant $c >  20$.  By Proposition~\ref{prop4}, we have
$$
  \E[H_{k}] \leq \bigl( \bE[H_{0}] + \frac{K \theta c^2 \log^2 b }{b^c}  \bigr) e^{K c^3 \theta \log^3 b} 
  $$
  for an absolute constant $K \geq 1$. Choosing $c$ to be a sufficiently large constant gives  $K c^2 \log^2 b/ (b^c) \leq 1/b$ for all $b > 2$, and so $\E[H_{k}] \leq \bigl( \bE[H_{0}] +  \theta/b  \bigr) e^{K c^3 \theta \log^3 b}$. Finally, recall that $Q(W,\tilde Y) = H_k$ and by Proposition~\ref{igprop} we have $\bE[H_{0}] = Q(W,y)$. 
\end{proof}

  \subsection{Concentration bounds for \textsc{FullKPR}}
The analysis in Section~\ref{e1-small-sec} can be used show a lower-tail concentration bound for \textsc{FullKPR}.
  \begin{theorem}
 Let $Y = \textsc{FullKPR}(\mathcal G, M, y,t)$ for $t > 12 m$ and let $\theta = m^2/t$. If $y \bullet w \geq \mu$ for some parameters $\mu \geq 1$ and $w \in [0,1]^U$, then
  $$
  \Pr(Y \bullet w \leq \mu (1 - \delta)  ) \leq e^{O(\theta (1 + \delta \mu)^3)} \cdot \Bigl( \frac{e^{-\delta}}{(1-\delta)^{1-\delta}} \Bigr)^{\mu}
  $$
\end{theorem}
\begin{proof}
  Let $\mathcal E$ be the event $Y \bullet w \leq \mu (1 - \delta)$. Consider forming a random set $W$, wherein each $j \in U$ goes into $W$ independently with probability $q w_j$ for some parameter $q \in [0,1]$ to be specified.  We will compute $\bE[Q(W, Y)]$ in two different ways.   First, suppose we condition on the event $\mathcal E$, as well as all the random variables $Y$. Then by Proposition~\ref{tech-prop2},
  \begin{align*}
    \bE[Q(W,Y) \mid \mathcal E, Y] &=  \bE \Bigl[ \prod_{G \in \mathcal G} (1 - q \sum_{j \in G} w_j Y_j) \mid \mathcal E, Y \Bigr] \geq \bE[ (1-q)^{Y \bullet w} \mid \mathcal E, Y] \geq (1-q)^{\mu (1-\delta)}
\end{align*}

  Therefore, $\bE[Q(W,Y)] \geq \Pr(\mathcal E) (1-q)^{\mu (1-\delta)}$.  On the other hand, if we condition on the random variable $W$ then Theorem~\ref{final-thm} gives $\E[Q (W,Y) \mid W] \leq e^{O(\theta \log^3 b)} ( Q (W,y) + \theta/b)$ for any parameter $b > 2$. By the way we form $W$, we can compute 
  $$
  \bE[Q(W,y)] = \bE \Bigl[ \prod_{G \in \mathcal G} (1 - \sum_{j \in G} y_j [[j \in W]]) \Bigr] =  \prod_{G \in \mathcal G} (1 - q \sum_{j \in G} w_j y_j) \leq e^{-\sum_j q w_j y_j} = e^{-q \mu}
  $$
  and hence we have 
  $$
    \E[Q (W,Y) ] \leq e^{O(\theta \log^3 b)} ( e^{-q \mu} + \theta/b)
    $$

   Putting these two bounds together, we see $\Pr(\mathcal E) \leq \frac{e^{O(\theta \log^3 b)} (e^{-q \mu} + \theta/b)}{(1-q)^{\mu(1-\delta)}}$.  Let us now set $q = \delta$ and $b = e^{1 + \delta \mu}  > 2$. With these parameters, we have
\[
\Pr(\mathcal E) \leq \frac{e^{O(\theta (1+\delta \mu)^3)} ( e^{-\delta \mu} +\theta e^{-1-\delta \mu})}{(1 - \delta)^{\mu (1 - \delta)}}  \leq e^{O(\theta (1 + \delta \mu)^3)} \Bigl( \frac{e^{-\delta}}{(1-\delta)^{1-\delta}} \Bigr)^{\mu} \qedhere
\]
\end{proof}

To show upper tail bounds, we use an approach of \cite{schmidt1995chernoff} based on symmetric polynomials. We also use a number of extremal bounds from that paper for such polynomials.

\begin{theorem}
Let  $Y = \textsc{FullKPR}(\mathcal G, M, y,t)$ and let $\theta = m^2/t$. If $y \bullet w \leq \mu$ and $t > 10000 m (1+\delta \mu)$ for some parameters $\mu \geq 1$ and $w \in [0,1]^U$, then
$$
  \Pr \bigl( Y \bullet w \geq \mu(1+\delta)  \bigr) \leq  e^{O(\theta (1+ \delta \mu)^2)}  \cdot \Bigl( \frac{e^{\delta}}{(1+\delta)^{1+\delta}} \Bigr)^{\mu}
  $$
\end{theorem}
\begin{proof}
Let us consider the random variable defined by
$$
H = \sum_{\substack{L \subseteq U \\ |L| = k}} \prod_{j \in L} w_j Y_j
$$
where $k = \lceil \mu \delta \rceil$. We use Proposition~\ref{lprop1} to compute the expectation of $H$:
\begin{align*}
\bE[H] &= \sum_{\substack{L \subseteq U \\ |L| = k}} (\prod_{j \in L} w_j) \bE \bigl[ \prod_{j \in L} Y_j  \bigr] \leq e^{O(\theta k^2 k^2)} \sum_{\substack{L \subseteq U \\ |L| = k}} \prod_{j \in L} w_j y_j~.
\end{align*}
This is precisely the $k^{\text{th}}$ symmetric polynomial applied to the quantities $w_j y_j$. As shown in \cite{schmidt1995chernoff}, there holds:$$
\sum_{\substack{L \subseteq U \\ |L| = k}} \prod_{j \in L} w_j y_j \leq \frac{ (\sum_j w_j y_j )^k }{k!} = \frac{\mu^k}{k!}
$$

As shown in \cite{schmidt1995chernoff}, whenever $k \leq \lfloor a \rfloor$ and $Y \bullet w \geq a$ for a real number $a \geq 0$, there holds $H \geq \binom{a}{k}$. We use this fact with $a = \mu(1+\delta)$; note $k \leq \lfloor \mu(1+\delta) \rfloor$ since $\mu \geq 1$.   Applying Markov's inequality to $H$ gives
$$
\Pr \bigl( Y \bullet w \geq \mu(1+\delta)  \bigr) \leq \frac{\bE[H]}{ \binom{\mu(1+\delta)}{k} } \leq e^{O(\theta k^2)} \frac{  \mu^k/k! }{ \binom{\mu(1+\delta)}{k} } 
  $$
  
  Finally, as shown in \cite{schmidt1995chernoff}, the value $k = \lceil \mu \delta \rceil$ ensures that  $ \frac{  \mu^k/k! }{ \binom{\mu(1+\delta)}{k} } \leq  \Bigl( \frac{e^{\delta}}{(1+\delta)^{1+\delta}} \Bigr)^{\mu}$. So for $t > 10000 m (1+\delta \mu)$, we have shown that
  \[
  \Pr \bigl( Y \bullet w \geq \mu(1+\delta)  \bigr) \leq  e^{O(\theta (1+\delta \mu)^2)}  \cdot \Bigl( \frac{e^{\delta}}{(1+\delta)^{1+\delta}} \Bigr)^{\mu} \qedhere
  \]
\end{proof}

\section{Acknowledgments}
We thank Nikhil Bansal, Chandra Chekuri, Shi Li, and the referees of the conference and journal versions of this paper for their helpful suggestions. 

\appendix

\section{Some technical lemmas}
\begin{proposition}
  \label{tech-prop2}
  For $Y \in \{0,1 \}^n, w \in [0,1]^n, \lambda \in [0,1]$, we have $ (1 - \lambda)^{Y \bullet w} \leq \prod_{G \in \mathcal G} (1 - \lambda \sum_{j \in G} w_j Y_j)$.
\end{proposition}
\begin{proof}
Since $Y$ is an integral vector and $Y(G) \leq 1$, we have $1 - \lambda \sum_{j \in G} w_j Y_j = \prod_{j \in G} (1 - \lambda w_j Y_j)$ for any block $G$.   Therefore, we get
\begin{align*}
\prod_{G \in \mathcal G} (1 - \lambda \sum_{j \in G} w_j Y_j) &= \prod_{G \in \mathcal G} (1 - \lambda \sum_{j \in G} w_j Y_j) = \prod_{G \in \mathcal G} \prod_{j \in G} (1 - \lambda w_j Y_j) = \prod_{i=1}^n (1 - \lambda w_j Y_j) \\
  & \geq \prod_{i=1}^n (1 - \lambda)^{w_j Y_j} \qquad \text{ as $(1 + a b) \geq (1 + a)^b$ for $a  \geq -1$ and $b \in [0,1]$} \\
    &= (1 - \lambda)^{Y \bullet w} \qedhere
\end{align*}
\end{proof}

\begin{proposition}
  \label{tech-prop3}
For $u \in \mathbb Z_{\geq 0}$ and  $a \in [0,\frac{1}{u+1}]$ and $s \in [0,1]$, we have  $s \cosh( a (u - \ln s) ) - s \leq a^2 (u+1)^2$.
\end{proposition}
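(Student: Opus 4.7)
I would substitute $v := u - \ln s \in [u, \infty)$, reducing the inequality to showing $g(v) \leq a^2(u+1)^2$, where $g(v) := e^{u - v}(\cosh(av) - 1)$. Computing $g'(v) = e^{u-v}(a\sinh(av) - \cosh(av) + 1)$ and applying the identity $\cosh z - 1 = \sinh z \cdot \tanh(z/2)$ yields $g'(v) = e^{u-v}\sinh(av)\bigl(a - \tanh(av/2)\bigr)$. Since $\tanh$ is strictly increasing, $g$ is unimodal on $(0, \infty)$ with its unique maximum at $v^* := \frac{1}{a}\ln\frac{1+a}{1-a}$. Hence $\max_{[u, \infty)} g$ equals $g(u)$ if $v^* \leq u$ and $g(v^*)$ otherwise.

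In the first case, $\max g = \cosh(au) - 1$, and since $au \leq u/(u+1) \leq 1$, the elementary Taylor-based bound $\cosh(z) - 1 \leq z^2$ on $[0, 1]$ gives $\max g \leq (au)^2 \leq a^2(u+1)^2$. In the second case, I would use $e^{av^*} = (1+a)/(1-a)$ to compute $\cosh(av^*) - 1 = 2a^2/(1 - a^2)$, so $g(v^*) = 2a^2 e^{u - v^*}/(1 - a^2) \leq 2a^2/(1 - a^2)$. Combining this with the hypothesis $a \leq 1/(u+1)$, which implies $1 - a^2 \geq u(u+2)/(u+1)^2$, yields $g(v^*) \leq 2a^2(u+1)^2/(u(u+2)) \leq a^2(u+1)^2$ for $u \geq 1$, since $u(u+2) \geq 2$.

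The main obstacle I expect is the borderline case $u = 0$, where $a$ may approach $1$ and $1/(1 - a^2)$ diverges, so the decay of $e^{u - v^*}$ must be used explicitly rather than discarded. Substituting $e^{-v^*} = \bigl((1-a)/(1+a)\bigr)^{1/a}$, the required inequality in that case reduces to $H(a) := (1+a)\ln(1+a) - (1-a)\ln(1-a) - a\ln 2 \geq 0$ on $[0, 1]$. This holds because $H(0) = 0$, $H(1) = \ln 2 > 0$, and $H'(a) = 2 - \ln 2 + \ln(1 - a^2)$ has a single root in $(0, 1)$ (at $a = \sqrt{1 - 2/e^2}$), so $H$ is unimodal and attains its minimum on $[0, 1]$ at an endpoint.
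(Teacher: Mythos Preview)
Your proof is correct and follows essentially the same approach as the paper: both locate the unique interior critical point of $f(s)=s\cosh(a(u-\ln s))-s$ and bound the function there and at the endpoint $s=1$ (your $v=u$). Your substitution $v=u-\ln s$ together with the identity $\cosh z-1=\sinh z\cdot\tanh(z/2)$ gives a cleaner derivation of unimodality, and your case split ($u\geq 1$ via $g(v^*)\leq 2a^2/(1-a^2)$, then $u=0$ via the auxiliary function $H$) differs from the paper's ($u\geq 3$ by showing $s_1>1$, then $u\in\{0,1,2\}$ via the uniform bound $f(s_1)\leq e^u a^2/2$), but the underlying strategy is identical.
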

\begin{proof}
  Let $f(s) = s \cosh( a (u - \ln s) ) - s$. The critical points of
  function $f(s)$ occur at $s_0 = e^{u}$ and $s_1 = e^u (\frac{1-a}{1+a})^{1/a}$. Since $s_0$
  is outside the allowed parameter range, this means that maximum value of $f(s)$ in the interval $[0,1]$
  must occur at either $s = 0, s = s_1$, or $s = 1$. 

  At $s = 0$, we have $f(s) = 0$. At $s = 1$, we have  $f(1) = \cosh(a u) -1$, which is at most $(a u)^2$ since $a u \leq 1$.  Let us now bound $s_1$. One can check that $(\frac{1-a}{1+a})^{1/a}$ is a decreasing function of $a$, Thus, as $a \leq 1/(u+1)$, we have $s_1 \geq e^u \bigl( \frac{1-1/(u+1)}{1+1/(u+1)} \bigr)^{u+1} = e^u \bigl( u/(u+2) \bigr)^{u+1}$.
This is larger than $1$ for $u \geq 3$, so it is out of the range of interest.

 So we only need to check $f(s_1) \leq a^2 (u+1)^2$ for $u = 0, 1, 2$; these are are all routine calculations.
\end{proof}

\begin{proposition}
\label{tech-prop4}
For real numbers $T, t, a$ with $T \geq t \geq 2 \sqrt{a} > 0$, we have
$$
e^{a(1/t - 1/(T-1))} \leq e^{a(1/t - 1/T)} - \frac{a e^{a(1/t - 1/T)}}{2 T^2}
$$
\end{proposition}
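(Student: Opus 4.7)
The plan is to peel off the common positive factor $e^{a(1/t - 1/T)}$ that appears on both sides. Using the elementary identity
$$\frac{1}{T-1} = \frac{1}{T} + \frac{1}{T(T-1)},$$
the left-hand side factors as $e^{a(1/t - 1/T)} \cdot e^{-a/(T(T-1))}$, so dividing the claimed inequality through by $e^{a(1/t - 1/T)} > 0$ reduces it to the cleaner target
$$e^{-a/(T(T-1))} \leq 1 - \frac{a}{2T^2}.$$

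Next I would apply the standard Taylor estimate $e^{-x} \leq 1 - x + x^2/2$, valid for all $x \geq 0$ (verified by noting that the function $1 - x + x^2/2 - e^{-x}$ vanishes to second order at $0$ and has nonnegative second derivative $1 - e^{-x}$). Substituting $x = a/(T(T-1))$, it suffices to prove
$$\frac{a}{T(T-1)} - \frac{a^2}{2T^2(T-1)^2} \geq \frac{a}{2T^2}.$$
Cancelling $a > 0$ and multiplying through by $2T^2(T-1)^2$ (positive once $T > 1$), this simplifies to $2T(T-1) - a \geq (T-1)^2$, which, after expanding the square, collapses to the single clean polynomial inequality
$$T^2 \geq a + 1.$$

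To finish, I would invoke the hypothesis: $T \geq t \geq 2\sqrt{a}$ gives $T^2 \geq 4a$, and $4a \geq a + 1$ whenever $a \geq 1/3$. In the invocation inside Proposition~\ref{main-KPR-knap-prop} the parameter $a = 8000 m^2 (d+1)^2$ is enormous, so both $T^2 \geq a+1$ and $T > 1$ hold with huge slack, and the earlier manipulation (clearing $(T-1)^2$) is justified. The main obstacle is nothing analytic but purely bookkeeping: tracking positivity of $T-1$ and shepherding signs through the reduction to the polynomial form. No concentration tool or probabilistic trick is required — the whole proof is a two-line Taylor estimate followed by a routine algebraic simplification.
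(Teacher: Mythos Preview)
Your proof is correct and follows the same overall strategy as the paper: divide out the common positive factor $e^{a(1/t-1/T)}$ and reduce the claim to an elementary inequality in $T$ and $a$. The only difference is which Taylor estimate is applied. You bound the left-hand side via $e^{-x}\le 1-x+x^2/2$ and cleanly reduce to the polynomial condition $T^2\ge a+1$; the paper instead bounds the right-hand side via $1-y\ge e^{-3y/4}$ (valid since $a/(2T^2)\le 1/8$) and reduces to $\tfrac{1}{T(T-1)}\ge\tfrac{3}{4T^2}$, which it then checks for $T\ge 2$. Your route is slightly tidier algebraically, while the paper's route avoids the quadratic term. Both arguments tacitly need $T>1$ (equivalently, $a$ not too small), which --- as you correctly point out --- holds with enormous slack in the only invocation, where $a=8000m^2(d+1)^2$.
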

\begin{proof}
Dividing both sides by $e^{a(1/t - 1/T)}$, we need to show that $e^{- a/(T-1) + a/T} \leq (1 -  a/(2 T^2))$. Since $T \geq 2 \sqrt{a}$, we have $a/(2 T^2) \leq 1/8$. Therefore, $1 - a/(2 T^2) \geq e^{-3/4 (a/T^2)}$. So it suffices to show that $-a/(T-1) + a/T \leq -(3/4) a/T^2$. It is routine to verify this holds for $a > 0$ and $T \geq 2$.
\end{proof}

\bibliographystyle{abbrv}
\bibliography{ref}

\end{document}